\title{Hennessy-Milner Theorems via Galois Connections}
\author{Harsh Beohar}{University of Sheffield, United Kingdom}{}{}{}
\author{Sebastian Gurke}{Universität Duisburg-Essen, Germany}{}{}{}
\author{Barbara K\"onig}{Universität Duisburg-Essen, Germany}{}{}{}
\author{Karla Messing}{Universität Duisburg-Essen, Germany}{}{}{}
\authorrunning{H. Beohar and S. Gurke and B. K\"onig and K. Messing}
\keywords{behavioural equivalences and metrics, modal logics, Galois
  connections}
\newcommand{\power}{\mathcal P} 
\newcommand{\pmet}{\mathit{PMet}}
\newcommand{\dpmet}{\mathit{DPMet}}
\DeclareMathOperator{\traces}{Tr} 
\DeclareMathOperator{\nxt}{\bigcirc} 
\DeclareMathOperator{\lab}{lab} 
\DeclareMathOperator{\ter}{tgt} 
\DeclareMathOperator{\refuset}{\mathrm{Ref}}
\newcommand{\refuse}[1]{\refuset (#1)}
\DeclareMathOperator{\readyt}{\mathrm{Ready}}
\newcommand{\ready}[1]{\readyt (#1)}
\DeclareMathOperator{\cl}{\mathrm{cl}}
\newcommand{\eqnum}{\refstepcounter{equation}\textup{\tagform@{\theequation}}}
\DeclareMathOperator{\lo}{log}
\DeclareMathOperator{\be}{beh}
\DeclareMathOperator{\lob}{\log_b}
\DeclareMathOperator{\beb}{beh_b}
\DeclareMathOperator{\lot}{\log_t}
\DeclareMathOperator{\bet}{beh_t}
\DeclareMathOperator{\los}{\log_s}
\DeclareMathOperator{\bes}{beh_s}
\DeclareMathOperator{\loB}{\log_B}
\DeclareMathOperator{\beB}{beh_B}
\DeclareMathOperator{\loS}{\log_S}
\DeclareMathOperator{\beS}{beh_S}
\DeclareMathOperator{\loT}{\log_T}
\DeclareMathOperator{\beT}{beh_T}
\DeclareMathOperator{\alphab}{\alpha_b}
\DeclareMathOperator{\gammab}{\gamma_b}
\DeclareMathOperator{\alphat}{\alpha_t}
\DeclareMathOperator{\gammat}{\gamma_t}
\DeclareMathOperator{\alphas}{\alpha_s}
\DeclareMathOperator{\gammas}{\gamma_s}
\DeclareMathOperator{\alphaB}{\alpha_B}
\DeclareMathOperator{\gammaB}{\gamma_B}
\DeclareMathOperator{\alphaS}{\alpha_S}
\DeclareMathOperator{\gammaS}{\gamma_S}
\DeclareMathOperator{\alphaT}{\alpha_T}
\DeclareMathOperator{\gammaT}{\gamma_T}
\DeclareMathOperator{\alphap}{\alpha_p}
\DeclareMathOperator{\gammap}{\gamma_p}
\DeclareMathOperator{\alphae}{\alpha_e}
\DeclareMathOperator{\gammae}{\gamma_e}
\DeclareMathOperator{\cb}{c_b}
\DeclareMathOperator{\ct}{c_t}
\DeclareMathOperator{\cs}{c_s}
\DeclareMathOperator{\cB}{c_B}
\DeclareMathOperator{\cS}{c_S}
\DeclareMathOperator{\cT}{c_T}
\newcommand{\PlayerOne}{D}
\newcommand{\PlayerTwo}{M}
\begin{document}
\maketitle

\begin{abstract}
  We introduce a general and compositional, yet simple, framework that
  allows to derive soundness and expressiveness results for modal
  logics characterizing behavioural equivalences or metrics (also
  known as Hennessy-Milner theorems). It is based on Galois
  connections between sets of (real-valued) predicates on the one hand
  and equivalence relations/metrics on the other hand and covers a
  part of the linear-time-branching-time spectrum, both for the
  qualitative case (behavioural equivalences) and the quantitative
  case (behavioural metrics). We derive behaviour functions from a
  given logic and give a condition, called compatibility, that
  characterizes under which conditions a logically induced
  equivalence/metric is induced by a fixpoint equation. In particular,
  this framework allows to derive a new fixpoint characterization of
  directed trace metrics.
\end{abstract}

\section{Introduction}
\label{sec:introduction}

In the verification of state-based transition systems, modal logics
play a central role: they can be used to specify the properties that a
system must satisfy and model-checking techniques allow to verify
whether this is in fact the case. Modal logics also play a fundamental
role in characterizing behavioural equivalences: van Glabbeek in his
seminal paper \cite{g:linear-branching-time} showed how a whole
spectrum of behavioural equivalences and preorders can be
characterized via modal logics. This characterization is also known as
the Hennessy-Milner theorem \cite{hm:hm-logic}, which says that two
states $x,y$ are equivalent (wrt.\thinspace to some notion of
behavioural equivalence) iff they satisfy the same formulas $\phi$ (of
a given modal logic). Formally,
$x \sim y \iff \forall \phi\colon (x\models \phi \iff y \models
\phi)$.

For quantitative systems, the notion of behavioural equivalence is
often too strict and small deviations in quantitative
information, such as probabilities, can cause two states that
intuitively behave very much alike to be inequivalent in a formal
sense.  Hence it is natural to consider various metrics for determining at what behavioural distance two states lie
\cite{dgjp:metrics-labelled-markov,bw:behavioural-pseudometric}.  This yields an extension of classical notions of behavioural equivalence which knows only distance $0$ (two states behave the same) and
distance $1$ (two states behave differently). Such metrics have often
been studied in probabilistic settings
\cite{dgjp:metrics-labelled-markov}, but
they can be studied in other quantitative contexts, for instance metric transition systems
\cite{afs:linear-branching-metrics,fl:quantitative-spectrum-journal}.

In the quantitative case, equivalences are replaced by pseudo-metrics
and evaluation of a formula $\phi$ results in a real-valued (as
opposed to a boolean-valued) function $\llbracket\phi\rrbracket$,
intuitively indicating to which degree a state satisfies a
formula. Stated in this context the Hennessy-Milner property says that
$d(x,y) = \bigvee_{\phi} |\llbracket \phi\rrbracket(x) -\llbracket
\phi\rrbracket(y)|$, where $d$ is the behavioural metric.

We present a general framework that allows to easily deduce the
Hennessy-Milner property for a variety of equivalences, preorders
and (directed) metrics in the qualitative and quantitative setting. We
rely on a well-known property
\cite{bkp:abstraction-up-to-games-fixpoint,cc:systematic-analysis,cc:temporal-abstract-interpretation}
for Galois connections that says under which conditions left adjoints
preserve least fixpoints. Such Galois connections relate the logical
with the behavioural universe and translate sets of (real-valued)
predicates to equivalences (metrics) and vice versa. Our \emph{first} contribution is the identification of adjunctions both in quantitative/qualitative settings, which are crucial in capturing bisimilarity and (decorated) trace versions of equivalences/preorders/metrics.

While most contributions to this area start with a behavioural
equivalence (resp.\thinspace metric) and define a corresponding characteristic
logic, our approach goes in the other direction, with the slogan:
\emph{``Derive behaviour functions from a modal logic''}.
The recipe, which is our \emph{second} contribution, 
is as follows: we define a logic function living
in the logical universe and check that it is compatible with the
closure induced by the Galois connection. Compatibility
ensures that the Hennessy-Milner property is satisfied when we transfer 
the logic function into a behaviour function living in the
behavioural universe. More concretely, we can guarantee
that the least fixpoint of the
logic function (the set of all formulas) induces an equivalence
(resp.\thinspace metric) which is the least fixpoint of the behaviour
function. 
 Note that in the 
qualitative case, the Galois
  connection is contravariant, resulting in behavioural equivalence
  being the greatest fixpoint, as usual. 

Related ideas have been considered in more categorical settings
\cite{k:coalg-process-equivalence,kr:logics-coinductive-predicates},
here we demonstrate that this can be done in a purely
lattice-theoretical setup and in particular for behavioural
metrics. To our knowledge, the adjunctions that we are
considering here, have not yet been used to derive Hennessy-Milner
theorems and behaviour functions. Our \emph{third} contribution is the novel connection to
up-to functions and compatibility and we show how closure properties
for up-to functions can be employed to combine logics, leading to a
modular framework. Furthermore, the behaviour function that we obtain
for the trace metric case is, as far as we know, not yet known in the
literature. Our \emph{final} contribution is the characterisation of
these behaviour functions in more concrete terms both in the
qualitative (Theorem~\ref{thm:decoratedtraceequiv} and
Corollary~\ref{cor:decorated-trace}) and quantitative
(Theorem~\ref{thm:behaviour-trace-metric} and
Corollary~\ref{cor:LogicForGeneralTraceEquivalences}) cases. In
turn, these general results effortlessly instantiate into many of the
equivalences in the van Glabbeek spectrum and immediately yield:
logical characterizations, the hierarchy between them and also
recursive characterizations, which are often hard to obtain (at least
in the metric case).

The present paper is the full version of
\cite{bgkm:hennessy-milner-galois}.  Proofs and further material can
be found in the appendix.

\section{Preliminaries}
\label{sec:preliminaries}

\subsection*{Functions and Relations}
Given a function $f\colon X\to Y$ and $Z\subseteq X$ we write $f[Z]$
for $\{f(z)\mid z\in Z\}$. Similarly, for a relation
$R\subseteq X\times X$ and $X'\subseteq X$, we define
$R[X'] = \{y\in X\mid \exists x\in X'\colon (x,y)\in
R\}$. Furthermore, $Y^X$ denotes the set of all functions from $X$ to
$Y$ and, for a given
set $\mathcal{F}\subseteq Y^X$ of functions, by
$\langle \mathcal{F}\rangle$ we denote a function of type
$X \to Y^{\mathcal{F}}$ defined as
$\langle \mathcal{F}\rangle(x)(f) = f(x)$. For $S\subseteq X$,
$\chi_S\colon X\to\{0,1\}$ stands for the characteristic function of $S$.

A \emph{congruence} is an equivalence relation
$R\subseteq \power(X)\times \power(X)$ satisfying:
$\bigcup_{i\in I} X_i\,R\,\bigcup_{i\in I} Y_i$ whenever $X_i\,R\,Y_i$
for all $i\in I$.  Given any relation
$R\subseteq \power(X)\times \power(X)$, by $\mathit{cong}(R)$ we
denote its \emph{congruence closure}, i.e., the smallest congruence
such that $R\subseteq \mathit{cong}(R)$.


The directed relation lifting
$R_{\overrightarrow{H}}\subseteq \power(X) \times \power(X)$ for a
relation $R\subseteq X\times X$ is defined as
$X_1 \mathrel{R_{\overrightarrow{H}}} X_2 \iff \forall x_1\in X_1\exists x_2\in X_2\colon
x_1 \mathrel R x_2$. Furthermore, we write
$R_H=R_{\overrightarrow{H}} \cap
(R_{\overrightarrow{H}})^{-1}$, which can be seen as a special case of
the Hausdorff distance (see below).
\subsection*{Pseudo-metrics}
We use truncated addition and subtraction on the interval $[0,1]$,
i.e., for $r,s\in[0,1]$ we have $r\oplus s = \min\{r+s,1\}$,
$r\ominus s = \max \{0,r-s\}$.

A \emph{directed pseudo-metric} or \emph{hemimetric} on a set $X$ is a function $d\colon X\times X\to [0,1]$ such that for all $x,y,z\in X$ (i)
$d(x,x) = 0$, (ii) $d(x,z) \le d(x,y) \oplus d(y,z)$. It is called a
\emph{pseudo-metric} if in addition (iii) $d(x,y) = d(y,x)$ for all
$x,y\in X$. Whenever $d(x,y) = 0$ implies $x=y$ we drop the prefix
``pseudo-'' and call $d$ a \emph{metric}. Given a directed pseudo-metric $d$ on $X$, $\overline{d}$ refers to the
\emph{symmetrization} of $d$, i.e., $\overline{d}(x,y) = \max\{d(x,y),
d(y,x)\}$, for every $x,y\in X$. Some examples of metrics
used in this paper are the following:
\begin{itemize}
\item The discrete metric $d_{\text{disc}}$ on a set $A$ is $d_{\text{disc}}(a,b)=1$ if $a\neq b$ and $0$ otherwise.
\item The \emph{Euclidean} distance $d$ on the interval $[0,1]$ given
  by $d(r,r') = |r-r'|$.
  \item The \emph{sup-metric} $d$ on $[0,1]^I$ is given by $d(p,p') = \sup_{i\in I} |p(i) - p'(i)|$.
  \item The product of two (pseudo)metric spaces $(X,d_X)$ and
    $(Y,d_Y)$ is a (pseudo)metric space $(X\times Y,d_X\otimes d_Y)$,
    where $(d_X\otimes d_Y)((x,y),(x',y')) =\max\{d_X(x,x'), d_Y(y,y')\}$.
  \item The directed Hausdorff lifting $d_{\overrightarrow{H}}$ of a
    pseudo-metric space $(X,d)$ is a directed pseudo-metric on the power set $\power(X)$
    given by
    $d_{\overrightarrow{H}}(U,V)=\sup_{x\in U}\inf_{y\in
      V}d(x,y)$. Intuitively, the Hausdorff distance between two sets
    is the farthest that any element of
    one set has to ``travel'' to reach the other set.

    It can equivalently be characterized as the infimum
    $d_{\overrightarrow{H}}(U, V) = \bigwedge \{\varepsilon\in[0,1]
    \mid U\subseteq V_\varepsilon\}$, where
    $V_\varepsilon = \{x\in X \mid \bigwedge_{v\in V} d(x,v) \leq
    \varepsilon\}$. This means that we are looking for the least
    $\varepsilon$ such that $U$ is included in the union of all
    $\varepsilon$-balls around elements of $V$.

    Moreover, the Hausdorff lifting $d_H$ of a pseudo-metric $d$ is
    the symmetrization of $d_{\overrightarrow{H}}$.
\end{itemize}
Given a directed pseudo-metric
$d\colon X\times X\to [0,1]$, a function $f\colon X\to [0,1]$ is
called \emph{non-expansive wrt.\thinspace $d$} whenever for all $x,y\in X$:
$f(x) \ominus f(y) \le d(x,y)$.

\begin{toappendix}

  \begin{lemma}
    \label{lem:inequalities-inf-sup}
    Let $I$ be an index set and let $a,b,a_i,b_i\in [0,1]$, where
    $i\in I$. Then the following (in)equalities hold. They are all
    equalities if $I\neq\emptyset$.
    \begin{align*}
      \Big( \bigvee_{i\in I} a_i \Big) \ominus b = \bigvee_{i\in I}
      \Big( a_i \ominus b \Big) \qquad
      \Big( \bigwedge_{i\in I} a_i \Big) \ominus b \le \bigwedge_{i\in I}
      \Big( a_i \ominus b \Big) \\
      a \ominus \Big( \bigvee_{i\in I} b_i \Big) \le \bigwedge_{i\in I}
      \Big( a \ominus b_i \Big) \qquad
      a \ominus \Big( \bigwedge_{i\in I} b_i \Big) = \bigvee_{i\in I}
      \Big( a \ominus b_i \Big)
    \end{align*}
    Furthermore
    \begin{align*}
      \Big( \bigvee_{i\in I} a_i \Big) \ominus \Big( \bigvee_{i\in I}
      b_i \Big) \le \bigvee_{i\in I} \Big( a_i \ominus b_i \Big)
      \qquad \Big( \bigwedge_{i\in I} a_i \Big) \ominus \Big(
      \bigwedge_{i\in I} b_i \Big) \le \bigvee_{i\in I} \Big( a_i
      \ominus b_i \Big)
    \end{align*}
  \end{lemma}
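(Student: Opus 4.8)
The plan is to reduce all six (in)equalities to elementary monotonicity and continuity properties of the truncated operations and then chain them. First I would record the two basic facts about $r\ominus s=\max\{0,r-s\}$ on $[0,1]$: for fixed $b$ the map $x\mapsto x\ominus b$ is monotone non-decreasing, while for fixed $a$ the map $x\mapsto a\ominus x$ is monotone non-increasing; moreover both maps, as well as $x\mapsto x\oplus c$, are $1$-Lipschitz and hence continuous, being compositions of the $1$-Lipschitz maps $\max\{0,-\}$, $\min\{-,1\}$ and a shift. A convenient alternative handle for the join-preservation is the residuation $a\ominus b\le c\iff a\le b\oplus c$, which I would verify in one line from the definitions (using $a,c\le 1$).

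Next I would dispatch the one-variable (in)equalities. The two inequalities $(\bigwedge_i a_i)\ominus b\le\bigwedge_i(a_i\ominus b)$ and $a\ominus(\bigvee_i b_i)\le\bigwedge_i(a\ominus b_i)$ are immediate from monotonicity resp.\ antitonicity: from $\bigwedge_i a_i\le a_j$ (resp.\ $\bigvee_i b_i\ge b_j$) one gets the corresponding bound for every $j$ and takes the meet. The two equalities $(\bigvee_i a_i)\ominus b=\bigvee_i(a_i\ominus b)$ and $a\ominus(\bigwedge_i b_i)=\bigvee_i(a\ominus b_i)$ I would establish with one direction again from monotonicity and the reverse from the residuation: for the first, $a_i\ominus b\le c:=\bigvee_j(a_j\ominus b)$ gives $a_i\le b\oplus c$ for all $i$, hence $\bigvee_i a_i\le b\oplus c$, i.e.\ $(\bigvee_i a_i)\ominus b\le c$; the fourth is the order-dual, using that $x\mapsto x\oplus c$ preserves non-empty meets. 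To upgrade the first two inequalities to equalities when $I\neq\emptyset$ I would invoke continuity: choosing a sequence from $\{a_i\}$ (resp.\ $\{b_i\}$) approaching the infimum (resp.\ supremum) and passing the continuous map $x\mapsto x\ominus b$ (resp.\ $x\mapsto a\ominus x$) through the limit yields the missing inequality. This continuity step — showing that the clamping at $0$ and $1$ does not obstruct commutation with non-empty extrema — is the only real content of the proof; monotonicity alone yields just one direction.

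Finally the two combined inequalities follow by chaining the one-variable results. For $(\bigvee_i a_i)\ominus(\bigvee_j b_j)\le\bigvee_i(a_i\ominus b_i)$ I would first pull the join out of the first argument using the first equality (with the constant $\bigvee_j b_j$), obtaining $\bigvee_i\big(a_i\ominus\bigvee_j b_j\big)$, and then use antitonicity in the second argument ($b_i\le\bigvee_j b_j$) termwise. Symmetrically, for $(\bigwedge_i a_i)\ominus(\bigwedge_j b_j)\le\bigvee_j(a_j\ominus b_j)$ I would pull the meet out of the second argument using the fourth equality (with the constant $\bigwedge_i a_i$), turning it into $\bigvee_j\big(\bigwedge_i a_i\ominus b_j\big)$, and then apply monotonicity in the first argument ($\bigwedge_i a_i\le a_j$) termwise.

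The remaining bookkeeping is the empty-index case, which I would treat separately at the end: here $\bigvee_{\emptyset}=0$ and $\bigwedge_{\emptyset}=1$, so each statement collapses to an instance of $0\le 0$, $a\le 1$, $1\ominus b\le 1$, or $0=0$; this is exactly where the second and third statements become strict, explaining why equality is only claimed for $I\neq\emptyset$. I expect no hidden difficulty beyond the continuity argument highlighted above and the care needed to keep the direction of each monotonicity step consistent with the truncation.
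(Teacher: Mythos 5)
Your proof is correct, but it reaches the two ``Furthermore'' inequalities by a genuinely different route than the paper. The paper dismisses the first four (in)equalities as straightforward and then attacks the fifth directly via residuation: it observes $\bigvee_i(a_i\ominus b_i)\oplus\bigvee_i b_i \ge \bigvee_i\big((a_i\ominus b_i)\oplus b_i\big)\ge\bigvee_i a_i$ and transposes with $x\le y\oplus z\iff x\ominus z\le y$; the sixth is then reduced to the fifth by the complement duality $x\mapsto 1-x$, using $(1-b_i)\ominus(1-a_i)=a_i\ominus b_i$. You instead treat the one-variable statements as the primitive results (proving them in full, including the residuation step the paper leaves implicit) and obtain the last two by chaining: the fifth from the first equality (with constant second argument $\bigvee_j b_j$) followed by termwise antitonicity, the sixth from the fourth equality followed by termwise monotonicity. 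Your decomposition is more modular and symmetric --- the fifth and sixth get parallel one-line derivations and no complement trick is needed --- whereas the paper's residuation argument is self-contained and does not depend on having first established the one-variable equalities. You also supply the only genuinely delicate point, namely upgrading the second and third inequalities to equalities when $I\neq\emptyset$, which the paper folds into ``straightforward''; your limit/continuity argument works, though a purely order-theoretic case analysis (distinguishing whether some $a_j\le b$, resp.\ all $a_i>b$) would avoid invoking topology on $[0,1]$. Your treatment of the empty-index case is likewise correct and consistent with the lemma's claim that only the first four statements become equalities for $I\neq\emptyset$.
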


  \begin{proof}
    The first four (in)qualities are straightforward.

    For the fifth inequality observe that
    \begin{align*}
      \bigvee_{i\in I} \Big( a_i \ominus b_i \Big) \oplus \Big(
      \bigvee_{i\in I} b_i \Big) &\ge
      \bigvee_{i\in I} \Big( (a_i \ominus b_i) \oplus b_i \Big)
      \ge \bigvee_{i\in I} a_i
    \end{align*}
    Since $x\le y\oplus z \iff x\ominus z\le y$, the statement
    follows.

    For the sixth inequality we use the fifth inequality and compute
    \begin{align*}
      \Big( \bigwedge_{i\in I} a_i \Big) \ominus \Big( \bigwedge_{i\in
        I} b_i \Big) &= \Big( 1-\bigvee_{i\in I} (1-a_i) \Big) \ominus
      \Big(
      1-\bigvee_{i\in I} (1-b_i) \Big) \\
      &= \Big( \bigvee_{i\in I} (1-b_i) \Big) \ominus \Big(
      \bigvee_{i\in I} (1-a_i) \Big) \\
      &\le \bigvee_{i\in I} \big( (1-b_i) \ominus \big(
      (1-a_i) \big)
      = \bigvee_{i\in I} \Big( a_i \ominus b_i \Big)
    \end{align*}
  \end{proof}
\end{toappendix}

\subsection*{Lattices, Fixpoints and Galois Connections}
A \emph{complete lattice} $(\mathbb{L},\sqsubseteq)$ consists of a set
$\mathbb{L}$ with a partial order $\sqsubseteq$ such that each
$Y\subseteq \mathbb{L}$ has a least upper bound $\bigsqcup Y$ (also
called supremum, join) and a greatest lower bound $\bigsqcap Y$ (also
called infimum, meet). In particular, $\mathbb{L}$ has a bottom element $\bot=\bigsqcap\mathbb{L}$ and a top
element $\top=\bigsqcup\mathbb{L}$. Whenever the order is clear from
the context, we simply write $\mathbb{L}$ for a complete
lattice. For example:
\begin{itemize}
  \item $([0,1],\le)$ has a lattice structure with infimum $\bigwedge$
    and supremum $\bigvee$.
  \item The set $\mathit{Eq}(X)$ ($\mathit{Pre}(X)$) of equivalences (preorders) on $X$ is a lattice with $\bigsqcup=\bigcap$ and the join $\bigsqcup \mathcal R$ is the least equivalence (resp.\thinspace preorder) generated by $\bigcup \mathcal R$.
  \item The set $\pmet(X)$ ($\dpmet(X)$) of (directed)
    pseudo-metrics is lattice-ordered by $\le$.
\end{itemize}

Via the Knaster-Tarski theorem it is well-known that any monotone
function $f\colon\mathbb{L}\to\mathbb{L}$ on a complete lattice
$\mathbb{L}$ has a \emph{least fixpoint} $\mu f$ and a \emph{greatest
  fixpoint} $\nu f$.

Let $\mathbb{L}$, $\mathbb{B}$ be two lattices.  A \emph{Galois
  connection} from $\mathbb{L}$ to $\mathbb{B}$ is a pair
$\alpha\dashv \gamma$ of monotone functions
$\alpha\colon \mathbb{L}\to\mathbb{B}$,
$\gamma\colon \mathbb{B}\to\mathbb{L}$ such that for all $\ell\in L$:
$\ell\sqsubseteq \gamma(\alpha(\ell))$ and for all $m\in \mathbb B$:
$\alpha(\gamma(m)) \sqsubseteq m$. 
Equivalently, $\alpha(\ell)\sqsubseteq m \iff \ell\sqsubseteq \gamma(m)$, for all $\ell\in\mathbb{L}, m\in \mathbb{B}$. The
function $\alpha$ (resp.\thinspace $\gamma$) is also called the \emph{left} (resp.\thinspace \emph{right}) \emph{adjoint} and it preserves
arbitrary joins (meets). 

For an arbitrary function $f$, we define $f^\omega$ as
$f^\omega(x)=\bigsqcup_{i\in\mathbb{N}} f^i(x)$.
Given a function $f\colon X\to [0,1]$, the function
$\tilde{f}\colon \mathcal{P}(X) \to [0,1]$ denotes the join-preserving
function generated by $f$ and is defined as
$\tilde{f}(X') = \bigvee_{x\in X'} f(x)$ (for $X'\subseteq X$).


\subsection*{Closures}

A \emph{closure} $c$ is a monotone, idempotent and extensive (i.e. $x
\sqsubseteq c(x)$  for all $x$) function on a
lattice. Given a Galois connection $\alpha\dashv\gamma$, the map
$\gamma\circ\alpha$ is always a closure.

Given a set $Z$, a family $\mathcal{O}$ of operators on $Z$ (of
arbitrary, possibly infinite, arity) and a subset $Z' \subseteq Z$, we
denote by $\cl^\mathcal{O}(Z')$ the least superset of $Z'$ that is
closed under all the operators from $\mathcal{O}$. The set
$\mathcal{O}$ will sometimes be left implicit in favour of a more
suggestive notation. For instance, given a set
$\mathcal{S}\subseteq \mathcal{P}(X)$, $\cl^\cup(\mathcal{S})$ closes
$\mathcal{S}$ under arbitrary unions and
$\cl^{\cup,\cap}(\mathcal{S})$ under arbitrary unions and
intersections. On the other hand $\cl_f^\mathcal{O}$ closes only under
operators in $\mathcal{O}$ of finite arity (such as finite unions or
intersections). Clearly, $\cl^\mathcal{O}$ and $\cl^\mathcal{O}_f$ are
closures in the above sense.

A special case is the shift, where, given a set
$\mathcal{F}\subseteq [0,1]^X$, $\cl^\mathrm{sh}(\mathcal{F})$ is the
closure under constant shifts, i.e., operations $f\mapsto f\ominus c$,
$f\mapsto f\oplus c$ for $c\in[0,1]$.

We end this subsection by a technical result which is needed to show that our `logic' function (cf.\thinspace Section~\ref{sec:general-framework}) is continuous.
\begin{lemmarep}
  \label{lem:ClosureCont}
  Let $(F_i\subseteq Z)_{i\in\mathbb N}$ be an increasing family of
  sets, i.e., $F_i \subseteq F_{i+1}$ for every $i\in\mathbb N$. If
  the set $\mathcal O$ (of operators on $Z$) contains operators of only finite arity, then $\cl^\mathcal{O}(\bigcup_{i\in\mathbb N} F_i)=\bigcup_{i\in\mathbb N} \cl^\mathcal{O}(F_i)$.
\end{lemmarep}
\begin{proof}
By the monotonicity of $\cl^\mathcal{O}$ it is clear that the right hand side is contained in the left hand side. Also, from the extensiveness of $\cl^\mathcal{O}$ we get that $\bigcup_{i\in\mathbb N} F_i \subseteq \bigcup_{i\in\mathbb N} \cl^\mathcal{O}(F_i)$. Since the left hand side $\cl^\mathcal{O}(\bigcup_{i\in\mathbb N} F_i)$ is by definition the smallest set which contains $\bigcup_{i\in\mathbb N} F_i$ and is closed under the operations from $\mathcal{O}$, it suffices to show that $\bigcup_{i\in\mathbb N} \cl^\mathcal{O}(F_i)$ is also closed under all those operations. To this end, let $f\in\mathcal{O}$ be an operator of some finite arity $k$ and let $x_1,\dots,x_k\in\bigcup_{i\in\mathbb N} \cl^\mathcal{O}(F_i)$. Since the sequence $(F_i)$ is increasing (and $\cl^\mathcal{O}$ is monotone) there is some index $i_0\in\mathbb{N}$ with $x_1,\dots,x_k\in\cl^\mathcal{O}(F_{i_0})$. Consequently, $f(x_1,\dots,x_k)\in\cl^\mathcal{O}(\cl^\mathcal{O}(F_{i_0})) = \cl^\mathcal{O}(F_{i_0}) \subseteq\bigcup_{i\in\mathbb N} \cl^\mathcal{O}(F_i)$ by the idempotence of $\cl^\mathcal{O}$.
\end{proof}

\subsection*{Transition Systems}
We will restrict to systems of the following kind in this paper.

\begin{definition}[(Metric) Transition Systems]
  A \emph{transition system} over an alphabet $A$ is a pair $(X, \to)$
  consisting of a state space $X$ and a transition relation
  $\to\ \subseteq\,X\times A \times X$. We write
  $x \xrightarrow{a} x'$ for $(x, a, x')\in\,\to$. For $x\in X$,
  $\delta(x) = \{(a,x')\mid x\stackrel{a}{\to} x'\}$ and $\delta_a(x)$
  denotes the $a$-successors of $x$. A transition system is
  \emph{finitely branching} if $\delta(x)$ is finite for every
  $x$.

  For a set $\Delta \subseteq A \times X$ 
  we denote by $\lab(\Delta)$ the set of labels of $\Delta$, in other
  words the projection to the first argument, i.e.
  $\lab(\Delta) = \{a \mid \exists x\in X\colon (a, x)
  \in\Delta\}$. Similarly $\ter(\Delta)$ is the set of targets and
  projects to the second argument.

  A \emph{metric transition system} over $A$ is a triple
  $(X, \to, d_A)$ with a metric $d_A\colon A\times A\to [0,1]$.
\end{definition}

\begin{definition}[Traces]
  For $x\in X,\sigma = a_1\cdots a_n\in A^*$, we write
  $x\xrightarrow{\sigma} x'$ if
  $x\stackrel{a_1}{\to}\cdots\stackrel{a_n}{\to} x'$ and define
  $\traces(x) = \{\sigma\mid \exists x'\colon x \xrightarrow{\sigma} x'\}$. We extend $\delta,\delta_a$ to sequences $\hat\delta,\hat\delta_\sigma$ in the obvious way.

  Given a metric transition system, the distance of two traces is
  defined as $d_{\mathrm{Tr}}\colon A^*\times A^*\to [0,1]$ where
  $d_{\mathrm{Tr}}(\sigma_1,\sigma_2) = 1$ if
  $|\sigma_1|\neq |\sigma_2|$,
  $d_{\mathrm{Tr}}(\varepsilon,\varepsilon) = 0$ and
  $d_{\mathrm{Tr}}(a_1 \sigma'_1, a_2 \sigma'_2) = \max\{
  d_A(a_1,a_2), d_{\mathrm{Tr}}(\sigma'_1,\sigma'_2) \}$ (sup-metric).
\end{definition}

\section{General Framework}
\label{sec:general-framework}
Our results are based on the following theorem that shows how
fixpoints are preserved by Galois connections, a well-known property,
see for
instance~\cite{bkp:abstraction-up-to-games-fixpoint,cc:systematic-analysis,cc:temporal-abstract-interpretation}.

We first introduce the notion of compatibility that has been studied
in connection with up-to techniques, enhancing coinductive proofs
\cite{p:complete-lattices-up-to}.

\begin{definition}
  Let $\lo,c\colon \mathbb{L}\to\mathbb{L}$ be two monotone
  endo-functions on a lattice $\mathbb{L}$. We call $\lo$
  $c$-compatible whenever $\lo\circ c \sqsubseteq c\circ \lo$.
\end{definition}

\begin{theoremrep}
  \label{thm:fixpoint-preservation}
  Let $\mathbb{L}, \mathbb{B}$ be two complete lattices with a Galois
  connection $\alpha\colon \mathbb{L} \to \mathbb{B}$,
  $\gamma\colon \mathbb{B} \to \mathbb{L}$ and two monotone
  endo-functions $\lo\colon \mathbb{L}\to\mathbb{L}$,
  $\be\colon \mathbb{B}\to\mathbb{B}$.
  \begin{enumerate}
  \item Then $\alpha\circ \lo = \be \circ \alpha$ implies
    $\alpha(\mu\,\lo) = \mu\,\be$.
  \item Let $c = \gamma\circ \alpha$ be the closure operator
    corresponding to the Galois connection and assume that
    $\be = \alpha\circ \lo\circ \gamma$. Then $c$-compatibility of
    $\lo$ implies $\alpha(\mu\,\lo) = \mu\,\be$.
  \item Whenever $\alpha\circ \lo = \be \circ \alpha$ and $\lo$
    reaches its fixpoint in $\omega$ steps, i.e.,
    $\mu \lo = \lo^\omega(\bot)$, so does $\be$.
  \end{enumerate}
\end{theoremrep}
\begin{proof}
  \begin{enumerate}
  \item We use the fact that $\mu\,\be = \be^i(\bot)$ for a suitable
    ordinal $i$.  We show by transfinite induction that
    $\alpha(\lo^i(\bot)) = \be^i(\bot)$ for all ordinals $i$.

    For successor ordinals we have
    $\alpha(\lo^{i+1}(\bot)) = \alpha(\lo(\lo^i(\bot))) =
    \be(\alpha(\lo^i(\bot))) = \be(\be^i(\bot)) =
    \be^{i+1}(\bot)$.

    In addition $\alpha$ is join preserving since it is part of a Galois connection. Hence for limit ordinals (including $i=0$):
    $\alpha(\lo^i(\bot)) = \alpha(\bigsqcup_{j<i} \lo^j(\bot)) =
    \bigsqcup_{j<i} \alpha(\lo^j(\bot)) = \bigsqcup_{j<i} \be^j(\bot) =
    \be^i(\bot)$.
  \item Easy corollary:
    $\alpha\circ \lo \sqsubseteq \alpha\circ \lo\circ \gamma\circ
    \alpha = \be\circ\alpha$. Furthermore
    $\lo\circ\gamma\circ \alpha\sqsubseteq \gamma\circ\alpha\circ \lo$
    implies
    $\be\circ \alpha = \alpha\circ
    \lo\circ\gamma\circ\alpha\sqsubseteq \alpha\circ\lo$ by
    transpose. Hence $\alpha\circ \lo = \be\circ \alpha$ and the first
    result applies.
  \item Straightforward. 
  \end{enumerate}
\end{proof}

Here $\mathbb{L}$ is the universe in which the logic lives and
$\mathbb{B}$ is the universe in which equivalences respectively
metrics live. Furthermore $\lo$ is the ``logic function'',
constructing modal logic formulas, and $\mu\,\lo$ will be the set of
all formulas. On the other hand, $\be$ is the ``behaviour function''
whose least (respectively greatest) fixpoint is the behavioural
metric (equivalence).

\begin{remark}
  Note that the above theorem is true even in more general situations,
  for example if $\mathbb{L}$ and $\mathbb{B}$ are only assumed to be
  complete partial orders. We however stick to complete lattices since
  they are more widely known. Also, on a complete lattice many notions
  of continuity, such as Scott-continuity or chain-continuity,
  coincide \cite{MR398913}. In the following we will therefore simply
  say that a monotone function on $\mathbb{L}$ or $\mathbb{B}$ is
  continuous if it preserves suprema of all (well-ordered) chains.
\end{remark}

The recipe used in this paper is the following: first, define a
logical universe $\mathbb{L}$ and a logic function
$\lo\colon\mathbb{L}\to \mathbb{L}$. Then choose a suitable Galois
connection $\alpha\dashv \gamma$ to a behaviour universe $\mathbb{B}$
and show that $\lo$ is $c$-compatible, where $c = \gamma\circ \alpha$
is the closure associated to the Galois connection. Then derive the
behaviour function
$\be = \alpha\circ\lo\circ\gamma\colon \mathbb{B}\to\mathbb{B}$ and
from the results above, we automatically obtain the equality
$\alpha(\mu\,\lo) = \mu \,\be$, which tells us that logical and
behavioural equivalence respectively distance coincide
(Hennessy-Milner theorem). This will be worked out in the following
examples.

Combining logic functions results in the combination of the
corresponding behaviour functions, which is essential in establishing Hennessy-Milner theorems compositionally.

\begin{propositionrep}
  \label{prop:compositionality}
  Let $i\in\{1,2\}$ and $\lo_i, c\colon \mathbb{L}\to\mathbb{L}$ be
  monotone functions on a complete lattice $\mathbb{L}$ such that
  $\lo_i$ are $c$-compatible. Then $\lo_1\sqcup \lo_2$ and $\lo_1\circ\lo_2$ are also $c$-compatible.

  Let $\be_i =
  \alpha\circ\lo_i\circ\gamma$ be the behaviour functions corresponding to
  $\lo_i$. Then the behaviour functions of
  $\lo_1\sqcup \lo_2$ and $\lo_1\circ\lo_2$
  are, respectively, $\be_1\sqcup\be_2$ and $\be_1\circ\be_2$.
\end{propositionrep}
\begin{proof}
  The $c$-compatibility of $\lo_1\sqcup \lo_2, \lo_1\circ\lo_2,
  \lo^\omega$ follows from~\cite{p:complete-lattices-up-to}.

  The behaviour function of $\lo_1\sqcup \lo_2$ is
  \[ \alpha\circ (\lo_1\sqcup \lo_2)\circ \gamma = \alpha\circ
    (\lo_1\circ\gamma \sqcup \lo_2\circ \gamma) = \alpha\circ
    \lo_1\circ\gamma \sqcup \alpha\circ\lo_2\circ \gamma =
    \be_1\sqcup \be_2 \]
  since $\alpha$, as left adjoint, preserves joins.

  The behaviour function of $\lo_1\circ \lo_2$ is
  \begin{eqnarray*}
    \alpha\circ \lo_1\circ \lo_2\circ \gamma & \sqsubseteq & \alpha
    \circ\lo_1\circ \gamma\circ\alpha\circ \lo_2\circ\gamma =
    \be_1\circ \be_2 \\
    \be_1\circ \be_2 & = & \alpha \circ\lo_1\circ \gamma\circ\alpha\circ
    \lo_2\circ\gamma = \alpha \circ\lo_1\circ c \circ
    \lo_2\circ\gamma \\
    & \sqsubseteq & \alpha\circ c\circ\lo_1 \circ
    \lo_2\circ\gamma = \alpha\circ \lo_1\circ \lo_2\circ \gamma
  \end{eqnarray*}
  since $\alpha\circ c = \alpha\circ\gamma\circ\alpha = \alpha$.
\end{proof}

Furthermore every constant function $k$ and the identity are
$c$-compatible. Their corresponding behaviour functions are the
constant function $b\mapsto \alpha(\ell)$ (where
$\ell$ is the constant value of $k$) respectively the co-closure
$\alpha\circ\gamma$.

We are using techniques for the construction of up-to functions
studied in~\cite{p:complete-lattices-up-to}, but we are using them in
a non-standard way. The point is subtle since the closure is usually
supposed to be the up-to function, while in our notion of
compatibility the logic function plays this role. Furthermore we are
interested in least fixpoints, while the results of
\cite{p:complete-lattices-up-to} consider post-fixpoints up-to in
order to show that a lattice element is below the greatest fixpoint.

We end this section by characterising the compatibility property when
the closure $c$ is induced by an adjoint situation
$\alpha\dashv\gamma$ (as in
Theorem~\ref{thm:fixpoint-preservation}). This result is in turn used to
relate with the notion of approximating family of predicates
\cite{kkkrh:expressivity-quantitative-modal-logics} in
Section~\ref{sec:conclusion}.
\begin{lemmarep}
    \label{lem:approximating-family}
    Let $\alpha\dashv\gamma$ be a Galois connection between lattices
    $\mathbb{L}$, $\mathbb{B}$ (with $c = \gamma\circ\alpha$) and let
    $\lo\colon\mathbb{L}\to\mathbb{L}$ be a monotone
    function. Furthermore let $\ell\in\mathbb{L}$. Then $\lo(c(\ell))
    \sqsubseteq c(\lo(\ell))$ iff
    \[ \forall \ell'\in\mathbb{L} \colon \big(
      \alpha(\ell')\sqsubseteq \alpha(\ell) \implies
      \alpha(\lo(\ell'))\sqsubseteq \alpha(\lo(\ell)) \big).\]
\end{lemmarep}
  \begin{proof}
    We first assume that $\lo(c(\ell)) \sqsubseteq c(\lo(\ell))$. Let
    $\ell'\in\mathbb{L}$ and assume that
    $\alpha(\ell')\sqsubseteq \alpha(\ell)$, which is equivalent to
    $\ell'\sqsubseteq \gamma(\alpha(\ell)) = c(\ell)$ due to the
    properties of Galois connections. Hence, due to compatibility:
    \[ \lo(\ell')\sqsubseteq \lo(c(\ell)) \sqsubseteq
      c(\lo(\ell)) = \gamma(\alpha(\lo(\ell)), \] which implies
    $\alpha(\lo(\ell')) \sqsubseteq \alpha(\lo(\ell))$.

    If, on the other hand, assume that for all $\ell'\in\mathbb{L}$,
    $\alpha(\ell')\sqsubseteq \alpha(\ell)$ implies
    $\alpha(\lo(\ell'))\sqsubseteq \alpha(\lo(\ell))$. We know that
    $\alpha(c(\ell)) = \alpha(\gamma(\alpha(\ell))) \sqsubseteq
    \alpha(\ell)$ due to the properties of a Galois connection (we
    even have equality). Setting $\ell' = c(\ell)$ this implies that:
    \[ \alpha(\lo(c(\ell))) \sqsubseteq \alpha(\lo(\ell)) \] Again, we
    can use the adjoint property and obtain
    \[ \lo(c(\ell)) \sqsubseteq \gamma(\alpha(\lo(\ell))) =
      c(\lo(\ell)). \]
  \end{proof}

\section{Qualitative Case}
\label{sec:qualitative}

We will start with the classical, qualitative case with behavioural
equivalences on the one side and boolean-valued modal logics on the
other side. In this way we will recreate parts of the theory of
\cite{g:linear-branching-time}, incorporating it into the setting of
adjunctions as described earlier. Throughout this section we fix a transition system $(X, \to)$ over $A$.

\subsection{Bisimilarity}
\label{sec:bisimilarity}

For bisimilarity we work with the lattices
$\mathbb{L} = (\power (\power(X)),\subseteq)$ and
$\mathbb{B} = (\mathit{Eq}(X),\supseteq)$. The Galois connection is
given as follows, where $[x]_R$ is the equivalence class of $x$ wrt.\thinspace
$R$:
\begin{eqnarray*}
  \alphab(\mathcal{S}) & = & \{(x,x')\in X\times X \mid \forall S\in
  \mathcal{S}\colon (x\in S\iff x'\in S)\} \\
  \gammab(R) & = & \{ S\subseteq X \mid \forall (x,x')\in R\colon
  (x\in S\iff x'\in S) \}
  = \left\{\bigcup \{ [x]_R\mid x\in S\} \mid S\subseteq X \right\}.
\end{eqnarray*}
Intuitively $\alphab$ generates an equivalence on $X$ from a set of
subsets of $X$ and $\gammab$ maps an equivalence to all subsets of $X$
that are closed under this equivalence.  Both functions are
monotone and it is easy to see from the definition that it is indeed a
Galois connection (see also Proposition~\ref{prop:bisim-closure}
below). As logic function we consider
$\lob\colon \power(\power(X))\to \power(\power(X))$ with
$\lob(\mathcal{S}) = \bigcup_{a\in A}
\Diamond_a[\cl^{\cup,\lnot}_f(\mathcal{S})]$, where
$\cl^{\cup,\lnot}_f$ closes $\mathcal{S}$ under finite
unions and complement (hence also finite intersections). Moreover,
$\Diamond_a(S) = \{x\in X \mid \exists x'\in S:\ x \xrightarrow{a}
x'\}$ for $a\in A$.

The set $\mu\,\lob$ of subsets of $X$ is obtained by evaluating modal logic formulas
consisting of constants $\mathit{true}, \mathit{false}$ (empty
conjunction/disjunction), binary conjunctions/disjunctions, negation
and diamond modality, where the outermost operator is always the
modality. Note that $\mu\lob$ is a strict subset of the usual modal logic formulas, but sufficient for expressivity.

\begin{remark}
  The continuity of $\lob$ deserves some attention. Note that the size
  of $A$ (be it finite or infinite) has no effect on the continuity of $\lob$. Rather it follows from Lemma~\ref{lem:ClosureCont} and the fact the direct image of a function preserves arbitrary unions. As this argument remains unchanged in other contexts (e.g.\thinspace simulation preorders and (bi)simulation metrics), we will henceforth tacitly state that our logic functions in the sequel are continuous.
\end{remark}
We first study the closure associated to the Galois connection, which
is important for showing compatibility later on, and the corresponding
co-closure.
\begin{propositionrep}
  \label{prop:bisim-closure}
  The closure $\cb = \gammab\circ\alphab$ closes a set $\mathcal S\subseteq \power (X)$
  under arbitrary boolean operations (union, intersection,
  complement), while the co-closure $\alphab\circ\gammab$ is the
  identity.
\end{propositionrep}

\begin{proof}
  \mbox{}

  \begin{itemize}
  \item First, it is easy to see from the definition that, given
    $\mathcal{S}$, $\gammab(\alphab(\mathcal{S}))$ contains
    $\mathcal{S}$ and is closed under arbitrary unions, intersections
    and complement. In particular, if $S_i\in\gammab(R)$ for $i\in I$,
    then $\bigcup_{i\in I} S_i\in\gammab(R)$ and
    $\bigcap_{i\in I} S_i\in\gammab(R)$. Furthermore $S\in \gammab(R)$
    implies $\bar{S} = X\backslash S \in \gammab(R)$.

    We want to show that
    $\gammab(\alphab(\mathcal{S})) =
    \cl^{\lor,\lnot}(\mathcal{S})$. In fact, it is sufficient to show
    $\gammab(\alphab(\mathcal{S})) = \mathcal{S}$ for every set
    $\mathcal{S}$ with $\cl^{\lor,\lnot}(\mathcal{S}) = \mathcal{S}$,
    since then
    $\gammab(\alphab(\mathcal{S}')) = \cl^{\lor,\lnot}(\mathcal{S}')$
    for arbitrary $\mathcal{S}'$. In more detail:
    $\cl^{\lor,\lnot}(\mathcal{S}') \subseteq
    \gammab(\alphab(\mathcal{S}'))$ follows from the first paragraph
    of the proof. The other inequality holds since
    $\gammab(\alphab(\mathcal{S}')) \subseteq
    \gammab(\alphab(\cl^{\lor,\lnot}(\mathcal{S}'))) =
    \cl^{\lor,\lnot}(\mathcal{S}')$. Here we use that the closure is
    idempotent.

    Let hence $\mathcal{S}= \cl^{\lor,\lnot}(\mathcal{S})$ and we
    assume that
    $\mathcal{S} = \{S_i\subseteq X\mid i\in \mathcal{I} \}$. Now let
    $T\in \gammab(\alphab(\mathcal{S}))$. Define the family of indices corresponding to sets of $\mathcal{S}$ contained in $T$:
    \[ \mathcal{I}'=\{ j\in \mathcal{I}\mid S_j \subseteq T \}. \]
    Then $T = \bigcup_{j\in\mathcal{I}'} S_j$: The inclusion
    $T\supseteq \bigcup_{j\in\mathcal{I}'}S_j$ holds trivially. To
    show $T\subseteq \bigcup_{j\in\mathcal{I}'}S_j$, consider an
    element $t\in T$ and assume that
    $t\not \in \bigcup_{j\in \mathcal{I}'} S_j$.  For this $t$, define
    \[ \mathcal{I}^t = \{i\in\mathcal{I}\mid t\in S_i\}. \] It cannot
    be empty, since $\mathcal{S}$, which is closed under arbitrary
    boolean operators, including the empty intersection, always
    contains $X$ itself. Furthermore,
    $\mathcal{I}'\cap \mathcal{I}^t = \emptyset$. Define the set
    $\bigcap_{j\in\mathcal{I}^t} S_j$, which is contained in
    $\mathcal{S}$, since $\mathcal{S}$ is closed under intersection,
    and hence is of the form $S_k$ for some index $k\in\mathcal{I}$. Then we have
    $t\in S_k\in \mathcal{S}$. This means that
    $k\not \in \mathcal{I}'$ and hence $S_k\not\subseteq T$, which in
    turn implies the existence of $x\in S_k\setminus T$.

    We now show that $(t,x)\in\alphab(\mathcal{S})$: consider an
    arbitrary $S_j \in \mathcal{S}$. If $t\in S_j$, then
    $j\in\mathcal{I}^t$ and so $x\in S_k\subseteq S_j$. On the other
    hand, if there is $x\in S_j$ but $t\not\in S_j$, then $t$ must be
    in the complement of $S_j$, which is also in $\mathcal{S}$, and
    will be called $S_{\overline{\jmath}}$ for an index
    $\overline{\jmath} \in \mathcal{I}$. It follows that
    $\overline{\jmath} \in \mathcal{J}^t$ which again implies
    $x\in S_k \subseteq S_{\overline{\jmath}}$, which is a
    contradiction to the assumption that $x\in S_j$. This means that
    for a $t\in T$ and $x\not \in T$, $(t,x)\in \alphab(\mathcal{S})$
    and hence $T\not \in \gammab(\alphab(\mathcal{S}))$, which
    contradicts the assumption.
  \item Regarding the co-closure, $R\subseteq \alphab(\gammab(R))$
    follows directly from the definition. In particular if
    $(x_1,x_2)\in R$, then also $(x_1,x_2)\in\alphab(\gammab(R))$,
    since each set in $\gammab(R)$ contains $x_1$ iff it contains
    $x_2$.

    For the other inclusion, we prove the contraposition and let
    $(x_1,x_2)\not\in R$ and let $S = [x_1]_R$ be the equivalence
    class of $x_1$, which does not contain $x_2$. Clearly
    $S\in\gammab(R)$ and hence $S$ separates $x_1,x_2$, which means
    that $(x_1,x_2)\not\in \alphab(\gammab(R))$, as required.
  \end{itemize}
\end{proof}

The next step is to show that the logic function is indeed
$\cb$-compatible, so that we can invoke
Theorem~\ref{thm:fixpoint-preservation}. Not being compatible basically
means that the closure $\cb$ introduces operators that clash with
logical equivalence. For the proof of Proposition~\ref{prop:compatible-bisimilarity} we require the fact that the
transition system is finitely branching. We first need the following lemma:

\begin{lemmarep}
  \label{lem:DistributivityOfDiamond}
  Let $(X, \to)$ be a finitely branching transition system and $(X_i\subseteq X)_{i\in\mathcal{I}}$
  be a sequence of sets of states. Then, for $a\in A$, we have
  $
  \Diamond_a\Big( \bigcap_{i\in\mathcal{I}} X_i \Big) =
    \bigcap_{\substack{\mathcal{I}_0 \subseteq \mathcal{I}\\
        \mathcal{I}_0 \text{ finite}}} \Diamond_a\Big(
    \bigcap_{i\in\mathcal{I}_0} X_i\Big).
  $
\end{lemmarep}

\begin{proof}
  We are going to argue that if the transition system is $\kappa$-branching, then
 \[\Diamond_a\Big( \bigcap_{i\in\mathcal{I}} X_i \Big) =    \bigcap_{\substack{\mathcal{I}_0 \subseteq \mathcal{I} \\
        |\mathcal{I}_0|<\kappa}} \Diamond_a\Big(
 \bigcap_{i\in\mathcal{I}_0} X_i\Big).\]
  The left-hand side is clearly contained in the right-hand side. For
  the other direction assume that $x\in X$ is not an element of the
  left-hand side. Let $\{x_i \mid i<\lambda\}$ with $\lambda<\kappa$
  be the set of $a$-successors of $x$. By assumption for every
  $j<\lambda$ there is an index $i_j\in\mathcal{I}$ with
  $x_j \notin X_{i_j}$. Put $\mathcal{I}_0 = \{i_j \mid j<\lambda\}$,
  then $|\mathcal{I}_0|\le\lambda<\kappa$ and
  $x\notin\Diamond_a \bigcap_{i\in\mathcal{I}_0} X_i$.
\end{proof}

\begin{propositionrep}
  \label{prop:compatible-bisimilarity}
  For finitely branching transition systems, $\lob$ is $\cb$-compatible.
\end{propositionrep}

\begin{proof}
  We first show that, given $\mathcal{S}\subseteq \mathcal{P}(X)$ and
  $a\in A$,
  $\Diamond_a[\cb(\mathcal{S})] \subseteq
  \cb(\Diamond_a[\cl^{\cup,\lnot}_f(\mathcal{S})])$: let
  $S\in \Diamond_a[\cb(\mathcal{S})]$, i.e., $S = \Diamond_a T$ where
  $T\in \cb(\mathcal{S})$, i.e., $T$ can be written in disjunctive
  normal form as
  $T = \bigcup_{i\in \mathcal{I}} \bigcap_{j\in \mathcal{J}} S_{ij}$,
  where either $S_{ij}\in \mathcal{S}$ or
  $\overline{S_{ij}}\in \mathcal{S}$, i.e., $S_{ij}$ itself or its
  complement is contained in $\mathcal{S}$.

  Then, according to Lemma~\ref{lem:DistributivityOfDiamond}, we have
  that
  \[
    S = \Diamond_a T = \bigcup_{i\in \mathcal{I}} \Diamond_a
    \bigcap_{j\in \mathcal{J}} S_{ij} = \bigcup_{i\in \mathcal{I}}
    \bigcap_{\substack{\mathcal{J}_0\subseteq
        \mathcal{J}\\\mathcal{J}_0\text{ finite}}} \Diamond_a
    \bigcap_{j\in \mathcal{J}_0} S_{ij}
  \]
  and hence $S\in \cb(\Diamond_a[\cl^{\cup,\lnot}_f(\mathcal{S})])$.

  Furthermore, since $\cl^{\cup,\lnot}_f\circ \cb = \cb$:
  \begin{align*}
    \lob(\cb(\mathcal{S})) &= \bigcup_{a\in A}
    \Diamond_a[\cl^{\cup,\lnot}_f(\cb(\mathcal{S}))] = \bigcup_{a\in A}
    \Diamond_a[\cb(\mathcal{S})] \subseteq
    \bigcup_{a\in A} \cb(\Diamond_a[\cl^{\cup,\lnot}_f(\mathcal{S})]) \\
    &\subseteq \cb(\bigcup_{a\in A}
    \Diamond_a[\cl^{\cup,\lnot}_f(\mathcal{S})]) = \cb(\lob(\mathcal{S}))
  \end{align*}
\end{proof}

This theorem would straightforwardly generalize to the case where the
set of $a$-successors is finite for each $a$ in the qualitative case,
but not directly in the quantitative case which we treat later. Hence,
in this paper, we require the transition system to be finitely
branching for branching equivalences/metrics, a requirement that is
unnecessary in the trace case.

As a result we can derive the behaviour function from the logic function via the Galois connection. Not surprisingly, this behaviour function is in fact the well-known function whose greatest fixpoint (remember the
contravariance) is bisimilarity.

\begin{propositionrep}
  The behaviour function $\beb$ can be
  characterized as: $x_1 \mathrel{\beb (R)} x_2$ iff
  \[
    \forall a\in A, y_1\in\delta_a(x_1)\,\exists {y_2\in\delta_a(x_2)}
    \colon y_1 \mathrel R y_2 \land \forall a\in A,
    y_2\in\delta_a(x_2)\,\exists {y_1\in\delta_a(x_1)} \colon y_1 \mathrel R
    y_2.
  \]
  In particular this means that $(x_1,x_2)\in \alphab(\mu\lob) = \mu\,\beb$ iff $x_1,x_2$ are bisimilar.
\end{propositionrep}
\begin{proof}
  Let $R$ be an equivalence on $X$ and observe that
  \begin{align*}
    \alphab(\Diamond_a[\gammab(R)]) &= \{(x_1,x_2)\mid \forall
    S\in\gammab(R)\colon (x_1\in \Diamond_a S
    \iff x_2\in\Diamond_a S) \} \\
    &= \{(x_1,x_2)\mid \delta_a(x_1)\cap S \neq \emptyset
    \iff  \delta_a(x_2)\cap S \neq \emptyset
    \text{ if $S$ is closed under $R$} \} \\
    &= \{(x_1,x_2)\mid \delta_a(x_1)\cap E \neq \emptyset
    \iff  \delta_a(x_2)\cap E \neq \emptyset \\
    & \qquad\qquad\qquad
    \text{ if $E$ is an equivalence class of $R$} \} \\
    &= \{(x_1,x_2) \mid
    \forall y_1\in \delta_a(x_1) ~\exists y_2\in
    \delta_a(x_2)\colon (y_1,y_2)\in R \mathop{\land} \\
    & \qquad \qquad \qquad \forall y_2\in
    \delta_a(x_2) ~\exists y_1\in \delta_a(x_1)\colon
    (y_1,y_2)\in R \}
  \end{align*}
  The second equality stems from the fact that $x\in\Diamond_a(S)$ iff
  $\delta_a(x)\cap S\neq\emptyset$ and the fact that $\gamma_b(R)$
  contains all subsets of $X$ that are closed under $R$, i.e., $x\in
  S$ and $x\,R\,y$ implies $y\in S$.

  Since $\cl^{\cup,\lnot}_f\circ\gammab = \gammab$, we have:
  \begin{align*}
    \alphab(\lob(\gammab(R))) &= \alphab(\bigcup_{a\in A}
    \Diamond_a[\cl^{\cup,\lnot}_f(\gammab(R))]) = \bigcap_{a\in A}
    \alphab(\Diamond_a[\cl^{\cup,\lnot}_f(\gammab(R))]) \\
    &= \bigcap_{a\in A}
    \alphab(\Diamond_a[\gammab(R)]),
  \end{align*}
  as defined above.
\end{proof}

It is well known that the behaviour function $\beb$ for bisimilarity
is continuous if the underlying transition system is finitely branching.
\subsection{Simulation Preorders}
\label{sec:preorders}

In this section we show that not only equivalences, but also
behavioural preorders can be integrated into our framework. Our logical and behavioural universes are given by the lattices
$\mathbb{L} = (\power (\power(X)),\subseteq)$ and
$\mathbb{B} = (\mathit{Pre}(X),\supseteq)$. The Galois
connection is given as follows:
\begin{eqnarray*}
  \alphas(\mathcal{S}) & = & \{(x_1, x_2) \mid \forall
  S\in\mathcal{S}\colon
  (x_1\in S \Rightarrow x_2\in S)\}\\
  \gammas(R) & = & \{S\subseteq X \mid \forall s\in S\colon R[\{s\}]
  \subseteq S\}.
\end{eqnarray*}
In other words,
$\alphas(\mathcal{S})[x] = \bigcap \{S\in\mathcal{S} \mid x\in
S\}$. As logic function we consider
$\los\colon \power(\power(X))\to \power(\power(X))$ with
$\los(\mathcal{S}) = \bigcup_{a\in A}
\Diamond_a[\cl^\cap_f(\mathcal{S})]$, where $\cl^\cap_f$ closes a
family of sets $\mathcal{S}$ under finite intersections. Hence the
corresponding logic may use $\Diamond_a$ ($a\in A$), conjunction and
\textit{true} (the empty intersection), where we again consider only
formulas where the outermost operator is a modality. The logic
function $\los$ is continuous and $\mu\,\los$ contains all sets that
are obtained from evaluating such formulas.

As desired, the closure induced by the Galois connection closes under
union and intersection, but \emph{not} under negation, an operation
that should be disallowed in a logic characterizing simulation. The
co-closure is instead the identity on preorders, as in
Section~\ref{sec:bisimilarity}.

\begin{propositionrep}
  The closure $\cs = \gammas\circ\alphas$ closes a family of subsets of $X$ under arbitrary unions and intersections. Moreover, the co-closure $\alphas\circ\gammas$ is the identity on
  $\mathit{Pre}(X)$.
\end{propositionrep}
\begin{proof}
  It is easy to see that $\mathcal{S} \subseteq \cs(\mathcal{S})$ and
  that $\cs(\mathcal{S})$ is closed under arbitrary unions and
  intersections.

  Since $\cs$ is a closure operation, it thus suffices to show the
  claim under the assumption that $\mathcal{S}$ is already closed
  under unions (see the proof of
  Proposition~\ref{prop:bisim-closure}). Assume that
  $\mathcal{S} = \{S_i\subseteq X\mid i\in \mathcal{I}\}$ for some
  index set $\mathcal{I}$. Now let
  $T\in\gammas(\alphas(\mathcal{S}))$ and define
  \[ \mathcal{I}' = \{i\in \mathcal{I}\mid S_i\subseteq T \}. \]
  Clearly $\bigcup_{i\in \mathcal{I}'} S_i \subseteq T$. We want to
  show that $\bigcup_{i\in \mathcal{I}'} S_i = T$ and assume by
  contradiction that there exists $t\in T$ with
  $t\not\in \bigcup_{i\in \mathcal{I}'} S_i$. We let
  \[ \mathcal{I}^t = \{ i\in \mathcal{I} \mid t\in S_i \}. \]
  Since $\mathcal{S}$ is closed under arbitrary intersections, it
  always contains $X$ and hence $\mathcal{I}^t$ is non-empty.

  Since $\mathcal{S}$ is closed under intersection,
  $\bigcap_{j\in \mathcal{I}^t} S_j$ is also in $\mathcal{S}$ and is
  of the form $S_k$ for some index $k$. Clearly $t\in S_k$. This means
  that $k\not\in\mathcal{I}'$ and hence $S_k\not\subseteq T$, which
  implies the existence of $x\in S_k\backslash T$.

  We can show that $(t,x)\in \alphat(\mathcal{S})$: consider an
  arbitrary $S_j\in \mathcal{S}$. If $t\in S_j$, then $j\in
  \mathcal{I}^t$ and so $S_k\subseteq S_j$, which implies $x\in S_j$.

  This means that for a $t\in T$ and $x\not\in T$,
  $(t,x)\in\alphat(\mathcal{S})$ and hence $T\not\in
  \gammat(\alphat(\mathcal{S}))$, which is a contradiction.

  For co-closure let $R\in\mathit{Pre}(X)$. It is clear that
  $\alphas\circ\gammas(R)$ is a preorder with
  $R\subseteq\alphas\circ\gammas(R)$. Let $Q$ be any other preorder
  with $R\subseteq Q$, we show that
  $\alphas\circ\gammas(R)\subseteq Q$. Assume that
  $(x, y)\in\alphas\circ\gammas(R)$, that means
  $y\in\bigcap\{S\in\gammas(R) \mid x\in S\}$. Consider the set
  $S := \{z\in X \mid (x, z)\in Q\}$. Then $x\in S$, since $Q$ is
  reflexive and we claim that also $S\in\gammas(R)$. To this end let
  $s\in S$ and $(s, t)\in R$, then also $(s, t)\in Q$ and since $Q$ is
  transitive $t\in S$. This shows that $R[\{s\}]\subseteq S$, so indeed
  $S\in\gammas(R)$. It follows that $y\in S$ and therefore
  $(x, y)\in Q$.
\end{proof}

\noindent We show that $\los$ is
$\cs$-compatible and subsequently state the main result of this section.
\begin{propositionrep}
  For finitely branching transition systems, $\los$ is $\cs$-compatible.
\end{propositionrep}
\begin{proof}
  Let $S\in\los\circ \cs(\mathcal{S})$, then
  $S = \Diamond_a(\bigcap_{i\in\mathcal{I}} T_i)$ with
  $T_i\in \cs(\mathcal{S})$ for all $i\in\mathcal{I}$. Since
  $\cs(\mathcal{S})$ is closed under intersections, we have
  $T := \bigcap_{i\in\mathcal{I}} T_i\in \cs(\mathcal{S})$. By our
  characterization of the closure, the set $T$ can be written as a
  union of intersections of sets from $\mathcal{S}$, say
  $T = \bigcup_{j\in\mathcal{J}} \bigcap_{i\in\mathcal{I}_j} S_{j,i}$
  with $S_{j,i}\in\mathcal{S}$ for all $j, i$. Then
\[S = \Diamond_a(T) = \bigcup_{j\in\mathcal{J}} \Diamond_a(\bigcap_{i\in\mathcal{I}_j} S_{j,i}) = \bigcup_{j\in\mathcal{J}} \bigcap_{\substack{\mathcal{I}_* \subseteq \mathcal{I}_j \\ \text{finite}}} \Diamond_a(\bigcap_{i\in\mathcal{I}_*} S_{j,i}) \in \cs\circ\los(\mathcal{S})\]
\end{proof}

\begin{theoremrep}
  The behaviour function $\bes$ can be characterized as follows:
  $x_1 \mathrel {\bes (R)} x_2$ iff
  $ \forall a\in A, y_1\in\delta_a(x_1)\ \exists {y_2\in\delta_a(x_2)}
  \colon y_1 \mathrel R y_2$, i.e.,
  $(x_1,x_2)\in\alphas(\mu\,\los) = \mu\,\bes$ iff $x_2$ simulates
  $x_1$. Moreover, for finitely branching transition systems, $\bes$ is continuous.
\end{theoremrep}

\begin{proof}
  Let $R\in\mathit{Pre}(X)$ and define
  $U(R) = \{(x_1,x_2) \mid \forall a\in A, y_1\in\delta_a(x_1)\
  \exists {y_2\in\delta_a(x_2)} \colon y_1 \mathrel R y_2 \}$.

  We want to show that $U(R) =
  \alphas\circ\los\circ\gammas(R)$. First assume that
  \[(x, y)\in \alphas\circ\los\circ\gammas(R) = \{(x, y)\in X \mid
    \forall T\in (\los\circ\gammas)(R)\ (x\in T \Rightarrow y\in
    T)\}\] We show that $(x, y)\in U(R)$. Let $x' \in \delta_a(x)$
  and note that $S := R[\{x'\}]\in\gammas(R)$ by the transitivity of
  $R$. Now define
  \[ T := \Diamond_a S = \{z\in X \mid \exists z'\in\delta_a(z)\colon
    (x',z')\in R \}\in\los\circ\gammas(R)\] Clearly $x\in T$ by the
  reflexivity of $R$ and therefore $y\in T$. That means, there is some
  $y'\in\delta_a(y)$ with $(x',y')\in R$. Hence $(x, y)\in U(R)$.

  Conversely assume that we have $(x, y)\in U(R)$. Let
  $T\in\los\circ\gammas(R)$ be arbitrary and assume that $x\in
  T$. Choose $S\in\gammas(R)$ with $T=\Diamond_a S$ and take $x'\in S$
  such that $x'\in\delta_a(x)$. Then there is some $y'\in\delta_a(y)$
  with $(x',y') \in R$. Also from $x'\in S\in\gammas(R)$ and the
  definition of $\gammas$, we obtain $R[\{x'\}]\subseteq S$. It follows
  that $y'\in R[\{x'\}]\subseteq S$. Thus $y\in T$.

  Lastly, to argue $\bes$ is continuous, assume the given transition
  system is finitely branching. Note that $\mathbb B = (\mathit{Pre}(X),\supseteq)$, so we need to show that for a descending sequence $(R_\beta)_{\beta<\kappa}$
  in $\mathit{Pre}(X)$ indexed by ordinals, equality holds between the following sets. By monotonicity we have
  \[\bes(\bigcap_{\beta<\kappa} R_\beta) \le \bigcap_{\beta<\kappa}
    \bes(R_\beta)\] For the other direction fix $x\in X$ and let
  $y\in \bigcap_{\beta<\kappa} \bes(R_\beta)[x]$. Assume that
  $x'\in\delta_a(x)$, then for all $\beta<\kappa$ there is some
  $y_\beta$, such that $y_\beta\in\delta_a(y)$ and $y_\beta\in R_n[x']$. If the
  transition system is finitely branching there is a single $y'$ such
  that $y_\beta = y'$ for unboundedly many $\beta<\kappa$ (due to the given descending sequence of preorders). It follows
  that $y'\in \bigcap_{\beta<\kappa} R_\beta[x']$. And therefore
  $y\in \bes(\bigcap_{\beta<\kappa} R_\beta)[x]$.
\end{proof}

\subsection{Trace Equivalence}
\label{sec:trace-equivalence}
We now follow the same storyline to set up a Galois connection and
framework for trace equivalence, which will later be enriched to decorated traces like complete/failure/ready traces \cite{g:linear-branching-time}. Note that we cannot use the Galois
connections from the previous sections, since in particular
$c$-compatibility would fail, due to the fact that negation and
conjunction have to be disallowed in a logic using the diamond
modality to characterize trace equivalence, while instead disjunction
is permitted. On the logic side we use the same lattice
$\mathbb{L} = (\power(\power(X)),\subseteq)$, however, the behaviour lattice
$\mathbb{B} = (\mathit{Eq}(\power(X)),\supseteq)$ is the set of all
equivalences over $\mathcal{P}(X)$ (instead of equivalences over
$X$). Choosing
powerset as a semantic domain seems natural due to determinization.
The corresponding Galois connection is given as follows:
\begin{align*}
  \alphat(\mathcal{S}) & =  \{(X_1,X_2)\in \power(X)\times
  \power(X) \mid \forall S\in
  \mathcal{S}\colon (X_1\cap S\neq\emptyset \iff X_2\cap S\neq\emptyset)\} \\
  \gammat(R) & =  \{ S \subseteq X \mid \forall (X_1,X_2)\in R\colon
  (X_1\cap S\neq\emptyset \iff X_2\cap S\neq\emptyset)\}.
\end{align*}
Now we consider $\lot\colon \power(\power(X))\to \power(\power(X))$
with
$\lot(\mathcal{S}) = \bigcup_{a\in A} \Diamond_a[\mathcal{S}]\cup
\{X\}$, which is again continuous. Then $\mu\lot$ represents a set of
subsets of $X$ obtained by evaluating modal logic formulas consisting
of the constant $\mathit{true}$ (which evaluates to $\{X\}$) and
iterated application of the diamond modalities.

\begin{toappendix}

  In order to study the Galois connection
  $\alphat\dashv\gammat$ and study its associated closures,
  we need the notion of the top element of an equivalence class.

\begin{definition}[Top element]
  Let $R\subseteq \mathcal{P}(X)\times \mathcal{P}(X)$ be a
  congruence. Given an equivalence class $E$ of $R$, the top element of
  this class is the union of all sets in the equivalence class, i.e., it
  equals $\bigcup_{S\in E} S$.
\end{definition}

We now show that in a congruence each equivalence class $E$ is closed
under union and hence the name top element is well chosen in the sense
that it is contained in $E$.

\begin{lemma}
  Consider a congruence relation
  $R\subseteq \mathcal{P}(X)\times \mathcal{P}(X)$ and a set
  $\{ X_i\subseteq X \mid i \in \mathcal{I}\} $ where $X_i \, R X_j$
  for all $i,j\in \mathcal{I}$. Then whenever
  $\mathcal{J}\subseteq \mathcal{I}$, $\bigcup_{i\in\mathcal{J}} X_i$
  is in the same equivalence class as the $X_i$, $i\in\mathcal{I}$. In
  particular this means that the top element of an equivalence class
  is also an element of this equivalence class.
\end{lemma}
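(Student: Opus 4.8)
The plan is to apply the defining closure property of a congruence to a cleverly chosen index family, reducing everything to comparison against a single fixed set. First I would pick an index $i_0 \in \mathcal{I}$ and observe that, since by hypothesis all the $X_i$ lie in one equivalence class, we have $X_j \mathrel{R} X_{i_0}$ for every $j \in \mathcal{J}$.

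The key step is to invoke the congruence condition, which states that $\bigcup_{i} A_i \mathrel{R} \bigcup_{i} B_i$ holds whenever $A_i \mathrel{R} B_i$ for all $i$ in a common index set. I would apply it with index set $\mathcal{J}$, taking $A_j = X_j$ and the \emph{constant} family $B_j = X_{i_0}$; this is legitimate precisely because $X_j \mathrel{R} X_{i_0}$ for each $j$. It yields $\bigcup_{j\in\mathcal{J}} X_j \mathrel{R} \bigcup_{j\in\mathcal{J}} X_{i_0}$. Provided $\mathcal{J} \neq \emptyset$, the right-hand union collapses to $X_{i_0}$, so $\bigcup_{j\in\mathcal{J}} X_j \mathrel{R} X_{i_0}$, which means the union lies in the same equivalence class as all the $X_i$, as desired.

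For the ``in particular'' claim I would instantiate $\mathcal{J} = \mathcal{I}$ with $\{X_i \mid i\in\mathcal{I}\}$ enumerating an entire equivalence class $E$. Since equivalence classes are nonempty, $\mathcal{I} \neq \emptyset$, and the argument above shows that the top element $\bigcup_{S\in E} S = \bigcup_{i\in\mathcal{I}} X_i$ belongs to $E$.

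I do not expect a genuine obstacle here: the entire content is the single application of the congruence axiom to a constant right-hand family, together with reflexivity and the collapse of a constant union. The only subtlety worth flagging is the nonemptiness of the index set, since the identity $\bigcup_{j\in\mathcal{J}} X_{i_0} = X_{i_0}$ requires $\mathcal{J} \neq \emptyset$ (otherwise the empty union is $\emptyset$, which need not lie in the class). This is harmless for the intended application, where $\mathcal{J} = \mathcal{I}$ indexes a nonempty equivalence class.
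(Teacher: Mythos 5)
Your proof is correct and follows essentially the same route as the paper's: both apply the congruence axiom to a constant family paired against the varying family $\{X_j\}_{j\in\mathcal{J}}$ (the paper puts the constant $X_k$ on the left, you put $X_{i_0}$ on the right, which is immaterial since $R$ is symmetric), then collapse the constant union. Your explicit flagging of the $\mathcal{J}\neq\emptyset$ requirement is a point the paper's proof silently assumes, so this is a slight improvement in rigor rather than a divergence.
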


\begin{proof}
  Let $\mathcal{J}\subseteq \mathcal{I}$ and fix an arbitrary
  $k\in\mathcal{I}$. For every $i\in\mathcal{J}$, $X_k \, R X_i$ and
  so, since $R$ is a congruence relation, the union of the left hand
  sides must be in relation to the union of the right hand sides:
  $\bigcup_{i\in\mathcal{J}}X_k \, R \bigcup_{i\in\mathcal{J}} X_i$.
  This is equivalent to $X_k \, R \bigcup_{i\in\mathcal{J}} X_i$.
  Since $X_k$ was an arbitrary element of the equivalence class that
  is in relation to all other elements of the class,
  $\bigcup_{i\in \mathcal{J}} X_i$ is also contained in this class. And
  since the above holds for $\mathcal{J}=\mathcal{I}$, the top element
  of an equivalence class, $\bigcup_{i\in\mathcal{I}} X_i$, is also
  in the class.
\end{proof}

\begin{lemma}
  \label{lem:topEl}
  Let $R$ be a congruence relation over a set $X$ and
  $X_1,X_2\subseteq X$.  Then $X_1\, R\, X_2 $ if and only if for every
  top element $S$ of an equivalence class,
  $X_1 \subseteq S \Leftrightarrow X_2 \subseteq S$.
\end{lemma}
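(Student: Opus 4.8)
The plan is to prove both implications directly, relying on the preceding lemma, which guarantees that the top element $S$ of an equivalence class $E$ is itself a member of $E$ (so $S = \bigcup_{T\in E} T$ lies in $E$), together with repeated use of the congruence property of $R$.

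For the forward direction I would assume $X_1 \mathrel{R} X_2$ and fix a top element $S$ of some class $E$. Suppose $X_1 \subseteq S$, so that $X_1 \cup S = S$. Combining $X_1 \mathrel{R} X_2$ with the reflexive instance $S \mathrel{R} S$ and applying the congruence property to these two pairs yields $X_1 \cup S \mathrel{R} X_2 \cup S$, i.e.\ $S \mathrel{R} X_2 \cup S$. Hence $X_2 \cup S$ lies in the class $E$ of $S$, and since $S$ is the union of all members of $E$ we get $X_2 \cup S \subseteq S$, whence $X_2 \subseteq S$. The converse implication $X_2 \subseteq S \Rightarrow X_1 \subseteq S$ follows by swapping the roles of $X_1$ and $X_2$, giving the desired equivalence $X_1 \subseteq S \Leftrightarrow X_2 \subseteq S$.

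For the backward direction, let $S_1$ and $S_2$ denote the top elements of the classes of $X_1$ and $X_2$ respectively. Since each $X_i$ lies in its own class we have $X_1 \subseteq S_1$ and $X_2 \subseteq S_2$; applying the hypothesis to the top elements $S_1$ and $S_2$ then gives $X_2 \subseteq S_1$ and $X_1 \subseteq S_2$. The key step is to upgrade these containments to $R$-relatedness via a congruence-union trick: from $X_1 \mathrel{R} S_1$ (both in the class of $X_1$) together with $S_2 \mathrel{R} S_2$, congruence gives $X_1 \cup S_2 \mathrel{R} S_1 \cup S_2$, and since $X_1 \subseteq S_2$ the left side is just $S_2$, so $S_2 \mathrel{R} S_1 \cup S_2$. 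Symmetrically, using $X_2 \mathrel{R} S_2$, $S_1 \mathrel{R} S_1$ and $X_2 \subseteq S_1$, we obtain $S_1 \mathrel{R} S_1 \cup S_2$. Transitivity then yields $S_1 \mathrel{R} S_2$, and combining with $X_1 \mathrel{R} S_1$ and $S_2 \mathrel{R} X_2$ we conclude $X_1 \mathrel{R} X_2$.

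The main obstacle is the backward direction: the hypothesis supplies only set-theoretic containments ($X_2 \subseteq S_1$ and $X_1 \subseteq S_2$), and these must be converted into statements about $R$ itself. This is precisely where the congruence property is indispensable, with the union $S_1 \cup S_2$ acting as a bridge that connects the two equivalence classes. The forward direction is comparatively routine once one observes that $X_1 \subseteq S$ makes $X_1$ absorbable into $S$ under union, after which the congruence property and the maximality of the top element do the rest.
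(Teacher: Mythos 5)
Your proof is correct and takes essentially the same approach as the paper's: both directions rest on the preceding lemma (that the top element belongs to its own class) together with the congruence-union trick, and your backward direction reproduces the paper's chain $X_1 \mathrel{R} S_1 = (S_1\cup X_2) \mathrel{R} (S_1\cup S_2) \mathrel{R} (X_1\cup S_2) = S_2 \mathrel{R} X_2$ almost verbatim. Your forward direction is in fact a slight streamlining -- a uniform direct argument (absorb $X_1$ into $S$, apply congruence to get $S \mathrel{R} S\cup X_2$, invoke maximality of the top element) where the paper splits into cases and argues by contradiction via the top element $T$ of $X_1$'s class -- but the underlying idea is identical.
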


\begin{proof}
  First we show that from $X_1 R X_2$, $X_1 \subseteq S \Leftrightarrow X_2 \subseteq S$ follows. Let $X_1, X_2\subseteq X$ be in relation wrt.\thinspace $R$, i.e.,
  $X_1\, R\, X_2$, and let $S$ be any top element. If $S$ and $X_1$
  and $X_2$ are in the same equivalence class in with regard to $R$,
  then
  $X_1 \subseteq S$ and $X_2 \subseteq S$ must both hold.

  Now let $S$ be the top element of an equivalence class that $X_1$
  and thus also $X_2$ do not belong to, and let $T$ be the top element
  of $X_1$ and $X_2$'s equivalence class. Without loss of generality
  assume that $X_1 \subseteq S$ and $X_2 \not \subseteq S$. This
  implies $S \cup X_1 = S$ and thus
  \begin{align*}
    S~ &R ~S\cup X_1 &&\text{($S=S\cup S_1$)} \\
    S\cup X_1 ~ &R ~ S\cup T &&\text{($X_1~ R~ T$)} \\
    S\cup T ~ &R ~ S\cup X_2  &&\text{($T~ R~ X_2$)}
  \end{align*}
  which means $S~ R~ S\cup X_2$ by transitivity. Since
  $X_2 \not \subseteq S$ and therefore $S\subset S\cup X_2$, this is a
  contradiction to the assumption that $S$ is the top element of this
  equivalence class.

  Furthermore we want to show that from
  $X_1 \subseteq S \Leftrightarrow X_2 \subseteq S$ for each top element
  $S$, we can conclude that $X_1\, R\, X_2$.
  To this end, let $S_1, S_2$ be the top elements of the equivalence
  classes of $X_1,X_2$, respectively and assume that
  $X_1 \subseteq S \Leftrightarrow X_2 \subseteq S$ holds for each top
  element $S$. Due to the definition of top elements, we have
  $X_i \subseteq S_i$, and from the assumption it follows
   $X_j \subseteq S_i$, for $\{i,j\}=\{1,2\}$. Then
  $X_1\, R\, S_1 = (S_1 \cup X_2)\, R\, (S_1 \cup S_2)\, R\, (X_1 \cup S_2) = S_2\, R\, X_2$.
\end{proof}

\end{toappendix}
\begin{propositionrep}\label{prop:closure-qual-trace}
  The closure $\ct = \gammat\circ\alphat$ closes a set of subsets of
  $X$ under arbitrary unions, while the co-closure $\alphat\circ \gammat$ maps an equivalence on $\power (X)$ to its congruence closure.
\end{propositionrep}

\begin{proof}
  \mbox{}

  \begin{itemize}
  \item First, we show that the closure of $\mathcal{S}$ under union
    is a subset of $\gammat(\alphat(\mathcal{S}))$. Let
    $\mathcal{S}= \{ S_i \subseteq X \mid i\in \mathcal{I} \}$ for
    some index set $\mathcal{I}$, and consider
    $\bigcup_{i\in \mathcal{J}} S_i$ for
    $\mathcal{J}\subseteq \mathcal{I}$. For
    $(X_1, X_2)\in \alphat(\mathcal{S})$, for all $S\in \mathcal{S}$,
    it holds
    $X_1\cap S\not=\emptyset \Leftrightarrow X_2\cap
    S\not=\emptyset$. Let $\{k,l\}=\{1,2\}$, then
   \begin{align*}
     X_k \cap \bigcup_{i\in\mathcal{J}} S_i
     = \bigcup_{i\in\mathcal{J}} (X_k\cap S_i) \not= \emptyset
     \Leftrightarrow\exists i^*\in\mathcal{J}:X_k\cap S_{i^*}\not=\emptyset \\
     \Rightarrow X_l\cap S_{i^*} \not= \emptyset
     \Rightarrow \bigcup_{i\in\mathcal{J}} (X_l\cap S_i)\not= \emptyset
     \Leftrightarrow X_l\cap\bigcup_{i\in\mathcal{J}} S_i\not= \emptyset
   \end{align*}
   This means that
   $X_1\cap \bigcup_{i\in\mathcal{J}} S_i\not=\emptyset
   \Leftrightarrow X_2\cap\bigcup_{i\in\mathcal{J}}
   S_i\not=\emptyset$.  So for all $\mathcal{J}\subseteq \mathcal{I}$,
   for all elements $\bigcup_{i\in\mathcal{J}} S_i$ of the closure
   under union of $\mathcal{S}$,
   $\bigcup_{i\in\mathcal{J}} S_i \in \gammat(\alphat(\mathcal{S}))$.

   \smallskip

   It is now sufficient to show that given
   $\mathcal{S} = \{ S_i \subseteq X \mid i\in \mathcal{I} \}$ as
   above, $\gammat(\alphat(\mathcal{S})) = \mathcal{S}$, whenever
   $\mathcal{S}$ is already closed under union (see proof of
   Proposition~\ref{prop:bisim-closure}).  Let
   $T\in\gammat(\alphat(\mathcal{S}))$ and define
   $\mathcal{I}' = \{ i \in \mathcal{I} \mid S_i
   \subseteq T\}$ and show that $\bigcup_{j\in\mathcal{I}'} S_j =T$,
   which implies $T\in \mathcal{S}$.

   The inclusion $\bigcup_{j\in\mathcal{I}'} S_j \subseteq T$ holds by
   definition. For the reverse, take an element $t\in T$ and assume by
   contradiction that $t\not\in \bigcup_{j\in\mathcal{I}'} S_j$.
   Define the index set
   $\mathcal{I}^t= \{ i\in \mathcal{I} \mid t \in S_i\}$. If this set
   were empty, there would be no index $i\in \mathcal{I}$ such that
   $t\in S_i$, which means $\{t\}$ would equivalent to $\emptyset$
   wrt.\thinspace the relation $\alphat(\mathcal{S})$.  Because
   $T\cap\{t\}\neq\emptyset$ but $T\cap \emptyset=\emptyset$ it would
   follow that $T\not\in \gammat(\alphat(\mathcal{S}))$, a
   contradiction. Hence $\mathcal{I}^t\not=\emptyset$.

   Since there is no index $j\in \mathcal{I}^t$ such that
   $S_j \subseteq T$, define a mapping
   $p: \mathcal{I}^t \rightarrow X$ where $p(j)\in S_j$ and
   $p(j)\not \in T$. Then take
   $X_1:= \{p(j) \mid j\in \mathcal{I}^t \}$ and $X_2:=X_1\cup
   \{t\}$. Note that $X_1 \subseteq X_2$, and so for every
   $i\in \mathcal{I}$, $S_i \cap X_1 \not = \emptyset$ implies
   $S_i \cap X_2 \not = \emptyset$ and $S_i\cap X_2\not = \emptyset $
   implies $(S_i\cap X_1) \cup (S_i \cap \{t\}) \not = \emptyset$.
   The latter could either be the case due to
   $S_i\cap X_1 \not = \emptyset$, or else it holds that
   $S_i\cap\{t\}\not=\emptyset$, which implies that $t\in S_i$ for
   some $i\in \mathcal{I}^t$. This in turn means that $p(i)\in S_i$
   and by definition $p(i)\in X_1$, so again we get
   $S_i\cap X_1\not = \emptyset$. In either case it holds
   $S_i\cap X_1 \not = \emptyset \Leftrightarrow S_i\cap X_2 \not =
   \emptyset$.  But $T\cap X_1 = \emptyset$ and $T\cap X_2=\{t\}$,
   which is a
   contradiction to $T\in \gammat(\alphat(\mathcal{S}))$.
 \item We show that the co-closure of a relation is a congruence
   relation.  For a relation $R$, $\alphat(\gammat(R))$ is a
   congruence.  Reflexivity, symmetry and transitivity are easily
   verifiable. For $\alphat(\gammat(R))$ to be a congruence relation,
   for any
   $(Y_{1,1},Y_{1,2}),\dots, (Y_{n,1},Y_{n,2}) \in\alphat(\gammat(R))$
   the componentwise union must also be contained in the co-closure:
   it must hold
   $(\bigcup_{i\in \{1,\dots,n\}}Y_{i,1}, \bigcup_{i\in \{1,\dots,n\}}
   Y_{i,2}) \in\alphat(\gammat(R))$.  So let
   $(Y_{1,1},Y_{1,2}),\dots, (Y_{n,1},Y_{n,2}) \in
   \alphat(\gammat(R))$.  Then according to the definition it must
   hold that
   $(Y_{i,1}\cap S \not =\emptyset) \Leftrightarrow (Y_{i,2}\cap S
   \not = \emptyset)$ for all $i=1,\dots,n$.  Thus
   \begin{align*}
      \left( \bigcup_{i\in \{1,\dots,n\}} Y_{i,1}\right) \cap S
      = \bigcup_{i\in \{1,\dots,n\}} (Y_{i,1} \cap S)  \not =\emptyset
      \\ \Leftrightarrow
      \exists i^* \in\{1,\dots,n\} \colon Y_{i^*,1}\cap S\not=\emptyset
      \ \Leftrightarrow \exists i^* \in\{1,\dots,n\} \colon Y_{i^*,2}
      \cap S \not=\emptyset
      \\
      \Leftrightarrow \bigcup_{i\in \{1,\dots,n\}} (Y_{i,2} \cap S)
      =\left(\bigcup_{i\in\{1,\dots,n\}} Y_{i,2}\right)\cap S\not=\emptyset
   \end{align*}
   and so $(\bigcup_{i\in \{1,\dots,n\}}Y_{i,1},
   \bigcup_{i\in\{1,\dots,n\}} Y_{i,2}) \in\alphat(\gammat(R))$.

   \smallskip

   If $R$ is a congruence relation, we show that
   $\alphat(\gammat(R))=R$.
   For some $(X_1,X_2)\in R$, applying $\gammat$ to $R$ yields the set
   $\mathcal{S}$ where all elements $S$ satisfy
   $X_1\cap S\not =\emptyset \Leftrightarrow X_2\,\cap \not= \emptyset$.
   This in turn means that $(X_1,X_2)$ will be an element of
   $ \alphat(\gammat(R))$. As $(X_1,X_2)$ was chosen arbitrarily, it
   follows $R\subseteq \alphat(\gammat(R))$.

   For the other inclusion, let $(X_1, X_2) \in \alphat (\gammat (R))$
   and let $S_i$ be the top element of the $R$-equivalence class
   containing $X_i$, for $i\in \{1,2\}$.  The set
   $\gammat(R)$ can be rewritten to
   \[ \{S\subseteq X \mid \forall (X_1, X_2 )\in R \mid X_1\cap S
     =\emptyset \Leftrightarrow X_2\cap S=\emptyset\}, \] which in
   turn is equivalent to the set
   \[ \{ S\subseteq X \mid \forall (X_1, X_2 ) \in R \mid X_1
     \subseteq \overline{S} \Leftrightarrow
     X_2\subseteq\overline{S}\}. \] This condition can only hold if
   $\overline{S}$ is a top element of an equivalence class of $R$, and
   so the set $\gammat(R)$ contains exactly those sets $S$ where
   $\overline{S}$ is a top element of $R$. Hence,
   $\alphat(\gammat(R))$ is the set
   \begin{align*}
     &\{ (Y_1, Y_2) \mid \forall\text{ top elements }T\text{ of }R \mid
     Y_1 \cap \overline{T} = \emptyset \Leftrightarrow Y_2 \cap
     \overline{T} = \emptyset \} \\
     &= \{ (Y_1,Y_2) \mid \forall \text{
       top elements }T \text{ of }R \mid Y_1 \subseteq T \Leftrightarrow
     Y_2 \subseteq T \}
   \end{align*}
   By Lemma~\ref{lem:topEl}, $X_1 \, R\, X_2$ and thus follows
   $ \alphat(\gammat(R)) \subseteq R$.

   Let $R$ be an equivalence on $\power(X)$ and remember that
   $\mathit{cong}(R)$ is the congruence closure of $R$.  As shown
   above, $R\subseteq \alphat(\gammat(R))$. This (1) together with
   $\alphat$ and $\gammat$ being monotone functions (2) where
   $\alphat\circ\gammat$ preserves congruences (3) implies
   \[
     R \overset{(1)}{\subseteq} \alphat(\gammat(R)) \overset{(2)}
     {\subseteq} \alphat(\gammat(\mathit{cong}(R))) \overset{(3)}{=}
     \mathit{cong}(R). \] Since $\alphat(\gammat(R))$ is a congruence
   and contains $R$, and $\mathit{cong}(R)$ is the smallest such
   congruence, it follows that $\alphat(\gammat(R))= \mathit{cong}(R)$.
  \end{itemize}
\end{proof}

As indicated in the general ``recipe'', the next step is to show that
the logic function is compatible with the closure. Intuitively this is
true since diamond distributes over union.
\begin{propositionrep}
  The logic function $\lot$ is $\ct$-compatible.
\end{propositionrep}
\begin{proof}
  Let $\mathcal{S} = \{ S_i\subseteq X \mid i\in \mathcal{I} \}$.
  Using the fact that $\ct$ closes a set of sets under arbitrary
  unions and diamond commutes with union, we obtain:
  \begin{align*}
    \lot(\ct(\mathcal{S})) & =
    \bigcup_{a\in A} \Diamond_a[\ct(\mathcal{S})]\cup\{X\}\\
    & = \bigcup_{a\in A} \Diamond_a[\{
    \bigcup_{i\in\mathcal{J}}S_i\mid \mathcal{J}\subseteq
      \mathcal{I} \}]\cup\{X\} \\
    & =
    \{\Diamond_a(\bigcup_{i\in\mathcal{J}} S_i) \mid
    \mathcal{J}\subseteq\mathcal{I}, a\in A\} \cup\{X\}\\
    & =
    \{\bigcup_{i\in\mathcal{J}}\Diamond_a(S_i) \mid
    \mathcal{J}\subseteq\mathcal{I}, a\in A\} \cup\{X\}\\
    & \subseteq \ct(\{\Diamond_a(S_i)\mid S_i\in\mathcal{S}, a\in A\}\cup
    \{X\}) \\
    & = \ct (\lot (\mathcal{S})).
  \end{align*}
\end{proof}
\noindent
Finally the induced behaviour function is the one expected for trace
equivalence: the bisimilarity function on the determinized
transition system. This is true only for congruences, since $\bet$
automatically returns a congruence.

\begin{propositionrep}
  On a congruence relation $R\subseteq \power (X)\times\power (X)$, we
  have $X_1 \mathrel {\bet(R)} X_2$ iff
  $(X_1 = \emptyset \iff X_2 = \emptyset) \land \forall a\in A\colon
  \delta_a[X_1]\mathrel R \delta_a[X_2]$. The restriction of $\bet$ to congruences is continuous, independent of the branching type of the transition system.
\end{propositionrep}
\begin{proof}
  Let $R$ be a congruence relation and define
  \[ T(R)= \{(X_1,X_2)\in \power(X)\times \power(X) \mid (X_1
    =\emptyset \Leftrightarrow X_2 = \emptyset) \land \forall a\in Y
    (\delta_a[X_1]\,R\,\delta_a[X_2] ) \}. \] We will show that $\bet$
  coincides with $T$ on each congruence $R$. Combining the definitions
  of the functions $\alphat$, $\lot$ and $\gammat$, their
  concatenation can be characterized as
  \begin{align*}
    \alphat\circ \lot\circ \,\gammat(R) &= \{ (X_1,X_2) \mid \forall a
    \in A, S\in \gammat(R) : X_1 \cap \Diamond_a S \not=
    \emptyset \Leftrightarrow X_2 \cap \Diamond_a S \not=
    \emptyset \mathop{\land} \\
    & \qquad\qquad\qquad X_1 = \emptyset\iff X_2=\emptyset\}.
  \end{align*}
  The condition $X_1 = \emptyset\iff X_2=\emptyset$ holds since
  $X\in \lot(\gammat(R))$.

  We now show $T(R) \subseteq \alphat\circ \lot\circ \,\gammat (R)$
  for a congruence $R$. Let $(X_1,X_2)\in T(R)$ (hence
  $X_1=\emptyset\iff X_2=\emptyset$) and thus
  $(\delta_a[X_1],\delta_a[X_2])\in R$. For an
  \[ S\in \gammat(R) = \{ S\mid \forall (Y_1,Y_2)\in R: S\cap Y_1\not=
    \emptyset \Leftrightarrow S\cap Y_2\not= \emptyset \}, \] we have
  $S\cap \delta_a[X_1] \not = \emptyset$ iff
  $S\cap \delta_a[X_2] \not = \emptyset$, which means
  $X_1 \cap \Diamond_a S\not= \emptyset $ iff
  $X_2 \cap \Diamond_a S\not= \emptyset $, implying
  $(X_1,X_2)\in \alphat\circ \lot\circ \,\gammat$.

  It remains to be shown, that
  $\alphat\circ \lot\circ \,\gammat (R) \subseteq T(R)$. To this end,
  assume $(X_1, X_2)\in \alphat\circ \lot\circ \,\gammat
  (R)$. According to the definition, for all $S\in \gammat(R)$,
  $ X_1 \cap \Diamond_a S \not= \emptyset \Leftrightarrow X_2 \cap
  \Diamond_a S \not= \emptyset$ and considering the $a$-successors of
  $X_1$ and $X_2$ instead of the $a$-predecessors of the $S$, this
  yields
  $\delta_a[X_1]\cap S \not= \emptyset \Leftrightarrow
  \delta_a[X_2]\cap S \not = \emptyset$. This in turn implies
  $(\delta_a[X_1],\delta_a[X_2])\in \alphat(\gammat(R))$ for all
  $a\in A$. We know from Proposition \ref{prop:closure-qual-trace} that
  $\alphat(\gammat(R)) = R$, since $R$ is a congruence and the
  co-closure $\alphat \circ \gammat$ preserves congruence
  relations. In particular, this means that
  $(\delta_a[X_1],\delta_a[X_2])\in R$ for all $a\in A$. In addition
  $X_1=\emptyset\iff X_2\in\emptyset$ and hence $(X_1,X_2)\in T(R)$,
  which concludes the proof.

  The continuity is obvious from this characterization.
\end{proof}

Since on congruences $\bet$ agrees with the usual fixpoint function
for trace equivalence and $\bet$ preserves congruences, in the
corresponding Kleene iteration we obtain only congruences and hence it
agrees with the usual one, where one computes bisimilarity on the
determinized transition system. Hence it is easy to see that $\mu\bet$
is indeed trace equivalence (cf.\thinspace
Theorem~\ref{thm:decoratedtraceequiv}).

\paragraph*{Decorated Trace Equivalences}
We now consider completed trace/ready/failure/possible futures
equivalence from the van Glabbeek spectrum
\cite{g:linear-branching-time} and explain how these equivalences can
be obtained by adding fixed predicates.
We parameterize over a family $\mathcal S$ of predicates over the
state space (see Figure~\ref{fig:SummaryOfDecoratedTraceEquiv}). We
first characterize the fixpoint of the behaviour function, modified
with an extra preorder as follows. The advantage of this
characterisation is that it allows to state various decorated trace
equivalences in terms of transfer properties as in the definition of
bisimulation relations.

\begin{theoremrep}\label{thm:decoratedtraceequiv}
Let $R_0 \in \mathit{Pre}(X)$ and consider the map $\be_{R_0} = \bet
\cap ({R}_0)_{H}$.
Then $\mu\,\be_{R_0}$ is equal to the set $\Omega(R_0)$ of those pairs $(X_1,X_2)$, such that if $x_1\in X_1$ admits a trace $x_1 \xrightarrow\sigma x'_1$, then there exists $x_2\in X_2$, such that $x_2 \xrightarrow \sigma x'_2$ and $x'_1 \mathrel{R_0} x'_2$ (and vice versa).
\end{theoremrep}
\begin{proof}
  We set $\Omega = \Omega(R_0)$ and first show that $\Omega \subseteq
  \mu\,\be$. By the Knaster-Tarski Theorem it suffices to show that
  $\Omega \subseteq \be_{R_0}(\Omega)$, since every post-fixpoint is below
  the greatest fixpoint. Let $(X_1, X_2)\in\Omega$, then the following properties hold:
\begin{enumerate}
\item Clearly $X_1 = \emptyset \iff X_2 = \emptyset$.
\item We have $(\delta_a[X_1], \delta_a[X_2]) \in \Omega$. For if $\sigma$ is a trace of some state $x'_1\in\delta_a[X_1]$ and $x'_1 \xrightarrow{\sigma} x^{\prime\prime}_2$, then $x \xrightarrow{a} x'_1 \xrightarrow{\sigma} x^{\prime\prime}_1$ for some state in $x_1\in X_1$ and consequently there is $x_2\in X_2$, such that $x_2 \xrightarrow{a} x_2' \xrightarrow{\sigma} x_2^{\prime\prime}$ with $x^{\prime\prime}_1 \mathrel{R_0} x_2^{\prime\prime}$. In particular $x'_2\in\delta_a[X_2]$. The same holds vice versa.
\item By considering the empty trace in the definition of $\Omega$, we obtain $(X_1, X_2)\in (R_0)_H$.
\end{enumerate}
Since $\Omega$ is clearly a congruence, these three properties imply that $(X_1, X_2) \in \be_{R_0}(\Omega)$.

For the other direction we assume that $\Theta$ is a fixpoint of $\be_{R_0}$ and a congruence, and show that $\Theta \subseteq \Omega$. Fix a pair $(X_1, X_2)\in\Theta$ and assume that $x_0\in X_1$ admits a trace $x_0 \xrightarrow{a_0} x_1 \xrightarrow{a_1} \dots \xrightarrow{a_{\ell-1}} x_\ell$. Since $\Theta$ is a fixpoint and a congruence, we get
\[ (\hat\delta_{a_0\dots a_{\ell-1}}[X_1],\hat\delta_{a_0\dots
    a_{\ell-1}}[X_2]) \in (\be_{R_0})^\ell(\Theta) = \Theta\] Using the
fixpoint property one more time, we get
\[ \hat\delta_{a_0\dots a_{\ell-1}}[X_1]
  \mathrel{(R_0)_H}\hat\delta_{a_0\dots a_{\ell-1}}[X_2] \] In
particular, there is some
$y_\ell\in \hat\delta_{a_0\dots a_{\ell-1}}[X_2]$
with $x_\ell \mathrel{R_0} y_\ell$. The other direction is
analogous. We therefore can conclude that $(X_1, X_2) \in
\Omega$. Since congruences are preserved under $\be_{R_0}$ and under
intersections, we know that $\mu\,\be_{R_0}$ is a congruence, hence
$\mu\,\be_{R_0} \subseteq \Omega$.
\end{proof}
\noindent
In order to infer that $\mu \bet$ characterizes trace equivalence
simply set $R_0 = X\times X$.

The idea is to fix a set $\mathcal{S}$ of predicates and add these to
our trace logic, using $R_0 = \alphas(\mathcal{S})$ as the preorder required
in the above theorem. In order to ensure that logical and behavioural
equivalence coincide, we require that $\mathcal{S}$ has certain
``good'' properties.

\begin{lemmarep}\label{lem:alphatS-char}
  Let $\mathcal S\subseteq \power(X)$ such that $\forall x\exists S\in\mathcal S\colon (x\in S \land \forall y\colon x\, \alphas(\mathcal{S})\, y \iff y\in S)$.
  Then, $\alphat (\mathcal S)$ coincides with the relation lifting $(\alphas (\mathcal S))_{H}$.
\end{lemmarep}

\begin{proof}
  \fbox{$\subseteq$} Let $X_1\mathrel{\alphat (\mathcal S)} X_2$ and
  let $x_1\in X_1$. By the assumption we find some $S\in \mathcal S$
  such that $x_1\in S$ and $\forall y\colon x_1\, \alphas(\mathcal{S})\, y \iff y\in S$. Clearly, $x_1 \in X_1\cap S$ and we find that $X_2\cap S\neq\emptyset$. I.e., there is some $x_2\in X_2\cap S$ and by the assumption we have $x_1 \alphas(\mathcal S)\, x_2$ since $x_2\in S$.

\fbox{$\supseteq$} Let $X_1 \mathrel{(\alphas (\mathcal S))_H} X_2$ and let $S\in\mathcal S$. Suppose $x_1 \in X_1 \cap S \neq\emptyset$. Then there is some $x_2\in X_2$ such that $x_1 \mathrel {\alphas(\mathcal S)} x_2$. Clearly, by the assumption we get $x_2 \in S$.
\end{proof}

\noindent
The condition of Lemma~\ref{lem:alphatS-char} is for instance satisfied
if $\mathcal{S}$ is closed under intersections. We obtain the
following characterization of decorated trace logics.
\begin{corollaryrep}
  \label{cor:decorated-trace}
  Assume that $\mathcal{S}$ satisfies the requirements of
  Lemma~\ref{lem:alphatS-char} and let $R_0 =
  \alphas(\mathcal{S})$. Consider the logic function
  $\lo_{\mathcal{S}} = \lo_t \cup \mathcal{S}$. Then
  $\alphat(\mu\,\lo_\mathcal{S}) = \Omega(R_0) = \mu\,\be_{R_0}$.

  Hence if we instantiate $\mathcal{S}$ as in
  Figure~\ref{fig:SummaryOfDecoratedTraceEquiv}, where
  \[
    T_X = \delta^{-1}(\emptyset) \qquad
    \refuse B = \{x\mid \lab(\delta(x))\cap B=\emptyset\} \quad
    \ready B = \{x \mid \lab(\delta(x))=B \},
  \]
  we obtain complete trace/failure/readiness equivalences as the least
  fixpoint of $\be_{R_0}$. In all these cases $\mathcal{S}\cup\{X\}$
  satisfies the requirements of Lemma~\ref{lem:alphatS-char}.
\end{corollaryrep}
\begin{proof}
  Let $\be_\mathcal{S} = \alphat \circ \lo_\mathcal{S} \circ \gammat$
  be the corresponding behaviour function, then
  \[\be_\mathcal{S} = \be_t \cap \alpha_t(\mathcal{S}) = \be_t \cap (R_0)_H\]
  and the from the above theorem we get that
  $\alpha(\mu\,\lo_\mathcal{S}) = \mu\,\be_\mathcal{S} = \Omega(R_0)$.

  The fact that the various instantiations results in complete
  trace/failure/readiness equivalences is a direct consequence of
  Theorem~\ref{thm:decoratedtraceequiv}.

  We can check that all sets $\mathcal{S}\cup\{X\}$
  satisfy the requirements of Lemma~\ref{lem:alphatS-char}. (Note that
  $X$ is already provided by the logic function $\lot$.)
  \begin{itemize}
  \item completed trace: given $x\in X$ choose $X$ if
    $\lab(\delta(x))\neq\emptyset$, $T_X$ otherwise.
  \item failure: given $x\in X$ choose $\refuse{A\backslash
      \lab(\delta(x))}$.
  \item ready: given $x\in X$ choose $\ready{\lab(\delta(x))}$.
  \item possible future: we enforce the condition by closing under
    arbitrary intersections.
  \end{itemize}
\end{proof}

\begin{figure}
\centering
\begin{tabular}{l l l}
\hfil $\mathcal{S}$ & \hfil $x \mathrel{R_0} y$ & Behavioural equivalence\\
\midrule
$\{T_X\}$ & $ \lab(\delta(x))=\emptyset \implies \lab(\delta(y))=\emptyset$ & completed trace\\
$\{\refuse B \mid B\subseteq A\}$ & $\lab(\delta(y)) \subseteq \lab(\delta(x))$ & failure\\
$\{\ready B \mid B\subseteq A\}$ & $\lab(\delta(x)) = \lab(\delta(y))$ & ready\\
$\cl^\cap(\mu\,\lo_t \cup \neg(\mu\,\lot))$ & $\traces(x) = \traces(y)$ & possible futures
\end{tabular}
\caption{Behavioural equivalences obtained from a logic of the form $\lo_0(\mathcal{F}) = \lot(\mathcal{F}) \cup \mathcal{S}$, respectively a behaviour function of the form $\be_0=\be_t \cap (R_0)_H$.}
\label{fig:SummaryOfDecoratedTraceEquiv}
\end{figure}
\noindent
Note that $\{X\}$ is already generated by
$\lot$. The predicate $T_X$ semantically corresponds to the predicate denoted $0$ in \cite{g:linear-branching-time} (satisfied by those states that have no outgoing transitions). Similarly, the predicate $\refuse{B}$ (resp.\thinspace $\ready{B}$) corresponds to the predicate $\tilde B$ (resp.\thinspace $B$) in \cite{g:linear-branching-time}, which is satisfied by those states that refuse (resp.\thinspace enable) all the actions from $B$.

\section{Quantitative Case}
\label{sec:quantitative}

After discussing the classical case of behavioural equivalences, we
will now follow an analogous path to obtain behavioural distances in a
quantitative setting. We will begin by first considering the
bisimulation pseudo-metric, then directed simulation pseudo-metric,
and lastly conclude with the directed (decorated) trace pseudo-metric,
from which one can obtain the undirected version by symmetrization. In
each case, we will again start out by defining the logics and
derive the fixpoint equations for the corresponding behaviour
function.

In addition, our decorated trace distance can be seen as the
quantitative generalization of a decorated trace preorder, which when
instantiated corresponds to (complete) trace/failure/ready
inclusions. So, in this sense, our decorated trace distance is going
to be parametric. Lastly, though the concrete trace distance is
studied elsewhere (cf.\thinspace
\cite{afs:linear-branching-metrics,fl:quantitative-spectrum-journal}),
we are not aware of this fixpoint characterization of (decorated)
trace distance in the literature. There is a recursive
characterization in \cite{fl:quantitative-spectrum-journal}, but based
on an auxiliary lattice that serves as memory.

In the rest of this section we fix a metric transition system
$(X, \to, d_A)$ over $A$.

\subsection{Bisimulation Pseudo-metrics}
\label{sec:bisimulation-metrics}

Recall the adjunction from Section~\ref{sec:bisimilarity}, which we will enrich by replacing a predicate $S\subseteq X$ with a function $f\colon X\to [0,1]$, while pseudo-metrics now play the role of equivalences. In particular, our logical and behavioural universes are given by the lattices $\mathbb{L} = (\power([0,1]^X),\subseteq)$ and $\mathbb{B} = (\pmet(X),\le)$, respectively. Moreover, the Galois connection is given as follows:
\begin{align*}
  &\alphaB(\mathcal{F})(x_1,x_2) = \bigvee_{f\in
    \mathcal{F}} |f(x_1) - f(x_2)| && \text{(for $\mathcal F\subseteq [0,1]^X$)}\\
  &\gammaB(d)  =\ \{f\in [0,1]^X \mid
  \forall{x_1,x_2\in X}: |f(x_1)-f(x_2)|\leq d(x_1,x_2)\} && \text{(for $d\in\pmet (X)$)}.
\end{align*}
That is, $\alphaB(\mathcal{F})$ is the least metric on $X$ such that all
functions in $\mathcal{F}$ are non-expansive wrt.\thinspace the Euclidean metric on $[0,1]$, while $\gammaB$ returns all the non-expansive functions wrt.\thinspace $d\in\pmet (X)$.

Next we introduce a family of modalities $(\nxt_a f)_{a\in A}$
in the style of \cite{afs:linear-branching-metrics}:
\[
\nxt_a f(x) =\bigvee\{  \overline{D_a}(b) \land f(x')  \mid x\xrightarrow b x'\},\quad \text{where $D_a(b) = d_A(b, a)$ and $ \overline{D_a}(b) = 1 - D_a(b)$}.
\]
We consider the (continuous) logic function
$\loB\colon \power([0,1]^X)\to \power([0,1]^X)$ that maps a set
$\mathcal F\subseteq [0,1]^X$ of functions to the set
$\bigcup_{a\in A}
\nxt_a[\cl^{\land,\lnot,\mathrm{sh}}_f(\mathcal{F})]$, where
$\cl^{\land,\lnot,\mathrm{sh}}_f$ closes $\mathcal{F}$ under finite
meets, complements ($f\mapsto 1-f$), and constant shifts (and hence
also under finite joins), which are all non-expansive operators
(cf.\thinspace Proposition~\ref{prop:closure-bisimulation-metric}). It
should be noted that 
$\nxt_a$ is a quantitative
generalization of the qualitative diamond modality in the following
sense.

\begin{propositionrep}
  If $d_A$ is a discrete metric then
  $\nxt_a f(x)=1 \iff x\in\Diamond_a f^{-1}(\{1\})$.
\end{propositionrep}

\begin{proof}
  Let $f\in[0,1]^X$. Then we find
  {\allowdisplaybreaks
    \begin{align*}
      \nxt_a f (x) = 1 \iff &\ \exists {b,x'}\ x\xrightarrow b x' \land (1-d_A(b,a))=1 \land f(x')  = 1 \\
      \iff &\ \exists {b,x'}\colon x\xrightarrow b x' \land d_A(b,a)=0 \land f(x')=1\\
      \iff &\ \exists {b,x'}\colon x \xrightarrow b x' \land b=a \land
      f(x')=1 \\
      \iff &\ x \in \Diamond_a f^{-1}(\{1\}). \qedhere
    \end{align*}
  }
\end{proof}
Following the development of Section~\ref{sec:bisimilarity}, we establish the metric version of Lemma~\ref{lem:DistributivityOfDiamond}:
\begin{lemmarep}
  \label{lem:DistributivityOfNext}
  Let $(X, \to, d_A)$ be a finitely branching metric transition system and
  $\mathcal{F} \subseteq [0, 1]^X$ be a family of functions.
  Then for $c\in A$ we have
  $
  \nxt_c \Big( \bigwedge_{f\in\mathcal{F}} f \Big) = \bigwedge_{\substack{\mathcal{F}_0 \subseteq \mathcal{F} \\ \mathcal{F}_0 \text{ finite} }} \nxt_c \Big( \bigwedge_{f\in\mathcal{F}_0} f\Big).
  $
\end{lemmarep}
\begin{proof}
  One inequality holds independently of the properties of the
  transition system, namely for all $x\in X$ and an arbitrary subset
  $\mathcal{F}_0 \subseteq \mathcal{F}$ we have
  \begin{align*}
    \nxt_c \bigwedge_{f\in\mathcal{F}} f &= \bigvee_{(a,
      x')\in\delta(x)} \big(\overline{D_c}(a) \land
    \bigwedge_{f\in\mathcal{F}} f(x') \big) \\
    &= \bigvee_{(a, x')\in\delta(x)} \bigwedge_{f\in\mathcal{F}} \big(
    \overline{D_c}(a) \land f(x') \big) \\
    &\le \bigwedge_{f\in\mathcal{F}} \bigvee_{(a, x')\in\delta(x)}
    \big( \overline{D_c}(a) \land f(x') \big) \\
    &\le \bigwedge_{f\in\mathcal{F}_0} \bigvee_{(a, x')\in\delta(x)}
    \big( \overline{D_c}(a) \land f(x') \big) =
    \bigwedge_{f\in\mathcal{F}_0} \nxt_c f
  \end{align*}
  For the other inequality first observe that it holds whenever
  $\mathcal{F}=\emptyset$, hence in the following we assume
  $\mathcal{F} \neq \emptyset$. Fix $x\in X$ and let
  $\varepsilon>0$. Let $\delta(x) = \{(a_1, x_1),\dots,(a_n, x_n)\}$
  be the finitely many outgoing transitions from $x$. Choose functions
  $f_1,\dots,f_n\in\mathcal{F}$ such that
  \[f_j(x_j) \le \bigwedge_{f\in\mathcal{F}} f(x_j) + \varepsilon\]
  for all $j=1,\dots,n$. Then
  \begin{align*}
    \bigwedge_{\substack{\mathcal{F}_0\subseteq\mathcal{F} \\ \text{finite}}} \nxt_c \bigwedge_{f\in\mathcal{F}_0} f(x) &\le \nxt_c \bigwedge_{j=1}^n f_j(x)\\
    &= \bigvee_{(a, x')\in\delta(x)} \big( \overline{D_c}(a) \land
    \big( f_1(x') \land \dots \land f_n(x') \big) \big)\\
    &\le \bigvee_{(a, x')\in\delta(x)} \big( \overline{D_c}(a) \land
    \bigwedge_{f\in\mathcal{F}} (f(x') + \varepsilon) \big)\\
    &= \nxt_c \bigwedge_{f\in\mathcal{F}} (f(x) + \varepsilon) \\
    &= \big(\nxt_c \bigwedge_{f\in\mathcal{F}} f\big)(x) + \varepsilon.
  \end{align*}
  Since $\varepsilon>0$ and $x$ are arbitrary, we get
  \[\bigwedge_{\substack{\mathcal{F}_0\subseteq\mathcal{F} \\
        \text{finite}}} \nxt_c \bigwedge_{f\in\mathcal{F}_0} f \le
    \nxt_c \bigwedge_{f\in\mathcal{F}} f. \qedhere\]
\end{proof}

In the quantitative case, the closures induced by the Galois
connections had appealing characterizations via boolean
operators. Here the closure $\cB$ is obtained by post-composing the
functions in $\mathcal{F}$ with \emph{all} non-expansive
operators. This is in fact a corollary of the McShane-Whitney
extension theorem~\cite{McS:ExtensionOfFunctions,
  W:ExtensionOfFunctions}.

\begin{proposition}
  \label{prop:closure-bisimulation-metric}
  The closure $\cB = \gammaB\circ \alphaB$ on
  $\mathcal F\subseteq [0,1]^X$ can be characterized as follows:
  \[
    \cB(\mathcal{F}) = \{ \operatorname{op}\circ \langle
    \mathcal{F}\rangle \mid \operatorname{op}\colon
    [0,1]^\mathcal{F}\to [0,1] \text{ is non-expansive wrt.\thinspace
      the sup-metric}\}.
  \]
  Moreover, the co-closure $\alphaB\circ\gammaB$ is the identity.
\end{proposition}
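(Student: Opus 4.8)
The plan is to unfold the closure $\cB = \gammaB\circ\alphaB$ into a non-expansiveness condition and then prove the two inclusions, the hard one resting on a Lipschitz extension argument.

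First I would record the basic reformulation. By definition $\cB(\mathcal{F}) = \gammaB(\alphaB(\mathcal{F}))$ is precisely the set of all $g\colon X\to[0,1]$ that are non-expansive with respect to the pseudo-metric $d := \alphaB(\mathcal{F})$, where $d(x_1,x_2) = \bigvee_{f\in\mathcal{F}}|f(x_1)-f(x_2)|$. The key observation is that $d$ is exactly the pullback of the sup-metric along $\langle\mathcal{F}\rangle\colon X\to[0,1]^{\mathcal{F}}$, i.e.\ $d(x_1,x_2) = d_{\mathrm{sup}}(\langle\mathcal{F}\rangle(x_1),\langle\mathcal{F}\rangle(x_2))$, since $d_{\mathrm{sup}}(\langle\mathcal{F}\rangle(x_1),\langle\mathcal{F}\rangle(x_2)) = \sup_{f\in\mathcal{F}}|f(x_1)-f(x_2)|$. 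For the inclusion $\supseteq$, given a sup-metric-non-expansive $\operatorname{op}$ and $g=\operatorname{op}\circ\langle\mathcal{F}\rangle$, I would simply compute $|g(x_1)-g(x_2)| = |\operatorname{op}(\langle\mathcal{F}\rangle(x_1))-\operatorname{op}(\langle\mathcal{F}\rangle(x_2))| \le d_{\mathrm{sup}}(\langle\mathcal{F}\rangle(x_1),\langle\mathcal{F}\rangle(x_2)) = d(x_1,x_2)$, so that $g$ is non-expansive wrt $d$ and hence $g\in\cB(\mathcal{F})$.

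The inclusion $\subseteq$ is the heart of the matter and is where McShane-Whitney enters. Given $g\in\cB(\mathcal{F})$, I first define $\operatorname{op}_0$ on the image $Y := \langle\mathcal{F}\rangle[X]\subseteq[0,1]^{\mathcal{F}}$ by $\operatorname{op}_0(\langle\mathcal{F}\rangle(x)) = g(x)$. This is well defined: if $\langle\mathcal{F}\rangle(x)=\langle\mathcal{F}\rangle(x')$ then $d(x,x')=0$, so non-expansiveness of $g$ forces $g(x)=g(x')$; the same estimate shows $\operatorname{op}_0$ is non-expansive on $(Y,d_{\mathrm{sup}})$. The McShane-Whitney extension theorem then yields a non-expansive $\operatorname{op}\colon[0,1]^{\mathcal{F}}\to[0,1]$ extending $\operatorname{op}_0$, and by construction $g = \operatorname{op}\circ\langle\mathcal{F}\rangle$. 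I expect the main obstacle to be bookkeeping on the codomain: the bare McShane formula must be made to land in $[0,1]$. I would handle this by using truncated addition, setting $\operatorname{op}(q) = \bigwedge_{p\in Y}\big(\operatorname{op}_0(p)\oplus d_{\mathrm{sup}}(q,p)\big)$; the truncation keeps all values in $[0,1]$ automatically, $q\mapsto\operatorname{op}_0(p)\oplus d_{\mathrm{sup}}(q,p)$ is non-expansive (a shift-and-clamp of the non-expansive $q\mapsto d_{\mathrm{sup}}(q,p)$) so the infimum is non-expansive, and non-expansiveness of $\operatorname{op}_0$ gives $\operatorname{op}(p_0)=\operatorname{op}_0(p_0)$ for $p_0\in Y$, i.e.\ $\operatorname{op}$ genuinely extends $\operatorname{op}_0$.

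Finally, for the co-closure $\alphaB\circ\gammaB = \mathrm{id}$: the inequality $\alphaB(\gammaB(d))\le d$ is free from the Galois connection (the co-closure is deflationary). For the reverse, I would fix $x_1,x_2$ and use the witness $f(x):=d(x_1,x)$, which lies in $[0,1]$; the triangle inequality (in the form $a\le b\oplus c \iff a\ominus b\le c$ together with symmetry of $d$) shows $f$ is non-expansive, hence $f\in\gammaB(d)$, and since $|f(x_1)-f(x_2)| = |d(x_1,x_1)-d(x_1,x_2)| = d(x_1,x_2)$ we get $\alphaB(\gammaB(d))(x_1,x_2)\ge d(x_1,x_2)$. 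Combining the two inequalities yields $\alphaB(\gammaB(d)) = d$. Apart from the codomain-clamping step in the extension, every part of the argument is a direct unfolding of the definitions.
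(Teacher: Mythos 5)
Your proposal is correct and takes essentially the same approach as the paper: the paper proves the directed (simulation) analogue in detail and declares this symmetric case analogous, and there the hard inclusion is likewise obtained from the McShane--Whitney formula $\operatorname{op}(v)=\bigwedge_{x\in X}\big(g(x)\oplus\bigvee_{f\in\mathcal{F}}(v(f)\ominus f(x))\big)$, which is exactly your truncated extension of $\operatorname{op}_0$ from the image of $\langle\mathcal{F}\rangle$ written out in one step. The co-closure argument is also the same, using the witness function $f=d(\cdot,y)$ (resp.\ $d(x_1,\cdot)$) together with the deflationary property of the Galois connection.
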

The proof of the above proposition and the next two results are
analogous to the corresponding results in the next section on
simulation.
\begin{proposition}
For finitely branching transition systems, $\loB$ is $\cB$-compatible.
\end{proposition}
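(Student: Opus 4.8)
The plan is to mirror the qualitative argument of Prop.~\ref{prop:compatible-bisimilarity}, i.e.\ to establish $\loB\circ\cB\subseteq\cB\circ\loB$ by pushing each modality $\nxt_a$ through the closure. First I would observe that, since finite meets, complements and shifts are non-expansive and $\cB$ closes under \emph{all} non-expansive operators (Prop.~\ref{prop:closure-bisimulation-metric}), we have $\cl^{\land,\lnot,\mathrm{sh}}_f(\cB(\mathcal F))=\cB(\mathcal F)$, so that $\loB(\cB(\mathcal F))=\bigcup_{a\in A}\nxt_a[\cB(\mathcal F)]$. Hence it suffices to show, for each $a\in A$ and each $g\in\cB(\mathcal F)$, that $\nxt_a g\in\cB(\loB(\mathcal F))$.

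The key is to represent $g$ suitably. By Prop.~\ref{prop:closure-bisimulation-metric}, $g=\operatorname{op}\circ\langle\mathcal F\rangle$ for some non-expansive $\operatorname{op}\colon[0,1]^{\mathcal F}\to[0,1]$, and the McShane--Whitney formula yields the pointwise identity $g=\bigwedge_{q}\big(\operatorname{op}(q)\oplus\bigvee_{f\in\mathcal F}|f-q(f)|\big)$, where $q$ ranges over $[0,1]^{\mathcal F}$ and $|f-q(f)|$ abbreviates $x\mapsto|f(x)-q(f)|$. Distributing the constant shift over the join and setting $h_{q,f}=\operatorname{op}(q)\oplus|f-q(f)|$, this becomes $g=\bigwedge_{q}\bigvee_{f\in\mathcal F}h_{q,f}$. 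Each $h_{q,f}$ lies in $\cl^{\land,\lnot,\mathrm{sh}}_f(\mathcal F)$: indeed $f\ominus q(f)$ is a shift, $q(f)\ominus f=(1-f)\ominus(1-q(f))$ is a complement followed by a shift, their join is a finite meet/complement combination, and the final shift by $\operatorname{op}(q)$ keeps us inside the finite closure.

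Now I would apply $\nxt_a$ and exploit that it preserves arbitrary joins while, by Lem.~\ref{lem:DistributivityOfNext}, distributing over meets only up to \emph{finite} sub-meets. Concretely, $\nxt_a g=\bigwedge_{Q_0}\nxt_a\big(\bigwedge_{q\in Q_0}\bigvee_f h_{q,f}\big)$, where $Q_0$ ranges over the finite subsets of the index set. For each such $Q_0$, frame distributivity of $[0,1]$ converts the finite meet of joins into a join over choice functions, $\bigwedge_{q\in Q_0}\bigvee_f h_{q,f}=\bigvee_{\phi\colon Q_0\to\mathcal F}\bigwedge_{q\in Q_0}h_{q,\phi(q)}$, whose inner terms $H_\phi=\bigwedge_{q\in Q_0}h_{q,\phi(q)}$ are finite meets and hence lie in $\cl^{\land,\lnot,\mathrm{sh}}_f(\mathcal F)$. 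Since $\nxt_a$ preserves joins, this gives $\nxt_a g=\bigwedge_{Q_0}\bigvee_\phi\nxt_a H_\phi$ with each $\nxt_a H_\phi\in\nxt_a[\cl^{\land,\lnot,\mathrm{sh}}_f(\mathcal F)]\subseteq\loB(\mathcal F)$. Thus $\nxt_a g$ is an (infinitary) meet of joins of elements of $\loB(\mathcal F)$, and as joins and meets are non-expansive wrt.\ the sup-metric, $\nxt_a g\in\cB(\loB(\mathcal F))$ by Prop.~\ref{prop:closure-bisimulation-metric}.

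I expect the main obstacle to be the order in which the three ``move $\nxt_a$ inside'' steps must be carried out: the reduction to finite sub-meets (Lem.~\ref{lem:DistributivityOfNext}, and the only place where finite branching enters) has to precede the frame-distributivity step, because the choice-function identity $\bigwedge\bigvee=\bigvee\bigwedge$ holds only for a finite \emph{outer} meet; applying it directly to the infinite meet over all $q$ would be invalid. Everything else---the McShane--Whitney representation, join-preservation of $\nxt_a$, and closure of $\cB$ under non-expansive operators---is essentially bookkeeping.
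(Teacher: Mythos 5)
Your proof is correct, and it is a genuinely different (and in some ways leaner) route than the paper's. The paper gives no standalone proof for $\loB$: it declares the argument ``analogous'' to the $\loS$ case, where it first establishes a separate Normal Form result (Prop.~\ref{prop:normal-form-simulation-metric}), writing a closure element as $\bigvee_{\varepsilon>0}\bigwedge_{x}\bigvee_{y}f^\varepsilon_{(x,y)}$ with the $f^\varepsilon_{(x,y)}$ constructed pointwise from pairs of states, then applies \emph{complete} distributivity of $[0,1]^X$ to the \emph{infinite} meet-of-joins, then join-preservation of $\nxt_a$, and only then Lem.~\ref{lem:DistributivityOfNext}. You instead extract the representation $g=\bigwedge_{q}\bigvee_{f}h_{q,f}$ in one line from the McShane--Whitney formula underlying Prop.~\ref{prop:closure-bisimulation-metric} (so no separate normal-form lemma is needed), and you reverse the order of the remaining steps: Lem.~\ref{lem:DistributivityOfNext} is applied \emph{first}, to the outer infinite meet, after which only the frame law (finite meets distributing over arbitrary joins) is required. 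This buys you a self-contained proof that never invokes complete distributivity; what the paper's ordering buys is a template that transfers verbatim to the directed case, where the symmetric absolute-value decomposition $|f-q(f)|=(f\ominus q(f))\lor(q(f)\ominus f)$ you use is not available and the directed McShane--Whitney extension must be packaged as a normal form. Two small caveats, neither fatal: your closing parenthetical claim that distributing over the infinite meet over all $q$ ``would be invalid'' is not actually true --- $[0,1]^X$ \emph{is} completely distributive, and the paper's proof exploits exactly that, so the reordering is a simplification rather than a necessity; and the step distributing the constant shift $\operatorname{op}(q)$ over $\bigvee_{f\in\mathcal F}$ fails when $\mathcal F=\emptyset$ (an empty join), an edge case the paper dodges by working with $\mathcal F\cup\{1\}$ and absorbing $1$ as the empty meet of $\cl^{\land,\lnot,\mathrm{sh}}_f$; you should do the same.
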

\begin{theorem}
  The behaviour function $\beB$ on any $d\in\pmet (X)$ is
  $\beB(d)= (d_A\otimes d)_H \circ (\delta\times\delta)$, which
  results exactly in bisimulation metrics as considered in
  \cite{afs:linear-branching-metrics}. Moreover, $\beB$ is continuous
  for finitely branching transition systems.
\end{theorem}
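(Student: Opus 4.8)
The plan is to first reduce $\beB$ to an explicit pointwise supremum and then identify it with the Hausdorff lifting by proving two inequalities. Since $\loB$ is $\cB$-compatible (established above), Thm.~\ref{thm:fixpoint-preservation} gives $\beB = \alphaB\circ\loB\circ\gammaB$. The key simplification is that $\gammaB(d)$, being the set of all functions non-expansive with respect to $d$, is already closed under finite meets, complements $f\mapsto 1-f$ and constant shifts, as these are all non-expansive operations. Hence $\cl^{\land,\lnot,\mathrm{sh}}_f(\gammaB(d)) = \gammaB(d)$ and $\loB(\gammaB(d)) = \bigcup_{a\in A}\nxt_a[\gammaB(d)]$. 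Applying $\alphaB$, which preserves joins, yields
\[ \beB(d)(x_1,x_2) = \bigvee_{a\in A,\ f\in\gammaB(d)} |\nxt_a f(x_1) - \nxt_a f(x_2)|, \]
so it remains to show this equals $(d_A\otimes d)_H(\delta(x_1),\delta(x_2))$.

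For the inequality $\le$, I would fix $a$ and $f\in\gammaB(d)$ and bound $\nxt_a f(x_1)\ominus\nxt_a f(x_2)$ by the directed distance $r := (d_A\otimes d)_{\overrightarrow{H}}(\delta(x_1),\delta(x_2))$. For each $(b,x')\in\delta(x_1)$ and each $\eta>0$ there is $(c,y')\in\delta(x_2)$ with $\max\{d_A(b,c), d(x',y')\}\le r\oplus\eta$; the triangle inequality for $d_A$ gives $\overline{D_a}(b)\le\overline{D_a}(c)\oplus r\oplus\eta$ and non-expansiveness of $f$ gives $f(x')\le f(y')\oplus r\oplus\eta$, so $\overline{D_a}(b)\land f(x')\le(\overline{D_a}(c)\land f(y'))\oplus r\oplus\eta\le\nxt_a f(x_2)\oplus r\oplus\eta$. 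Taking the supremum over $(b,x')$ and letting $\eta\to0$ yields $\nxt_a f(x_1)\ominus\nxt_a f(x_2)\le r$, and symmetrically for the reverse difference, hence $|\nxt_a f(x_1)-\nxt_a f(x_2)|\le(d_A\otimes d)_H(\delta(x_1),\delta(x_2))$.

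The inequality $\ge$ is the main obstacle, since it requires exhibiting witnessing modalities and functions. I would argue for the directed part: given $(b_*,x_*)\in\delta(x_1)$ approximately realizing the supremum defining $(d_A\otimes d)_{\overrightarrow{H}}(\delta(x_1),\delta(x_2))$, set $a = b_*$ and $f(y) = 1-d(x_*,y)$. Then $f\in\gammaB(d)$ by the reverse triangle inequality, and $\nxt_a f(x_1)=1$ (the term for $(b_*,x_*)$ equals $1\land 1$), while
\[ \nxt_a f(x_2) = \bigvee_{(c,y')\in\delta(x_2)}\bigl(1-\max\{d_A(b_*,c),d(x_*,y')\}\bigr) = 1 - \inf_{(c,y')\in\delta(x_2)}\max\{d_A(b_*,c),d(x_*,y')\}. \]
The difference $\nxt_a f(x_1)-\nxt_a f(x_2)$ thus recovers the inner infimum at $(b_*,x_*)$, which approximates $(d_A\otimes d)_{\overrightarrow{H}}(\delta(x_1),\delta(x_2))$; letting the witness approach the supremum and symmetrizing gives $\beB(d)(x_1,x_2)\ge(d_A\otimes d)_H(\delta(x_1),\delta(x_2))$. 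Empty successor sets are covered by the standard conventions for suprema and infima over the empty set.

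For continuity, assume the system is finitely branching, so each $\delta(x_i)$ is finite and $(d_A\otimes d)_H(\delta(x_1),\delta(x_2))$ is a finite maximum over $\delta(x_1)$ of finite minima over $\delta(x_2)$, and vice versa. Given an increasing chain $(d_\beta)_{\beta<\kappa}$ with supremum $d$, monotonicity gives $\bigvee_\beta\beB(d_\beta)\le\beB(d)$ immediately. For the converse I would use that a finite set of ordinals has a maximum to exchange each inner finite minimum with the supremum over $\beta$: if each coordinate exceeds a bound $L$ at some stage $\beta_j$, then by monotonicity all coordinates exceed $L$ at $\max_j\beta_j$, so the finite minimum over the successors commutes with $\bigvee_\beta$; the outer finite maximum and the symmetrization commute trivially. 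This yields $\beB(\bigvee_\beta d_\beta) = \bigvee_\beta\beB(d_\beta)$, i.e.\ $\omega$-continuity, exactly as in the simulation case. This finiteness-driven exchange of $\sup$ and $\inf$ is precisely where finite branching is essential.
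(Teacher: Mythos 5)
Your proposal is correct, and its skeleton --- unfold $\beB=\alphaB\circ\loB\circ\gammaB$, absorb the finitary closure into $\gammaB(d)$ because finite meets, complements and shifts preserve non-expansiveness, then prove two inequalities against the Hausdorff lifting --- is exactly what the paper intends: this theorem carries no dedicated proof but is declared analogous to the directed simulation characterization (Thm.~\ref{thm:DirectedSimChar}). (One small misattribution on your part: $\beB=\alphaB\circ\loB\circ\gammaB$ is the \emph{definition} of the behaviour function in the general recipe; $\cB$-compatibility and Thm.~\ref{thm:fixpoint-preservation} are only needed to conclude $\alphaB(\mu\,\loB)=\mu\,\beB$, not for this identity.) Where you genuinely diverge is in the two delicate sub-steps. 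For expressiveness ($\ge$), the paper's template fixes a near-optimal transition $(\hat a,\hat x)$, defines a two-valued partial function on the set $Y$ of ``far'' successors together with $\hat x$, and extends it to a non-expansive function via the McShane--Whitney formula $g(z)=\bigwedge_{y\in Y}(g_0(y)\oplus d(z,y))$; you instead take the distance function $f=1-d(x_*,\cdot)$ directly, which is non-expansive by the triangle inequality (this works here because $d$ is symmetric) and makes $\nxt_{b_*}f(x_1)=1$ and the value of $\nxt_{b_*}f(x_2)$ immediate, so the extension step disappears. For continuity, your exchange of the inner finite minimum with the chain supremum --- pass to the maximum of the finitely many witnessing ordinals --- replaces the paper's pigeonhole argument in which a single successor pair must recur unboundedly often along the chain; both are sound, and yours is the more elementary. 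The only point to keep explicit in a final write-up is the degenerate case where one of $\delta(x_1),\delta(x_2)$ is empty: your appeal to empty sup/inf conventions does close it, since the constant function $1\in\gammaB(d)$ (or your own $f$) then witnesses distance $1$, matching the Hausdorff value.
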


It is well-known that the kernel of the bisimulation metric, i.e., the
pairs of states with distance $0$, is exactly bisimilarity
\cite{fl:quantitative-spectrum-journal}.

\subsection{Directed Simulation Metrics}
\label{sec:directed-simulation-metrics}

In this section, we will treat simulation distance. Our logical and behavioural universes are $\mathbb{L} = (\power([0,1]^X),\subseteq)$ and $\mathbb{B} = (\dpmet(X),\le)$ with
\begin{align*}
  &\alphaS(\mathcal{F})(x_1,x_2) = \bigvee_{f\in
    \mathcal{F}} ( f(x_1) \ominus f(x_2) ) && \text{(for $\mathcal F\subseteq [0,1]^X$)}\\
  &\gammaS(d)  = \{f\in [0,1]^X \mid 
  \forall{x_1,x_2\in X}: f(x_1)\ominus f(x_2) \leq d(x_1,x_2)\} && \text{(for $d\in\dpmet (X)$)}.
\end{align*}

Now our (continuous) logic function $\loS\colon \power([0,1]^X)\to \power([0,1]^X)$ is the mapping $\mathcal F \mapsto \bigcup_{c\in A} \bigcirc_c[\cl^{\land,\mathrm{sh}}_f(\mathcal{F})]$, where $\cl^{\land,\mathrm{sh}}_f$ closes $\mathcal{F}$ under finite meets and constant shifts (whose necessity is argued in Example~\ref{ex:why-shifts}). To characterize the closure $\cS$ we use
a directed version of the McShane-Whitney extension theorem \cite{McS:ExtensionOfFunctions, W:ExtensionOfFunctions} (a
special case of enriched Kan extensions).

\begin{propositionrep}
   \label{prop:closure-simulation-metric}
  The closure $\cS = \gammaS\circ \alphaS$ on $\mathcal F$ is the set given in Proposition~\ref{prop:closure-bisimulation-metric} except that $\operatorname{op}$ is non-expansive wrt.\ the directed sup-metric.
  The co-closure $\alphaS\circ\gammaS$ is the identity.
\end{propositionrep}

\begin{proof}
  Let $g\in \cS(\mathcal{F})$. We show that $g$ is of the form
  $\operatorname{op}\circ\langle\mathcal{F}\rangle$ for a suitable
  $\operatorname{op}$. For this define
  $\operatorname{op}\colon [0,1]^\mathcal{F} \to [0,1]$ by
  \[\operatorname{op}(v) = \bigwedge_{x\in X} \big( g(x) \oplus
    \bigvee_{f\in\mathcal{F}} (v(f) \ominus f(x)) \big)\]
  Let now $u, v\in [0,1]^\mathcal{F}$ and assume without loss of
  generality that $\operatorname{op}(u) \ge \operatorname{op}(v)$,
  since otherwise
  $\operatorname{op}(u) \ominus \operatorname{op}(v) = 0$. Then
  \begin{align*}
    \operatorname{op}(u) \ominus \operatorname{op}(v) &= \bigwedge_{x\in X} \big( g(x) \oplus \bigvee_{f\in\mathcal{F}} (u(f) \ominus f(x)) \big) - \bigwedge_{x\in X} \big( g(x) \oplus \bigvee_{f\in\mathcal{F}} (v(f) \ominus f(x)) \big) \\
    &\le \bigvee_{x\in X} \bigvee_{f\in\mathcal{F}} (u(f) \ominus
    f(x)) - \bigvee_{f\in\mathcal{F}} (v(f) \ominus f(x)) \\
    &\le \bigvee_{x\in X} \bigvee_{f\in\mathcal{F}} (u(f) \ominus v(f)) \\
    &= \bigvee_{f\in\mathcal{F}} (u(f) \ominus v(f))
  \end{align*}
  It remains to show that
  $g = \operatorname{op} \circ \langle\mathcal{F}\rangle$. Fix
  $y\in X$, then
  \begin{align*}
    \operatorname{op}\circ\langle\mathcal{F}\rangle(y) &=
    \bigwedge_{x\in X} \big( g(x) \oplus \bigvee_{f\in\mathcal{F}}
    (\langle
    \mathcal{F}\rangle(y)(f)\ominus f(x)) \big) \\
    &= \bigwedge_{x\in X} \big( g(x) \oplus \bigvee_{f\in\mathcal{F}} (f(y)\ominus f(x)) \big) \\
    &\le g(y) \oplus \bigvee_{f\in\mathcal{F}} (f(y)\ominus f(y)) \\
    &= g(y)
  \end{align*}
  On the other hand, for all $x\in X$ we have
  \[g(x) \oplus \bigvee_{f\in\mathcal{F}} (f(y) \ominus f(x)) \ge g(x)
    \oplus (g(y) \ominus g(x)) \ge g(y)\] since
  $g\in \cS(\mathcal{F})$. And this implies that
  \[\operatorname{op}\circ\langle\mathcal{F}\rangle(y) = \bigwedge_{x\in X} \big( g(x) \oplus \bigvee_{f\in\mathcal{F}} (f(y) \ominus f(x)) \big) \ge g(y)\]
  Conversely, let
  $g=\operatorname{op} \circ \langle \mathcal{F} \rangle$, then
  \begin{align*}
    g(x) \ominus g(y) = \operatorname{op}(\langle
    \mathcal{F}\rangle(x)) \ominus \operatorname{op}(\langle
    \mathcal{F}\rangle(y)) \le \bigvee_{f\in\mathcal{F}} f(x) \ominus
    f(y) = \alphaS(\mathcal{F})(x, y)
  \end{align*}
  and therefore $g\in \cS(\mathcal{F})$.

  For the co-closure consider some $d\in\dpmet(X)$. Clearly $\alphaS\circ\gammaS(d) \le d$ and it suffices to prove the other inequality. Fix a point $y\in X$ and define a map $f\colon X \to [0, 1]$ by $f(x) = d(x, y)$. Then $f\in \gammaS(d)$ and therefore
  \[\alphaS\circ\gammaS(d)(x, y) \ge f(x) \ominus f(y) = d(x, y). \qedhere\]
\end{proof}

  In order to show $\cS$-compatibility of $\loS$, we first derive an
  alternative characterization of the closure in terms of certain
  normal form given below. Note that a similar statement holds in the
  context of bisimulation pseudo-metric when we replace the closure
  $\cS$ by
  $\cB$.
  \begin{propositionrep}[Normal Form]\label{prop:FormalForm}
    \label{prop:normal-form-simulation-metric}
    Let $\mathcal{F} \subseteq [0, 1]^X$ with $1\in\mathcal{F}$ and
    $f\in \cS(\mathcal{F})$. Then there is a family of functions
    $\{f^\varepsilon_{(x, y)} \mid \varepsilon>0 \text{ and } x, y\in
    X\}$, where each function $f^\varepsilon_{(x, y)}$ is a constant
    shift of a function in $\mathcal{F}$, such that
    $f = \bigvee_{\varepsilon>0} \bigwedge_{x\in X} \bigvee_{y\in X}
    f^\varepsilon_{(x, y)}$.
  \end{propositionrep}

\begin{proof}
  By our characterization of the closure
  (Proposition~\ref{prop:closure-simulation-metric}),
  $f = \operatorname{op}\circ\langle\mathcal{F}\rangle$ for some
  $\operatorname{op}\colon [0, 1]^\mathcal{F} \to [0,1]$ with
  \[\operatorname{op}(u) \ominus \operatorname{op}(v) \le \bigvee_{f\in\mathcal{F}} u(f) \ominus v(f)\]
  for all $u, v\in[0, 1]^\mathcal{F}$. Given $\varepsilon>0$ and a
  fixed pair of states $(x, y)$, let us construct a function
  $f^\varepsilon_{(x, y)}\colon X \to [0, 1]$ with
  $f^\varepsilon_{(x, y)}(x) = \operatorname{op} \circ \langle
  \mathcal{F} \rangle(x)$ and
  $f^\varepsilon_{(x, y)}(y) \ge \operatorname{op} \circ \langle
  \mathcal{F} \rangle(y) - \varepsilon$ as follows. We distinguish two
  cases:
  \begin{enumerate}
  \item First assume that $\operatorname{op} \circ \langle \mathcal{F} \rangle(y) > \operatorname{op} \circ \langle \mathcal{F} \rangle(x)$.
  \end{enumerate}
  By our assumption on the operator $\operatorname{op}$ we have
\[\operatorname{op} \circ \langle \mathcal{F} \rangle(y) \ominus \operatorname{op} \circ \langle \mathcal{F} \rangle(x) \le \bigvee_{f\in\mathcal{F}} f(y) \ominus f(x)\]
In particular there is a function
$f^\varepsilon_{(x, y)}\in\mathcal{F}$ with
\[f^\varepsilon_{(x, y)}(y) - f^\varepsilon_{(x, y)}(x) \ge \operatorname{op} \circ \langle \mathcal{F} \rangle(y) - \operatorname{op} \circ \langle \mathcal{F} \rangle(x) - \varepsilon\]
Applying a constant shift, we can arrange that $f^\varepsilon_{(x, y)}(x) = \operatorname{op} \circ \langle \mathcal{F} \rangle(x)$ and
\[f^\varepsilon_{(x, y)}(y) \ge \operatorname{op} \circ \langle \mathcal{F} \rangle(y) - \varepsilon\]
\begin{enumerate}\setcounter{enumi}{1}
\item Next assume that $\operatorname{op} \circ \langle \mathcal{F} \rangle(y) \le \operatorname{op} \circ \langle \mathcal{F} \rangle(x)$.
\end{enumerate}
In this case let $f^\varepsilon_{(x, y)}$ be the constant function with value $\operatorname{op} \circ \langle \mathcal{F} \rangle(x)$.

This completes the construction of the family $\{f^\varepsilon_{(x, y)} \mid \varepsilon>0 \text{ and } x, y\in X\}$. Note that for each $\varepsilon>0$ and each $z\in X$ we have
\[\bigwedge_{x\in X} \bigvee_{y\in X} f^\varepsilon_{(x, y)}(z) \le \bigvee_{y\in X} f^\varepsilon_{(z, y)}(z) = \operatorname{op} \circ \langle \mathcal{F} \rangle(z)\]
And thus, taking the supremum over all $\varepsilon>0$, we get
\[\bigvee_{\varepsilon>0} \bigwedge_{x\in X} \bigvee_{y\in X} f^\varepsilon_{(x, y)}(z) \le \operatorname{op} \circ \langle \mathcal{F} \rangle(z)\]
It suffices to prove the other inequality. Let $z\in X$ and $\delta>0$, then
\[\bigvee_{\varepsilon>0} \bigwedge_{x\in X} \bigvee_{y\in X} f^\varepsilon_{(x, y)}(z) \ge \bigwedge_{x\in X} \bigvee_{y\in X} f^\delta_{(x, y)}(z) \ge \bigwedge_{x\in X} f^\delta_{(x, z)}(z)\]
By construction
\[f^\delta_{(x, z)}(z) \ge \operatorname{op} \circ \langle \mathcal{F}
  \rangle(z) - \delta .\]
Hence
\[\bigvee_{\varepsilon>0} \bigwedge_{x\in X} \bigvee_{y\in X} f^\varepsilon_{(x, y)}(z) \ge \operatorname{op} \circ \langle \mathcal{F} \rangle(z) - \delta\]
and since $\delta>0$ is arbitrary, we can conclude, as desired, that
\[\bigvee_{\varepsilon>0} \bigwedge_{x\in X} \bigvee_{y\in X} f^\varepsilon_{(x, y)}(z) \ge \operatorname{op} \circ \langle \mathcal{F} \rangle(z).\qedhere\]
\end{proof}

These results enable us to show that the logic function is indeed
compatible with closure, a prerequisite for being able to derive the
corresponding behaviour function.
\begin{propositionrep}
For finitely branching transition systems, $\loS$ is $\cS$-compatible.
\end{propositionrep}

\begin{proof}
  Let $g\in\loS\circ \cS(\mathcal{F})$, then $g = \nxt_a f$ for some
  function $f\in \cS(\mathcal{F})$. Moreover, by Proposition~\ref{prop:normal-form-simulation-metric} there
  is a familiy of functions
  $\{f^\varepsilon_{(x, y)} \mid \varepsilon>0 \text{ and } x, y\in
  X\}$, where each function $f^\varepsilon_{(x, y)}$ is a constant
  shift of a function in $\mathcal{F}\cup\{1\}$, such that
  \[\operatorname{op} \circ \langle \mathcal{F} \rangle = \bigvee_{\varepsilon>0} \bigwedge_{x\in X} \bigvee_{y\in X} f^\varepsilon_{(x, y)}\]
  Using the complete distributivity of $[0, 1]^X$, the fact that
  $\nxt_c$ distributes over suprema and
  Lemma~\ref{lem:DistributivityOfNext}, it now follows that
  \begin{align*}
    g &= \nxt_a f = \nxt_a (\operatorname{op}\circ\langle \mathcal{F}\rangle) = \nxt_a \Big( \bigvee_{\varepsilon>0} \bigwedge_{x\in X} \bigvee_{y\in X} f^\varepsilon_{(x, y)} \Big)\\
    &= \nxt_a \Big( \bigvee_{\varepsilon>0} \bigvee_{h: X \to Y} \bigwedge_{x\in X} f^\varepsilon_{(x, h(x))} \Big)\\
    &= \bigvee_{\varepsilon>0} \bigvee_{h: X \to Y} \nxt_a \Big( \bigwedge_{x\in X} f^\varepsilon_{(x, h(x))} \Big)\\
    &= \bigvee_{\varepsilon>0} \bigvee_{h: X \to Y}
    \bigwedge_{\substack{X_0 \subseteq X \\ \text{finite}}} \nxt_a
    \Big( \bigwedge_{x\in X_0} f^\varepsilon_{(x, h(x))} \Big)
  \end{align*}
  Since
  $\nxt_a \Big( \bigwedge_{x\in X_0} f^\varepsilon_{(x, h(x))} \Big)
  \in \nxt_a[\cl^{\land,\mathrm{sh}}_f(\mathcal{F}\cup\{1\})] = \nxt_a[\cl^{\land,\mathrm{sh}}_f(\mathcal{F})]
  \subseteq \loS(\mathcal{F})$, we conclude that this function is
  contained in $\cS \circ \loS(\mathcal{F})$.
\end{proof}

\begin{example}
  \label{ex:why-shifts}
  We show that adding shifts to the logic function is necessary to
  obtain compatibility. Consider the metric transition system
  $(\{x,y,x_1,y_1\},\{x\xrightarrow 1 x_1,y \xrightarrow 0 y_1\},d_A)$
  with $d_A$ is an Euclidean metric over the alphabet
  $A=[0,1]$.

  Assume that $\mathcal{F} = \{f\}$ with
  $f(x)=f(y)=f(x_1) = \nicefrac{1}{2}$, $f(y_1) = 0$, where the pseudo-metric
$d = \alphaS(\mathcal{F})$ has distance $0$ for the states $x,y,x_1$
and 
it yields distance $\nicefrac{1}{2}$ between $y_1$ and all other
states. Then it is easy to see that $g$ with $g(x)=g(y)=g(x_1)=1$,
$g(y_1)=\nicefrac{1}{2}$ is contained in $\cS(\mathcal{F})$, since it
is non-expansive wrt.\thinspace $d$. 
Then $\nxt_1 g\in \loS(\cS(\mathcal{F}))$ and
  \[
    \nxt_1 g(x) \ominus \nxt_1 g(y) = (\overline{D_1}(1)\land g(x_1))
    \ominus (\overline{D_1}(0)\land g(y_1)) = (1\land 1) \ominus
    (0\land\nicefrac{1}{2}) = 1.\]
  In order for compatibility to hold,
  $\nxt_1 g$ must be contained in $\cS(\loS(\mathcal{F}))$, i.e., it
  has to be non-expansive wrt.\thinspace $\alphaS(\loS(\mathcal{F}))$. If the
  logic function does not use shifts, it only closes $\mathcal{F}$
  under finite meets and joins, which results in $f$, $0$ (empty
  join), $1$ (empty meet). For all $r\in[0,1],\bar f\in\cS(\mathcal F)$, we have $\nxt_r \bar f(x) \ominus \nxt_r\bar f(y) < \nxt_1 g(x) \ominus \nxt_1 g(y)$, which means $\nxt_1 g\not\in \cS(\loS(\mathcal{F}))$. In particular,
    \[ \nxt_{r} f(x) \ominus \nxt_{r} f(y) =
    (\overline{D_{r}}(1)\land f(x_1)) \ominus
    (\overline{D_{r}}(0)\land f(y_1)) = (\overline{D_{r}}(1)\land
    \nicefrac{1}{2}) \ominus (\overline{D_{r}}(0)\land 0) \le
    \nicefrac{1}{2}.\]
\end{example}

\begin{theoremrep}\label{thm:DirectedSimChar}
  The behaviour function $\beS$ can be
  characterized as $\beS(d)= (d_A\otimes d)_{\overrightarrow{H}} \circ (\delta\times\delta)$ for any $d\in\dpmet (X)$. In particular, $\alphaS(\mu\,\loS) = \mu\,\beS$ is the
  directed similarity metric of \cite{afs:linear-branching-metrics}. Moreover, $\beS$ is continuous for finitely branching transition systems.
\end{theoremrep}
\begin{proof}
  Fix a metric $d\in\dpmet(X)$. We need to show that for all states
  $x, y$ we have
  \[\bigvee_{(a, x')\in \delta(x)} \bigwedge_{(b, y')\in \delta(y)} d_A(a, b) \lor d(x', y')= (\alphaS\circ \loS\circ\gammaS)(d)(x, y)\]
  First note that $\cl^{\land,\mathrm{sh}}_f\circ \gammaS = \gammaS$. Hence one direction
  follows from
  \begin{align*}
    & (\alphaS\circ \loS\circ\gammaS)(d)(x, y) \\
    &=
    \bigvee_{f\in\loS\circ\gammaS(d)} (f(x) \ominus f(y)) \le
    \bigvee_{c\in A}\bigvee_{f\in\gammaS(d)} \big( \nxt_c f(x) \ominus \nxt_c f(y)\big) \\
    &\le \bigvee_{c\in A}\bigvee_{f\in\gammaS(d)} \big(
    \bigvee_{(a, x')\in\delta(x)} (\overline{D_c}(a) \land f(x'))
    \ominus \bigvee_{(b, y')\in\delta(x)} (\overline{D_c}(a) \land
    f(y')) \big)\\
    &\le \bigvee_{c\in A}\bigvee_{f\in\gammaS(d)} \bigvee_{(a,
      x')\in\delta(x)} \bigwedge_{(b, y')\in\delta(x)}  \big( (1 -
    d_A(c, a)) \land f(x') \ominus (1 - d_A(c, b)) \land f(y') \big)\\
    &\le \bigvee_{c\in A}\bigvee_{f\in\gammaS(d)} \bigvee_{(a,
      x')\in\delta(x)} \bigwedge_{(b, y')\in\delta(x)} \big( d_A(a, b) \lor
    (f(x') \ominus f(y'))\big)  \\
    &\le \bigvee_{(a, x')\in\delta(x)} \bigwedge_{(b, y')\in\delta(x)}
    (d_A(a, b) \lor d(x', y')),
  \end{align*}
  where we used the triangle inequality in the second-last step.

  For the other direction, consider
  \[r := \bigvee_{(a, x')\in \delta(x)} \bigwedge_{(b, y')\in
      \delta(y)} (d_A(a, b) \lor d(x', y')) \] We can assume that
  $r>0$, otherwise there is nothing to prove. Let $\varepsilon>0$ with
  $\varepsilon<r$ and take $(\hat{a}, \hat{x})\in\delta(x)$ such that
  \[d_A(\hat{a}, b) \lor d(\hat{x}, y') \ge r - \varepsilon, \quad \text{for all
  $(b, y')\in\delta(y)$}.\]
  Let $Y := \{y'\in \ter(\delta(y)) \mid d(\hat{x}, y') \ge r -
  \varepsilon\}$ and define $g_0\colon Y\cup\{\hat{x}\} \to [0,1]$ by
  \[g_0(z) = \begin{cases} 1 & \text{if } z = \hat{x} \\ 1 -
      (r-\varepsilon) & \text{if } z\in Y\end{cases}\] Then $g_0$ is
  non-expansive wrt.\thinspace $d$ and can be extended to a non-expansive function
  $g\colon X \to [0,1]$ by
  $ g(z) = \bigwedge_{y\in Y} (g_0(y) \oplus d(z, y))$. By definition
  $\nxt_{\hat{a}} g\in\loS\circ\gammaS(d)$ and therefore
  \[(\alphaS\circ\loS\circ\gammaS)(d)(x, y) =
    \bigvee_{f\in\loS\circ\gammaS(d)} (f(x) \ominus f(y)) \ge
    \nxt_{\hat{a}} g(x) \ominus \nxt_{\hat{a}} g(y). \]
  Furthermore, it holds that
  \begin{align*}
    \nxt_{\hat{a}} g(x) &= \bigvee_{(a,x')\in\delta(x)} \big(
    (1-d_A(\hat{a},a)) \land g(x') \big)  \ge (1-d_A(\hat{a},\hat{a})) \land g(\hat{x}) = 1 \\
    \nxt_{\hat{a}} g(y) &= \bigvee_{(b,y')\in\delta(y)} \big(
    (1-d_A(\hat{a},b)) \land g(y') \big) \le 1-(r-\varepsilon),
  \end{align*}
  For this have to show that
  $(1-d_A(\hat{a},b)) \land g(y') \le 1-(r-\varepsilon)$ for all
  $(b,y')\in \delta(y)$. We distinguish the following cases: if
  $y'\in Y$, then $g(y') = 1-(r-\varepsilon)$ and the inequality
  follows. If however $y'\not\in Y$, then by definition
  $d(\hat{x},y') < r-\varepsilon$ and hence -- since
  $d_A(\hat{a},b)\lor d(\hat{x},y') \ge r-\varepsilon$, we must have
  $d_A(\hat{a},b) \ge r-\varepsilon$, which implies that
  $1-d_A(\hat{a},b) \le 1-(r-\varepsilon)$ and again the inequality
  follows.
  Combined we have that
  $ (\alphaS\circ\loS\circ\gammaS)(d)(x, y) \ge r-\varepsilon $ and
  since $\varepsilon>0$ is arbitrary, this implies that
  $(\alphaS\circ\loS\circ\gammaS)(d)(x, y) \ge r$.

  To show that $\beS$ is continuous, assume that $(X,\to,d_A)$ is finitely branching. Let $(d_\beta)_{\beta<\kappa}$ be an ascending sequence in $\dpmet(X)$, by the monotonicity of the function $\beS$ it is clear that $\beS(\bigvee_{\beta<\kappa} d_\beta) \ge \bigvee_{\beta<\kappa} \beS(d_\beta)$. For the other direction, fix $x, y\in X$ and use the above characterization of $\beS$ to derive
\[\beS(\bigvee_{\beta<\kappa} d_\beta)(x, y) = \bigvee_{(a, x')\in
    \delta(x)} \bigwedge_{(b, y')\in \delta(y)} \Big( d_A(a, b) \lor
  (\bigvee_{\beta<\kappa} d_\beta)(x', y') \Big).\]
Then finitely branching ensures that the supremum is achieved at some $(\hat a, \hat x)\in\delta(x)$, i.e.,
\[\beS(\bigvee_{\beta<\kappa} d_\beta)(x, y) = \bigwedge_{(b,
    y')\in\delta(y)} \Big( d_A(\hat a, b) \lor
  (\bigvee_{\beta<\kappa} d_\beta)(\hat x, y') \Big).\]
For every $\beta<\kappa$ choose some $(b_\beta, y_\beta)\in\delta(y)$, such that $d_A(\hat a, b_\beta) \lor d_\beta(\hat x, y_\beta)$ is minimal. Again, since the transition system is finitely branching, there is a single $(\hat b, \hat y)\in\delta(y)$, such that $b_\beta = \hat b$ and $y_\beta = \hat y$ for unboundedly many $\beta<\kappa$. From the definition of $\beS$ we get
\begin{align*}
\beS(d_\beta)(x, y) &= \bigvee_{(a, x')\in\delta(x)} \bigwedge_{(b,
  y')\in\delta(y)} \big( d_A(a, b) \lor d_\beta(x', y') \big)\\
   &\ge \bigwedge_{(b,
  y')\in\delta(y)} \big( d_A(\hat a, b) \lor d_\beta(\hat x, y') \big) \\
&= d_A(\hat a, b_\beta) \lor d_\beta(\hat x, y_\beta).
\end{align*}
Thus it suffices to show that
\[\bigvee_{\beta<\kappa} \big( d_A(\hat a, b_\beta) \lor d_\beta(\hat x,
  y_n) \big) \ge d_A(\hat a, \hat b) \lor (\bigvee_{\beta<\kappa} d_\beta)(\hat x, \hat y).\]
Fix $\beta<\kappa$ and choose $\eta \ge \beta$ with $b_\eta = \hat b$ and $y_\eta = \hat y$, then
\[d_A(\hat a, \hat b) \lor d_\beta(\hat x, \hat y) \le d_A(\hat a, \hat b) \lor d_\eta(\hat x, \hat y) = d_A(\hat a, b_\eta) \lor d_\eta(\hat x, y_\eta).\qedhere\]
\end{proof}

Every metric transition system can be viewed as a classical one by
forgetting the metric on the labels. In addition, we can first compute the simulation metric of the quantitative system and then discretize the values to obtain qualitative simulation equivalence. %
\begin{propositionrep}
  Consider the Galois connection $\alpha\dashv \gamma$ with
  $\alpha\colon \dpmet(X) \to \mathit{Pre}(X)$ and
  $\gamma\colon \mathit{Pre}(X) \to \dpmet(X)$ given by the maps
  $\alpha(d) = \{(x, y) \mid d(x, y) = 0\}$, $\gamma(R) = 1 - \chi_R$.
  If the transition system is finitely branching, then
  $\alpha \circ \beS(d) = \bes \circ \alpha(d)$ for every
  $d\in\dpmet(X)$. In particular $\mu\,\bes = \alpha(\mu\,\beS)$ due
  to Theorem~\ref{thm:fixpoint-preservation}.

\end{propositionrep}
\begin{proof}
Fix $d\in\dpmet(X)$, it is easy to see that $\bes\circ\alpha(d) \subseteq \alpha\circ\beS(d)$. For the other direction fix a pair $(x, y) \in \alpha\circ\beS(d)$, then
\[0 = \beS(d)(x, y) = \bigvee_{(a, x')\in \delta(x)} \bigwedge_{(b, y')\in \delta(y)} d_A(a, b) \lor d(x', y')\]
We want to show that
\[(x, y)\in\bes \circ \alpha(d) = \{(x, y) \mid \forall x'\in\delta_a(x)\, \exists y'\in\delta_a(y)\colon d(x',y') = 0\}\]
Fix $a\in A$ and consider $x'\in\delta_a(x)$. Since the transition system is finitely branching, there is $(b, y')\in\delta(y)$ with $d_A(a, b) \lor d(x', y') = 0$, that means $d_A(a, b) = 0 = d(x', y')$. But $d_A$ is a proper metric, thus $a = b$ and we are done.
\end{proof}

We conclude this section by the observation that the characterization in Theorem~\ref{thm:DirectedSimChar}
allows us to eliminate constant shifts from the simulation logic.

\begin{corollaryrep}
  Let $\lo'\colon \power([0,1]^X)\to \power([0,1]^X)$ be the variant
  of $\loS$, where we do not close under constant shifts. If the
  transition system is finitely branching, then $\lo'$ is still sound
  and expressive for simulation, that means
  $\alphaS(\mu\,\lo') = \mu\,\beS$.
\end{corollaryrep}
\begin{proof}
  It is clear that $\lo'$ is sound, it suffices to show that it is
  still expressive. To this end let $x, y\in X$ and define
  $r := \alpha(\mu\,\lo')(x, y)$. Also let $\varepsilon > r$ and
  define $R = \{(p, q) \mid \alpha(\mu\,\lo') < \varepsilon\}$, then
  $(x, y)\in R$ and we show that $R$ is an $\varepsilon$-simulation.

  Let $(p, q)\in R$ and assume that $p \xrightarrow{a} p'$, we need to
  show that there is some $q'$ with $q \xrightarrow{b} q'$, such that
  $d_A(a, b) \le \varepsilon$ and $(p', q')\in R$. Assume for a
  contradiction that there is no such $q'$ and let
  $(b_1, q_1),\dots,(b_n,q_n)$ be the finitely many outgoing
  transitions from $q$. For each index $i$, we ether have
  $\alpha(\mu\,\lo')(p', q_i) \ge \varepsilon$ or
  $d_A(a, b_i) \ge \varepsilon$. In the first case, we choose, without
  loss of generality, a function $f_i\in\mu\,\lo'$ with
  $f_i(p') - f_i(q_i) \ge \varepsilon$, in the other case we let
  $f_i = 1$. Now consider
  \[f := \nxt_a(f_1 \land \dots \land f_n) \in \mu\,\lo'\] Let $j$ be
  the index, for which $f_j(p')$ is minimal, and note that
  \begin{align*}
    f(p) - f(q) &\ge (f_1 \land \dots \land f_n)(p') - \bigvee_{(b, q')\in\delta(q)} \big(\overline{D_c}(a) \land (f_1 \land \dots \land f_n)(q') \big)\\
    &= \bigwedge_{(b, q')\in\delta(q)} f_j(p') -
    \big(\overline{D_c}(a) \land (f_1 \land \dots \land f_n)(q') \big)
    \ge \varepsilon
  \end{align*}
  a contradiction. So $R$ is indeed an $\varepsilon$-simulation and
  therefore
  \[\mu\,\beS(x, y) = \bigwedge\{\delta\in[0,1] \mid\ \text{there is a
      $\delta$-simulation $R$ with $(x, y)\in R$}\} \le \varepsilon\]
  Since $\varepsilon>r$ is arbitrary, it follows that
  $\mu\,\beS(x, y) \le r$.
\end{proof}
\subsection{Directed Trace Metrics}
\label{sec:directed-trace-metrics}

In this section, we treat the directed version of the (decorated)
trace distance whose fixpoint characterization is novel and, at the
same time, the most complex scenario considered in this paper. We will sometimes omit the adjective `directed'.

Based on the qualitative case of trace equivalence
(Section~\ref{sec:trace-equivalence}), we fix the logical and behavioural
universes to be the lattices
$\mathbb{L} = (\power([0,1]^X),\subseteq)$ and
$\mathbb{B} = (\dpmet(\power(X)),\le)$ with
\begin{align*}
  &\alphaT(\mathcal{F})(X_1,X_2) =\ \bigvee_{f\in
    \mathcal{F}} (\tilde f(X_1) \ominus \tilde f(X_2)) && \text{(for $\mathcal F\subseteq [0,1]^X$)}\\
  &\gammaT(d)  =\ \{f\in [0,1]^X \mid \tilde f \text{ is non-expansive wrt.\thinspace} d\} && \text{(for $d\in\dpmet (\power X)$)}.
\end{align*}
It is easy to see that $\alphaT(\mathcal{F})$ is always
join-preserving in its first argument. Notice that we could have
defined $\mathbb{L}$ as those functions in $[0,1]^{\mathcal{P}(X)}$
that are join-preserving. As a result, one expects that the closure $\cT$ may close a
set $\mathcal{F}$ under all non-expansive and join-preserving
operators. However, this is unfortunately not true as witnessed by the
following counterexample. This points to the more fundamental problem
that there is no McShane-Whitney type result for non-expansive, join-preserving
operators: a non-expansive, join-preserving operator defined on some subset does not
necessarily have an extension to the whole
space which is both non-expansive and join-preserving.
\begin{example}\label{ex:ClosureMetricTrace}
Let $X=\{x,y,z\}$ and $\mathcal{F}=\{f_1, f_2\}\subseteq[0,1]^X$, where $f_1$ and $f_2$ are the mappings $x,y\mapsto 1, z\mapsto 0$ and $x,z\mapsto 0,y\mapsto \nicefrac 1 2$, respectively. Now consider a map $g\colon X \to [0,1]$ with $g(x)=\nicefrac{1}{2}$,
$g(y) = 1$ and $g(z)=0$. Then it is easily seen that
$g\in \cT(\mathcal{F})$. However, we claim that there is no join-preserving and
non-expansive operator $\operatorname{op}\colon [0,1]^2 \to [0,1]$
such that $g = \operatorname{op} \circ \langle f_1, f_2\rangle$. Assume otherwise that
$\operatorname{op}(f_1(u),f_2(u)) = g(x)$ (for $u\in X$), which implies
$\operatorname{op}(1,0) = \nicefrac{1}{2}$,
$\operatorname{op}(1,\nicefrac{1}{2}) = 1$, and $\operatorname{op}(0,0) =0$. Due to non-expansivity of $\operatorname{op}$ we conclude that
$\operatorname{op}(0,\nicefrac{1}{2}) \le \nicefrac{1}{2}$, which leads to the following contradiction:
\[1 = \operatorname{op}(1, \nicefrac{1}{2}) = \operatorname{op}((1, 0)
  \lor (0, \nicefrac{1}{2})) = \operatorname{op}(1,0) \lor
  \operatorname{op}(0,\nicefrac{1}{2}) = \nicefrac{1}{2}.\]
\end{example}

As (continuous) logic function
$\loT\colon \power([0,1]^X)\to \power([0,1]^X)$ we define
$\loT(\mathcal{F}) = \bigcup_{a\in A}
\bigcirc_a[\cl^\mathrm{sh}(\mathcal{F})]\cup \{1\}$, where
$\cl^\mathrm{sh}$ closes a set of functions under constant shifts, as
in Section~\ref{sec:directed-simulation-metrics}, and $1$ is the constant
$1$-function. Typically, operators of a `metric' logic ought to
preserve non-expansiveness, which is not the case for the shift
$f\mapsto f\oplus r$; since it might increase the distance of a
non-empty set to the empty set. This is not problematic in our case,
since the distance of $\emptyset$ to any other set is $1$ anyway,
induced by the constant operator $1$. (Note that the empty join is
$0$.) We will show in Theorem~\ref{thm:FixedPointEquationForGeneralTraceEquivalences} that our logic characterizes trace distance, i.e., $\alphaT(\mu\loT) = d_T$, where $d_T := (d_{\mathrm{Tr}})_{\overrightarrow{H}}\circ (\traces\times\traces)$.

The co-closure, on the other hand, is straightforward to characterize.
\begin{propositionrep}
  \label{prop:co-closure-metric-trace}
  The co-closure $\alphaT\circ\gammaT$ maps a directed pseudo-metric $d$
  to the greatest directed pseudo-metric $d'$ such that $d' \le d$ and
  $d'$ is join-preserving in
  its first argument.
\end{propositionrep}

\begin{proof}
  It suffices to show that $\alphaT\circ\gammaT$ is the identity on
  directed pseudo-metrics $d$ that are already join-preserving in the
  first argument. For any directed pseudo-metric $d$ we have
  $\alphaT\circ\gammaT(d) \le d$, we show the other inequality in case
  $d$ is join-preserving in the first argument. Let
  $Y_1, Y_2\subseteq X$, we will construct a function
  $f\colon X \to [0,1]$ with $f\in\gammaT(d)$ that witnesses the
  distance between $Y_1$ and $Y_2$. Define
  \[f(x) = d(\{x\}, Y_2)\] Given any two sets $X_1, X_2\subseteq X$,
  the preservation of joins and the triangle inequality yield
  \[\tilde f(X_1) \ominus \tilde f(X_2) = d(X_1, Y_2) \ominus
    d(X_2, Y_2) \le d(X_1, X_2)\]
  So indeed $f\in\gammaT(d)$ and therefore
  \[\alphaT\circ\gammaT(d)(Y_1, Y_2) \ge \tilde f(Y_1) \ominus \tilde
    f(Y_2) = d(X_1, Y_2). \qedhere\]
\end{proof}

Next we turn our attention to the compatibility of our logic
function. Here we have to work around the fact that the closure can not be
easily characterized, neither in terms of operators nor in terms of a suitable normal form (cf. Proposition~\ref{prop:FormalForm}). Still, compatibility holds, even for transition systems of arbitrary branching type.

  \begin{lemmarep}
    \label{lem:join-modality}
    Let $h\colon X\to[0,1]$, $c\in A$ and $Y\subseteq X$. Then it
    holds that
    \[ \widetilde{\nxt_c h}(Y) = \bigwedge_{\Delta\subseteq\delta[Y]}
      \big( \widetilde{\overline{D_c}}(\lab(\delta[Y]\setminus\Delta))
      \lor \tilde h(\ter(\Delta)) \big) \]
  \end{lemmarep}
  \begin{proof}
    Using distributivity we obtain:
    \begin{align*}
      \widetilde{\nxt_c h}(Y) &= \bigvee_{x\in Y} \bigvee_{(a,x')\in
        \delta(x)} (\overline{D_c}(a)\land h(x')) = \bigvee_{(a,x')\in
        \delta[Y]} (\overline{D_c}(a)\land h(x'))  \\
      &= \bigwedge_{\Delta\subseteq \delta[Y]} \Big(
      \bigvee_{(a,x')\in \delta[Y]\backslash \Delta} \overline{D_c}(a)
      \lor
      \bigvee_{(a,x')\in \Delta} h(x') \Big) \\
      & = \bigwedge_{\Delta\subseteq \delta[Y]} \Big( \bigvee_{a\in
        \lab(\delta[Y]\backslash \Delta)} \overline{D_c}(a) \lor
      \bigvee_{x'\in \ter(\Delta)} h(x') \Big) \\
      & = \bigwedge_{\Delta\subseteq \delta[Y]} \Big(
      \widetilde{\overline{D_c}}(\lab(\delta[Y]\backslash \Delta)) \lor
      \tilde{h}(\ter(\Delta)) \Big). \qedhere
    \end{align*}
  \end{proof}

\begin{propositionrep}
  \label{prop:compatible-trace-metric}
  The logic function $\loT$ is $\cT$-compatible.
\end{propositionrep}
\begin{proof}
Let $g\in\loT\circ \cT(\mathcal{F})$, then $g = \nxt_c h$ for some
function $h\in \cl^\mathrm{sh}(\cT(\mathcal{F}))$. Let $X_1, X_2\subseteq X$, we need to show that the distance witnessed by $g$ between $X_1$ and $X_2$ satisfies
\[\tilde g(X_1) \ominus \tilde g(X_2) \le \bigvee_{f\in
    \loT(\mathcal{F})} \big( \tilde f(X_1) \ominus \tilde f(X_2)
  \big) \]
First, using Lemma~\ref{lem:join-modality}, compute
{\allowdisplaybreaks
\begin{align}
\nonumber &\tilde g(X_1) \ominus \tilde g(X_2) = \widetilde{\nxt_c h}(X_1) \ominus \widetilde{\nxt_c h}(X_2)\\
\nonumber &= \Bigg( \bigvee_{x_1 \in X_1} \bigvee_{(a,x')\in\delta(x_1)} \big(
\overline{D_c}(a) \land h(x') \big) \Bigg) \ominus \Bigg(
\bigwedge_{\Delta\subseteq\delta[X_2]} \big( \widetilde{\overline{D_c}}(\lab(\delta[X_2]\backslash \Delta)) \lor \tilde
h(\ter(\Delta)) \big) \Bigg) \\
\nonumber &= \bigvee_{x_1 \in X_1} \bigvee_{(a,x')\in\delta(x_1)}
\bigvee_{\Delta\subseteq\delta[X_2]}  \Big( \big( \overline{D_c}(a)
\land h(x') \big) \ominus \big( \widetilde{\overline{D_c}}(\lab(\delta[X_2]\backslash \Delta)) \lor \tilde
h(\ter(\Delta))\big) \Big)\\
&\leq \bigvee_{x_1 \in X_1} \bigvee_{(a,x')\in\delta(x_1)}
\bigvee_{\Delta\subseteq\delta[X_2]} \Big( \overline{D_c}(a) \ominus
\widetilde{\overline{D_c}}(\lab(\delta[X_2]\backslash \Delta)) \Big)
\land \Big( h(x') \ominus \tilde h(\ter(\Delta)) \Big). \label{eq:loT:Comp1}
\end{align}
}
In the penultimate step, we use the following inequality (for $r_1,r_2,r_3,r_4\in[0,1]$) which is due to $r\ominus\_ \dashv \_\oplus r$ and the fact that both $\_\ominus r$ and $\lor$ are monotonic.
\[
(r_1 \land r_2) \ominus (r_3 \lor r_4) =
 ((r_1\land r_2)\ominus r_3) \lor ((r_1\land r_2) \ominus r_4)
  \le (r_1\ominus r_3) \lor (r_2 \ominus r_4)
\]
Now fix a state $x_1\in X_1$, a transition $(a, x')\in\delta(x_1)$, and
$\Delta\subseteq\delta[X_2]$. We claim that
\begin{equation}\label{eq:loT:Comp2}
  h(x') \ominus \tilde h(\ter(\Delta)) \le \bigvee_{\substack{f\in
      s(\mathcal{F}) \\ f(x') = 1}} (f(x') \ominus \tilde
  f(\ter(\Delta))).
\end{equation}
To see this, without loss of generality, let $\Delta\neq\emptyset$. Moreover, let $h=h'\circledast r$, for some $h'\in\cT(\mathcal F)$ and $\circledast\in\{\oplus,\ominus\}$. Then we derive and use $h'\in\cT(\mathcal F)$ in the fourth step.
{\allowdisplaybreaks
\begin{align*}
  h(x') \ominus \tilde h(\ter(\Delta)) =& \bigwedge_{x''\in\ter(\Delta)} h(x') \ominus h(x'')\\
  =& \bigwedge_{x''\in\ter(\Delta)} ( h'(x') \circledast r) \ominus
  ( h'(x'') \circledast r) \\
  =& \bigwedge_{x''\in\ter(\Delta)} h'(x') \ominus h'(x'')\\
  =&\ h'(x') \ominus \tilde{h'}(\ter(\Delta)) \\
  \le&\ \bigvee_{f\in\mathcal F} (f(x') \ominus \tilde f(\ter(\Delta)))\\
  \le&\ \bigvee_{\substack{f\in
      \cl^\mathrm{sh}(\mathcal{F}) \\ f(x') = 1}} (f(x') \ominus \tilde
  f(\ter(\Delta))) .
\end{align*}}
Furthermore,
$\overline{D_c}(a) \ominus \widetilde{\overline{D_c}}(B)$ (for any $B\subseteq A$) is maximal
when $c = a$:
{\allowdisplaybreaks
\begin{align*}
  \overline{D_c}(a)\ominus \widetilde{\overline{D_c}}(B) & =
  (1-d_A(c,a)) \ominus \bigvee_{b\in B}(1-d_A(c,b)) \\
  &\le \bigwedge_{b\in B}\big((1-d_A(c,a)) \ominus (1-d_A(c,b))\big)
  \\
  &\le \bigwedge_{b\in B} d_A(a,b) = \bigwedge_{b\in B}
  (1-(1-d_A(a,b))) \\
  &= 1 \ominus \bigvee_{b\in B} (1-d_A(a,b)) = \overline{D_a}(a)
  \ominus \widetilde{\overline{D_a}}(B).
\end{align*}
}
Now using the above observation and \eqref{eq:loT:Comp2} in \eqref{eq:loT:Comp1} we get:
{\allowdisplaybreaks
\begin{align*}
&\Big( \overline{D_c}(a) \ominus
\widetilde{\overline{D_c}}(\lab(\delta[X_2]\backslash \Delta)) \Big)
\land \Big( h(x') \ominus \tilde h(\ter(\Delta)) \Big) \\
&\le \Big( \overline{D_a}(a) \ominus \widetilde{\overline{D_a}}(\lab(\delta[X_2]\backslash \Delta)) \Big) \land \Big( h(x') \ominus \tilde h(\ter(\Delta)) \Big)\\
&\le \bigvee_{\substack{f\in s(\mathcal{F}) \\ f(x') = 1}} \Big(
\big( \overline{D_a}(a) \ominus
\widetilde{\overline{D_a}}(\lab(\delta[X_2]\backslash \Delta)) \big)
\land \big( f(x') \ominus \tilde f(\ter(\Delta)) \big) \Big) \\
&= \bigvee_{\substack{f\in s(\mathcal{F}) \\ f(x') = 1}} \Big( 1
\ominus \big( \widetilde{\overline{D_a}}(\lab(\delta[X_2]\backslash
\Delta)) \lor \tilde f(\ter(\Delta)) \big) \Big) \\
&= \bigvee_{\substack{f\in s(\mathcal{F}) \\ f(x') = 1}} \Big(
\big(\overline{D_a}(a) \land f(x') \big) \ominus \big(
\widetilde{\overline{D_a}}(\lab(\delta[X_2]\backslash \Delta)) \lor
\tilde f(\ter(\Delta)) \big) \Big) \\
&\le \bigvee_{f\in s(\mathcal{F})} \Big( \big(\overline{D_a}(a) \land
f(x') \big) \ominus \big(
\widetilde{\overline{D_a}}(\lab(\delta[X_2]\backslash \Delta)) \lor
\tilde f(\ter(\Delta)) \big) \Big) \\
&\le \bigvee_{f\in s(\mathcal{F})} \big( \widetilde{\nxt_a f}(X_1)
\ominus \widetilde{\nxt_a f}(X_2) \big) \\
&\le \bigvee_{f\in \loT(\mathcal{F})} \big( \tilde f(X_1) \ominus
\tilde f(X_2) \big).
\end{align*}}
Since $x_1\in X_1$, $(a, x')\in\delta(x_1)$ and $\Delta\subseteq\delta[X_2]$ are arbitrary, we get
\[\tilde g(X_1) \ominus \tilde g(X_2) \le \bigvee_{f\in
    \loT(\mathcal{F})} \big( \tilde f(X_1) \ominus \tilde f(X_2)
  \big).\qedhere \]
\end{proof}

Now we can characterize the behaviour function as follows. To the best
of our knowledge, this is the first time that a fixpoint function for
trace metrics on the powerset of the state space has been
established. There is also a fixpoint characterization given in
\cite{fl:quantitative-spectrum-journal} although on an auxiliary
lattice which serves as a memory.

\begin{theoremrep}
  \label{thm:behaviour-trace-metric}
  Let $d\in\dpmet(\power(X))$ such that $d$ is join preserving in its first
  argument and $d(X_1,\emptyset)=1$ for every non-empty set
  $X_1\subseteq X$. Then the behaviour function $\beT$ can be
  characterized by the conditional equation:
  $\beT(d)(X_1,\emptyset)=1$ if $X_1\neq\emptyset$ and otherwise
  \[
    \beT(d)(X_1,X_2) = \bigvee_{(a, x')\in\delta[X_1]}
    \bigvee_{\Delta \subseteq \delta[X_2]}
    \Big( \bigwedge_{b\in\lab(\Delta)} d_A(a, b) \land
    d(\{x'\},\ter(\delta[X_2]\setminus\Delta))
    \Big).
  \]
  Moreover $\beT$ is continuous, independent of the branching type
  of the transition system.
\end{theoremrep}
\begin{proof}
  Given $d$, a join-preserving directed pseudo-metric on
  $\mathcal{P}(X)$ that satisfies $d_\emptyset\le d$, we want to
  characterize
  \begin{align*}
    \alphaT(\loT(\gammaT(d))) &= \alphaT(\bigcup_{c\in A}
    \nxt_c[\cl^\mathrm{sh}(\gammaT(d))]\cup\{1\}) = \bigvee_{c\in A}
    \alpha_T(\nxt_c[\gammaT(d)]) \lor \alphaT(\{1\}) \\
    &= \bigvee_{c\in A}
    \alpha_T(\nxt_c[\gammaT(d)]) \lor d_\emptyset
  \end{align*}
  In particular we use that
  $\cl^\mathrm{sh}(\gammaT(d)) = \gammaT(d)$, since the shift of
  function $f$ such that $\tilde{f}$ is non-expansive wrt. $d$ is
  still non-expansive. This requires the fact that
  $d_\emptyset \le d$, since otherwise $f\oplus r$ might induce a
  larger distance than $f$ if the second set is empty.

  The join with $d_\emptyset$ ensures that $\beT(d)(X_1,\emptyset)=1$
  if $X_1\neq\emptyset$.

  It remains to characterize $\alpha_T(\nxt_c[\gammaT(d)])$:
  \begin{align*}
    \alpha_T(\nxt_c[\gammaT(d)])(X_1,X_2) &= \bigvee_{f\in\gammaT(d)}
    \big( \widetilde{\nxt_c f}(X_1)\ominus \widetilde{\nxt_c f}(X_2)
    \big) \\
    &= \bigvee_{f\in\gammaT(d)} \big(\bigvee_{x\in X_1} \nxt_c
    f(x)\ominus \widetilde{\nxt_c f}(X_2) \big) \\
    &= \bigvee_{x\in
      X_1} \bigvee_{f\in\gammaT(d)} \big(\nxt_c f(x)\ominus
    \widetilde{\nxt_c f}(X_2) \big)
  \end{align*}
  We fix $x\in X_1$, $f\in \gammaT(d)$ and continue to transform,
  using the definition of $\nxt_c$ and Lemma~\ref{lem:join-modality}:
  \begin{align*}
    &\nxt_c f(x)\ominus \widetilde{\nxt_c f}(X_2) \\
    &= \bigvee_{(a,x')\in \delta(x)} \big( \overline{D_c}(a) \land f(x') \big)
    \ominus \bigwedge_{\Delta\subseteq \delta[X_2]} \big(
    \widetilde{\overline{D_c}}(\lab(\Delta)) \lor
    \tilde{f}(\ter(\delta[X_2]\backslash\Delta)) \big) \\
    &= \bigvee_{(a,x')\in\delta(x)} \bigvee_{\Delta\subseteq
      \delta[X_2]} \Big( \big( \overline{D_c}(a) \land f(x') \big)
    \ominus \big( \widetilde{\overline{D_c}}(\lab(\Delta)) \lor
    \tilde{f}(\ter(\delta[X_2]\backslash\Delta)) \big) \Big) \\
    &= \bigvee_{(a,x')\in\delta(x)} \bigvee_{\Delta\subseteq
      \delta[X_2]} \Big( \big( \overline{D_c}(a) \ominus
    \widetilde{\overline{D_c}}(\lab(\Delta))\big) \land
    \big(f(x') \ominus \tilde{f}(\ter(\delta[X_2]\backslash\Delta))
    \big) \\
    & \qquad\qquad\qquad\qquad \land \big(
    f(x') \ominus \widetilde{\overline{D_c}}(\lab(\Delta)) \big) \land
    \big( \overline{D_c}(a) \ominus
    \tilde{f}(\ter(\delta[X_2]\backslash\Delta)) \big) \Big) \\
    &\le \bigvee_{(a,x')\in\delta(x)} \bigvee_{\Delta\subseteq
      \delta[X_2]} \Big( \big( \overline{D_c}(a) \ominus
    \widetilde{\overline{D_c}}(\lab(\Delta))\big) \land
    d(\{x'\},\ter(\delta[X_2]\backslash\Delta)) \Big),
  \end{align*}
  due to non-expansiveness of $\tilde{f}$.

  We show that we can choose $f$ such that the terms below the suprema
  in the last inequality are in fact equal.  For this we fix
  $(a,x')\in\delta(x)$ and $\Delta\subseteq\delta[X_2]$. We can assume
  that
  $\overline{D_c}(a) \ge \widetilde{\overline{D_c}}(\lab(\Delta))$ and
  $f(x') \ge \tilde{f}(\ter(\delta[X_2]\backslash\Delta))$, since
  otherwise both sides of the inequality are $0$ anyway.

  We define $g(x) = d(\{x\},\ter(\delta[X_2]\backslash\Delta))$ and
  add or subtract a constant from $g$ to obtain a function $f$ with
  $f(x') = \overline{D_c}(a)$. The function $g$ is always
  non-expansive (see proof of
  Proposition~\ref{prop:co-closure-metric-trace}) and $f$ is
  non-expansive wrt.\thinspace $d$ due to the condition $d_\emptyset \le
  d$. Furthermore,
  \begin{align*}
    &\big( \overline{D_c}(a) \ominus
    \widetilde{\overline{D_c}}(\lab(\Delta))\big) \land \big(f(x')
    \ominus \tilde{f}(\ter(\delta[X_2]\backslash\Delta)) \big) \\
    &= \big( \overline{D_c}(a) \ominus
    \widetilde{\overline{D_c}}(\lab(\Delta))\big) \land \big(g(x')
    \ominus \tilde{g}(\ter(\delta[X_2]\backslash\Delta)) \big) \\
    &= \big( \overline{D_c}(a) \ominus
    \widetilde{\overline{D_c}}(\lab(\Delta))\big) \land
    d(\{x'\},\ter(\delta[X_2]\backslash\Delta)),
  \end{align*}
  because shifting can only decrease the value of
  $f(x') \ominus \tilde{f}(\ter(\delta[X_2]\backslash\Delta))$ wrt.\
  the value for $g$ if we subtract a constant from $g$, but in this
  case
  $\overline{D_c}(a) \ominus \widetilde{\overline{D_c}}(\lab(\Delta))$
  is the smaller term and will stay smaller or equal.

  More concretely, in this case $e = g(x')-\overline{D_c}(a)$ will be
  subtracted and $\tilde{g}(\ter(\delta[X_2]\backslash\Delta)) \le e$,
  such that
  $\tilde{f}(\ter(\delta[X_2]\backslash\Delta)) =
  \tilde{g}(\ter(\delta[X_2]\backslash\Delta))\ominus e = 0$. But this
  means that
  $\overline{D_c}(a) \ominus \widetilde{\overline{D_c}}(\lab(\Delta))
  \le \overline{D_c}(a) = f(x') = f(x') \ominus
  \tilde{f}(\ter(\delta[X_2]\backslash\Delta)) \le g(x') \ominus
  \tilde{g}(\ter(\delta[X_2]\backslash\Delta))$.

  Hence, since $f(x') = \overline{D_c}(a)$:
  \begin{align*}
    &\big( \overline{D_c}(a) \ominus
    \widetilde{\overline{D_c}}(\lab(\Delta))\big) \land
    \big(f(x') \ominus \tilde{f}(\ter(\delta[X_2]\backslash\Delta))
    \big) \\
    & \qquad \land \big(
    f(x') \ominus \widetilde{\overline{D_c}}(\lab(\Delta)) \big) \land
    \big( \overline{D_c}(a) \ominus
    \tilde{f}(\ter(\delta[X_2]\backslash\Delta)) \big) \\
    &= \big( \overline{D_c}(a) \ominus
    \widetilde{\overline{D_c}}(\lab(\Delta))\big) \land
    \big(f(x') \ominus \tilde{f}(\ter(\delta[X_2]\backslash\Delta))
    \big) \\
    & \qquad \land \big(
    \overline{D_c}(a) \ominus \widetilde{\overline{D_c}}(\lab(\Delta)) \big) \land
    \big(f(x') \ominus \tilde{f}(\ter(\delta[X_2]\backslash\Delta))
    \big) \\
    &= \big( \overline{D_c}(a) \ominus
    \widetilde{\overline{D_c}}(\lab(\Delta))\big) \land
    d(\{x'\},\ter(\delta[X_2]\backslash\Delta))
  \end{align*}
  Everything combined, we obtain:
  \begin{align*}
    &\alpha_T(\nxt_c[\gammaT(d)])(X_1,X_2) \\
    &= \bigvee_{x\in X_1}
    \bigvee_{(a,x')\in\delta(x)} \bigvee_{\Delta\subseteq \delta[X_2]}
    \Big( \big( \overline{D_c}(a) \ominus
    \widetilde{\overline{D_c}}(\lab(\Delta))\big) \land
    d(\{x'\},\ter(\delta[X_2]\backslash\Delta)) \Big) \\
    &=
    \bigvee_{(a,x')\in\delta[X_1]} \bigvee_{\Delta\subseteq \delta[X_2]}
    \Big( \big( \overline{D_c}(a) \ominus
    \widetilde{\overline{D_c}}(\lab(\Delta))\big) \land
    d(\{x'\},\ter(\delta[X_2]\backslash\Delta)) \Big)
  \end{align*}
  As in the proof of Proposition~\ref{prop:compatible-trace-metric} we
  observe that
  $\overline{D_c}(a) \ominus \widetilde{\overline{D_c}}(\lab(\Delta))$
  is maximal if $c=a$ and then has the value
  $\bigwedge_{b\in\lab(\Delta)} d_A(a,b)$.  So, finally:
  \begin{align*}
    &\alphaT(\loT(\gammaT(d)))(X_1,X_2) \\
    &= \bigvee_{c\in A}
    \alpha_T(\nxt_c[\gammaT(d)])(X_1,X_2) \lor d_\emptyset(X_1,X_2) \\
    &= d_\emptyset(X_1,X_2) \lor \bigvee_{(a,x')\in\delta[X_1]}
    \bigvee_{\Delta\subseteq \delta[X_2]}
    \Big( \bigwedge_{b\in\lab(\Delta)} d_A(a,b) \land
    d(\{x'\},\ter(\delta[X_2]\backslash\Delta)) \Big).
  \end{align*}
  Continuity of $\beT$ follows from
  Lemma~\ref{lem:directed-trace-metric-continuous}.
\end{proof}

\noindent
The special case of $X_1\neq \emptyset$, $X_2 = \emptyset$ is
an effect of the term $\{1\}$ in the logic function $\loT$.

Note that to our knowledge there is no straightforward way to compute
the bisimilarity distance on the determinization (see Theorem~17 in
\cite{afs:linear-branching-metrics}). Next, we explain the above
fixpoint equation as a two-player game.

\begin{remark}%
  \label{rem:death-maiden-game}
Given two sets $X_1, X_2\subseteq X$ and a threshold
$\varepsilon\in[0, 1]$, we want to check, whether $d_{\mathrm{T}}(X_1,
X_2) \le \varepsilon$ with a game played by two players: Death {\PlayerOne} and Maiden {\PlayerTwo}. First, {\PlayerOne} chooses a transition $(a, x')$ of $\delta[X_1]$ and also stipulates a set $\Delta\subseteq\delta[X_2]$ of allowed transitions for {\PlayerTwo}. Now {\PlayerTwo} has two possibilities: she can either accept the set $\Delta$ presented by {\PlayerOne}, or she can reject it. If she rejects it, she can only reach states in $Y' := \ter(\delta[X_2]\setminus\Delta)$ and whatever way {\PlayerOne} chooses to continue his trace from the state $x'$, {\PlayerTwo} must continue her trace from one of the states in $Y'$. The game therefore continues with the sets $\{x'\}$ and $Y'$. If, on the other hand, {\PlayerTwo} chooses to accept the set presented by {\PlayerOne}, then, in trying to duplicate the trace begun by {\PlayerOne} with one of the transitions in $\Delta$, she makes an error of at least $\bigwedge_{(b, y')\in\Delta} d_A(a, b)$. It is clear that {\PlayerTwo} should only make this decision if she thinks that {\PlayerOne} can otherwise force a larger error in a later stage of the game. The game therefore ends and {\PlayerTwo} wins iff $\bigwedge_{(b, y')\in\Delta} d_A(a, b) \le \varepsilon$.
\end{remark}

\begin{example}
  We compute the directed trace distance of the state $x$ to the state
  $y$ in the metric transition system over $A=[0,1]$ depicted in
  Figure~\ref{Fig:ComputationOfTraceDistance} with $\traces(x) =
  \{(0,0), (0,1)\}, \traces(y) = \{(\nicefrac{1}{2},0), (0,1)\}$. There is only one outgoing transition from $x$ and there are four possible choices for $\Delta\subseteq\delta(y)$. The corresponding terms are calculated in Figure~\ref{Fig:ComputationOfTraceDistance}, where we used that we already computed $d_{\mathrm{T}}(\{x'\}, \{y_1\}) = d_{\mathrm{T}}(\{x'\}, \{y_2\}) = d_{\mathrm{T}}(\{x'\}, \emptyset) = 1$ and $d_{\mathrm{T}}(\{x'\}, \{y_1, y_2\}) = 0$. Taking the maximum of the minima we see that $d_{\mathrm{T}}(\{x\}, \{y\}) = \nicefrac{1}{2}$, which is indeed the Hausdorff distance between the two trace sets.

\begin{figure}
\centering
\begin{tikzpicture}
\scriptsize
\coordinate[label=above:$y$, circle, fill, inner sep=1pt] (y) at (2.5, 0.5);
\coordinate[label=left:$y_1$, circle, fill, inner sep=1pt] (y1) at (2, -0.25);
\coordinate[label=right:$y_2$, circle, fill, inner sep=1pt] (y2) at (3, -0.25);
\coordinate[circle, fill, inner sep=1pt] (y1p) at (2, -1);
\coordinate[circle, fill, inner sep=1pt] (y2p) at (3, -1);
\draw[->] (y) -- node[left]{$\nicefrac{1}{2}$} (y1);
\draw[->] (y) -- node[right]{$0$} (y2);
\draw[->] (y1) -- node[left]{$0$} (y1p);
\draw[->] (y2) -- node[right]{$1$} (y2p);


\coordinate[label=above:$x$, circle, fill, inner sep=1pt] (x) at (0.5, 0.5);
\coordinate[label=right:$x'$, circle, fill, inner sep=1pt] (xp) at (0.5, -0.25);
\coordinate[circle, fill, inner sep=1pt] (xp1) at (0, -1);
\coordinate[circle, fill, inner sep=1pt] (xp2) at (1, -1);
\draw[->] (x) -- node[left]{$0$} (xp);
\draw[->] (xp) -- node[left]{$0$} (xp1);
\draw[->] (xp) -- node[right]{$1$} (xp2);

\node[anchor = north west] at (4, 1) {
\begin{tabular}{ c c c }
$\Delta$ & $\bigwedge_{(b, y')\in\Delta} d_A(a, b)$ & $d_{\mathrm{T}}(\{x'\}, \ter(\delta(y) \setminus \Delta))$ \\
\midrule
$\emptyset$ & $1$ & $0$\\
$\{(\nicefrac{1}{2}, y_1)\}$ & \nicefrac{1}{2} & 1\\
$\{(0, y_2)\}$ & 0 & 1\\
$\delta(y)$ & 0 & 1
\end{tabular}
};
\end{tikzpicture}
\caption{Computation of trace distance}
\label{Fig:ComputationOfTraceDistance}
\end{figure}
\end{example}

\begin{toappendix}
  \begin{lemma}
    \label{lem:directed-trace-metric-continuous}
    The restriction of the behaviour function
    $\beT\colon \dpmet(\power(X)) \to \dpmet(\power(X))$ to the set of metrics that are join-preserving in the first argument is continuous
    for every transition system, independent of its branching type.
  \end{lemma}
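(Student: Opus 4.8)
The plan is to deduce continuity directly from the closed-form description of $\beT$ in Thm.~\ref{thm:behaviour-trace-metric}, which turns the statement into a one-line distributivity argument. Recall that continuity here means preservation of suprema of well-ordered ascending chains; since $\beT$ is monotone, for a chain $(d_\beta)_{\beta<\kappa}$ it suffices to prove $\beT\big(\bigvee_{\beta<\kappa} d_\beta\big) \le \bigvee_{\beta<\kappa}\beT(d_\beta)$, the reverse inequality being automatic. Write $d := \bigvee_{\beta<\kappa} d_\beta$.

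Before applying the characterization I would check that the chain and its supremum lie in its domain, i.e.\ that $d$ is again join-preserving in the first argument and satisfies $d(X_1,\emptyset)=1$ for $X_1\neq\emptyset$ (the two hypotheses of Thm.~\ref{thm:behaviour-trace-metric}, met by every iterate $\beT^i(\bot)$ with $i\ge 1$). Both conditions pass to suprema of chains: the normalization because each $d_\beta$ already attains the maximal value $1$ there, and join-preservation because the two joins may be interchanged, $d\big(\bigcup_i X_i, Y\big) = \bigvee_\beta\bigvee_i d_\beta(X_i,Y) = \bigvee_i\bigvee_\beta d_\beta(X_i,Y) = \bigvee_i d(X_i,Y)$. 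Hence the characterization applies to $d$ and to every $d_\beta$.

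The boundary cases are immediate: for $X_1=\emptyset$ both sides vanish, and for $X_1\neq\emptyset=X_2$ both sides equal $1$ by the conditional clause. For the remaining case I would substitute the formula and observe that the chain index $\beta$ enters only through the factor $d(\{x'\},\ter(\delta[X_2]\setminus\Delta)) = \bigvee_{\beta<\kappa} d_\beta(\{x'\},\ter(\delta[X_2]\setminus\Delta))$, which sits inside a binary meet with the $\beta$-independent quantity $\bigwedge_{b\in\lab(\Delta)} d_A(a,b)$. Since in $[0,1]$ finite meets distribute over arbitrary joins, this $\bigvee_\beta$ can be pushed out of the meet, after which it commutes with the outer joins over $(a,x')\in\delta[X_1]$ and $\Delta\subseteq\delta[X_2]$ and can be pulled to the front, giving precisely $\bigvee_{\beta<\kappa}\beT(d_\beta)(X_1,X_2)$.

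There is no real obstacle in the calculation; the point worth emphasising is why, unlike in the simulation case (Thm.~\ref{thm:DirectedSimChar}), no branching hypothesis is needed. In the simulation formula the metric occurs under an inner infimum $\bigwedge_{(b,y')\in\delta(y)}$, over which suprema do not distribute, forcing the finite-branching assumption used there to realize that infimum. In the trace formula, by contrast, the structural quantifier over successors of $X_2$ is the \emph{join} $\bigvee_{\Delta\subseteq\delta[X_2]}$, and the only infimum present, $\bigwedge_{b\in\lab(\Delta)} d_A(a,b)$, does not depend on the chain. The entire argument therefore rests on the single frame identity $r\land\bigvee_i s_i = \bigvee_i(r\land s_i)$ together with commutativity of joins, and consequently holds for transition systems of arbitrary branching type.
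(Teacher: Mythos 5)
Your proof is correct and follows essentially the same route as the paper's: both apply the closed-form characterization of $\beT$ from Thm.~\ref{thm:behaviour-trace-metric}, observe that for the non-degenerate case the chain index enters the formula only through the factor $d(\{x'\},\ter(\delta[X_2]\setminus\Delta))$, and conclude by distributing the binary meet with the $\beta$-independent term $\bigwedge_{b\in\lab(\Delta)} d_A(a,b)$ over the arbitrary join $\bigvee_{\beta<\kappa}$. The only difference is cosmetic: the paper first picks an $\varepsilon$-optimal witness $(x,(a,x'),\Delta)$ and then distributes, whereas you pull $\bigvee_{\beta<\kappa}$ to the front directly by commuting joins (avoiding the $\varepsilon$-argument altogether), and your explicit verification that the chain and its supremum satisfy the hypotheses of Thm.~\ref{thm:behaviour-trace-metric} is, if anything, more careful than the paper's own proof.
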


\begin{proof}
  Let $(d_\beta)_{\beta<\kappa}$ be an ascending sequence in $\dpmet(\power(X))$. From
  the monotonicity of $\beT$ we get
  \[\beT(\bigvee_{\beta<\kappa} d_\beta) \ge \bigvee_{\beta<\kappa}
    \beT(d_\beta)\] For the other direction fix sets
  $X_1, X_2 \subseteq X$ (without loss of generality we can assume
  that $X_2\neq\emptyset$) and let $\varepsilon > 0$. We can assume
  that
  \[\beT(\bigvee_{\beta<\kappa} d_\beta)(X_1, X_2) = \bigvee_{x\in X_1}
    \bigvee_{(a, x')\in\delta(x)} \bigvee_{\Delta \subseteq
      \delta[X_2]} \Big( \bigwedge_{(b, y')\in\Delta} d_A(a, b) \land
    (\bigvee_{n\in\mathbb{N}} d_\beta)(\{x'\},
    \ter(\delta[X_2]\setminus\Delta) \Big)\] and choose $x\in X_1$,
  $(a, x')\in\delta(x)$ and $\Delta\subseteq\delta[X_2]$, such that
  \begin{align*}
    \beT(\bigvee_{\beta<\kappa} d_\beta)(X_1, X_2) &\le \bigwedge_{(b, y')\in\Delta} d_A(a, b) \land (\bigvee_{\beta<\kappa} d_\beta)(\{x'\}, \ter(\delta[X_2]\setminus\Delta) + \varepsilon\\
    &= \bigvee_{\beta<\kappa} \Big(\bigwedge_{(b, y')\in\Delta} d_A(a, b) \land d_\beta(\{x'\}, \ter(\delta[X_2]\setminus\Delta)\Big) + \varepsilon\\
    &\le \bigvee_{\beta<\kappa} \beT(d_\beta)(X_1, X_2) + \varepsilon
  \end{align*}
  Since $\varepsilon>0$ is arbitrary we can conclude that
  \[\beT(\bigvee_{\beta<\kappa} d_\beta)(X_1, X_2) \le
    \bigvee_{\beta<\kappa} \beT(d_\beta)(X_1, X_2)\]
\end{proof}
\end{toappendix}

In the case of the trace metric we can eliminate constant
shifts from the logic without losing expressiveness. This is a
consequence of Corollary~\ref{cor:LogicForGeneralTraceEquivalences},
which we will prove later.

\paragraph*{Decorated Trace Distances}

Now we consider the quantitative generalization of decorated trace
preorders. We follow a presentation similar to
Theorem~\ref{thm:decoratedtraceequiv}, wherein we characterize the
least fixpoint of a behaviour function parameterized by a distance
$d_0\in\dpmet(X)$, which is in turn induced by a set
$\mathcal{G}\subseteq [0,1]^X$, corresponding to
completed/failure/readiness traces.

\begin{theoremrep}\label{thm:FixedPointEquationForGeneralTraceEquivalences}
Let $d_0\in\dpmet(X)$ and consider the map $\be_{d_0}\colon
\dpmet(\power(X)) \to \dpmet(\power(X))$ defined as
$\be_{d_0}(d)=\beT(d)\vee (d_0)_{\overrightarrow{H}}$, for any $d\in\dpmet (\power (X))$. Then $\mu \be_{d_0}(X_1,X_2)$ is characterized as the infimum of those $\varepsilon\in[0,1]$ that satisfy:
\begin{eqnarray*}
  &&   \forall {x_1,x'_1\in X_1,\sigma\in A^*}\colon
  x_1\xrightarrow{\sigma} x'_1 \\
  && \qquad\qquad \implies\ \exists {x_2,x'_2\in
      X_2,\tau\in A^*}\colon x_2 \xrightarrow{\tau} x'_2 \land
    d_{\mathrm{Tr}}(\sigma, \tau)\le\varepsilon \land d_0(x'_1, x'_2)
    \le \varepsilon.
\end{eqnarray*}
\end{theoremrep}

\begin{proof}
Let $d=\mu \be_{d_0}$. We first show that the left-hand side is greater or equal to the right-hand side.
\begin{multline*}
d(X_1, X_2) = \bigvee_{x\in X_1}\bigvee_{(a, x')\in\delta(x)} \Big[ \bigwedge_{y\in X_2} d_0(x, y)\\ \lor \bigvee_{\Delta \subseteq \delta[X_2]} \Big( \bigwedge_{(b, y')\in\Delta} d_A(a, b) \land d\big( \{x'\}, \ter(\delta[X_2]\setminus\Delta) \big) \Big) \Big].
\end{multline*}
Fix two sets $X_1, X_2\subseteq X$ and define $r := d(X_1, X_2)$. We can assume that $r<1$, otherwise there is nothing to show. Now let $\varepsilon > r$ and consider $\sigma\in\traces(x)$ for some $x\in X_1$ as well as a state $x'$ with $x \xrightarrow{\sigma} x'$. We will construct a trace $\tau\in\traces(y)$ with $d_{\traces}(\sigma, \tau)\le\varepsilon$, such that $y \xrightarrow{\tau} y'$ for some state $y'$ with $d_0(x', y')\le\varepsilon$. Let $\sigma = a_0\dots a_{\ell - 1}$ and $(x_i)$ be a sequence of states with $x_0 = x$, $x_\ell = x'$ and $x_{i+1} \in \delta_{a_i}(x_i)$ for all $i=1,\dots,\ell - 1$.

Define $\Delta_+ = \{(b, y') \in \delta[X_2] \mid d_A(a_0, b) > \varepsilon\}$ and $\Delta_- = \delta[X_2] \setminus \Delta_+$, then
\[r = d(X_1, X_2) = \be_{d_0}(d)(X_1, X_2) \ge \bigwedge_{(b, y')\in\Delta_+} d_A(a_0, b) \land d\big( \{x_1\}, \ter(\Delta_-) \big) \]
and therefore $d\big( \{x_1\}, \ter(\Delta_-) \big)\le r$, in
particular $\ter(\Delta_-)\neq\emptyset$. We now inductively define non-empty sets $\Delta_0,\dots,\Delta_{\ell-1}$ with $\Delta_0 = \Delta_-$, such that the following properties hold:
\begin{enumerate}
\item $\Delta_{n+1} \subseteq \delta[\ter(\Delta_n)]$ for all $n<\ell - 1$
\item $d(\{x_{n+1}\}, \ter(\Delta_n)) \le r$ for all $n<\ell$
\item For all $n<\ell$ and all $(b, y')\in\Delta_n$ we have $d_A(a_n, b) \le \varepsilon$
\end{enumerate}
If $\Delta_n$ has already been defined, we again consider the sets $\Delta^{(n)}_+ = \{(b, y')\in \delta[\ter(\Delta_n)] \mid d_A(a_{n+1},  b) >\varepsilon\}$ and $\Delta^{(n)}_- = \delta[\ter(\Delta_n)] \setminus \Delta^{(n)}_+$. Then by the inductive hypothesis we get
\begin{align*}
r &\ge d\big(\{x_{n+1}\}, \ter(\Delta_n) \big) = \be_{d_0}(d)\big( \{x_{n+1}\}, \ter(\Delta_n) \big)\\
&\ge \bigwedge_{(b, y')\in \Delta^{(n)}_+} d_A(a_{n+1}, b) \land d\big( \{x_{n+2}\}, \ter(\Delta^{(n)}_-) \big)
\end{align*}
and therefore $d\big( \{x_{n+2}\}, \ter(\Delta^{(n)}_-) \big) \le r$, since $\bigwedge_{(b, y')\in \Delta^{(n)}_+} \ge \varepsilon$. In particular $\ter(\Delta^{(n)}_-) \neq \emptyset$. We can thus put $\Delta_{n+1} = \Delta^{(n)}_-$. This finishes the induction.

Now note that
\[\bigwedge_{y\in\ter(\Delta_{\ell-1})} d_0(x_\ell, y) \le \be_{d_0}(d)( \{x_\ell\}, \ter(\Delta_{\ell-1}) \big) = d( \{x_\ell\}, \ter(\Delta_{\ell-1}) \big) \le r\]
In particular there is some $y_\ell \in \ter(\Delta_{\ell-1})$, such
that $d_0(x_\ell, y_\ell) \le \varepsilon$. We can choose a transition $(b_{\ell-1}, y_\ell)\in\Delta_{\ell-1}$ and since $\Delta_{\ell-1} \subseteq \delta[\ter(\Delta_{\ell-2})]$, there is some $y_{\ell-1}\in\ter(\Delta_{\ell-2})$ such that $(b, y_\ell)\in\delta(y_{\ell-1})$. Continuing in this way, we
can now construct a sequence $(y_1, \dots, y_\ell)$ by backwards
induction with $y_n\in\ter(\Delta_{n-1})$ for all $n=1,\dots,\ell$.

For the other direction we show by induction on $i$, that
$\be_{d_0}^i(\bot)(X_1, X_2) \le r$ for all pairs $(X_1, X_2)$ in the
set
\begin{align*}
\Omega = \{(X_1, X_2) \mid\ &\text{For all $x\in X_1$, all $\sigma\in\traces(x)$ and every $x'$ with $x \xrightarrow{\sigma} x'$}\\
&\text{there is $y\in X_2$, $\tau\in\traces(y)$ and $y'$ with $y \xrightarrow{\tau} y'$,}\\
&\text{such that $d_{\mathrm{Tr}}(\sigma, \tau)\le r$ and $d_0(x', y') \le r$}\}
\end{align*}
For $i=0$ there is nothing to show, in the successor case we have
\begin{align*}
\be_{d_0}^{i+1}(\bot)(X_1, X_2) = &\bigvee_{x\in X_1}\bigvee_{(a, x')\in\delta(x)} \Big[ \bigwedge_{y\in X_2} d_0(x, y)\\
&\quad \lor \bigvee_{\Delta \subseteq \delta[X_2]} \Big( \bigwedge_{(b, y')\in\Delta} d_A(a, b) \land \be_{d_0}^i(\bot)\big( \{x'\}, \ter(\delta[X_2]\setminus\Delta) \big) \Big) \Big]
\end{align*}
By considering the empty trace $\sigma = \varepsilon$ in the definition of $\Omega$, it is clear that
\[\bigvee_{x\in X_1} \bigwedge_{y\in X_2} d_0(x, y) \le \varepsilon\]
It thus suffices to show that
\[\bigvee_{x\in X_1} \bigvee_{(a, x')\in\delta(x)}
\bigvee_{\Delta \subseteq \delta[X_2]} \Big( \bigwedge_{(b, y')\in\Delta} d_A(a, b) \land \be_{d_0}^i(\bot)\big( \{x'\}, \ter(\delta[X_2]\setminus\Delta) \big) \Big) \le r\]
Fix $x\in X_1$ and a transition $(a, x')\in\delta(x)$. Consider a subset $\Delta\subseteq\delta[X_2]$ and assume that $r_1 := \bigwedge_{(b, y')\in\Delta} d_A(a, b) > r$. By the inductive hypothesis, it suffices to show that
\[(\{x'\}, \ter(\delta[X_2]\setminus\Delta))\in\Omega\]
Let $\sigma$ be a trace of $x'$ and $x' \xrightarrow{\sigma} x''$. Then $a\sigma$ is a trace of $x$ and therefore there is some $y\in X_2$, a trace of the form $b\tau\in\traces(y)$ and states $y', y''$ with $y \xrightarrow{b} y' \xrightarrow{\tau} y''$ with $d_{\traces}(a\sigma, b\tau) \le r$ and $d_0(x'', y'') \le r$. But this implies in particular that $d_A(a, b) \le r$ and therefore $y'\in\ter(\delta[X_2] \setminus \Delta))$, as desired.
\end{proof}
\noindent
When $d_0$ is the constant $0$-metric, this results in the
behaviour function $\beT$ that characterizes trace distance. Next, we
reformulate the result in terms of a
set $\mathcal G\subseteq [0,1]^X$; this in turn helps in deriving the
characterization of various decorated trace distances.
We start by imposing a condition on such a set $\mathcal{G}$ that
guarantees that $\alphaT(\mathcal{G})$ is the directed Hausdorff lifting of
$d_0 = \alphaS(\mathcal{G})$
(cf.\thinspace Section~\ref{sec:directed-simulation-metrics}), which ensures
that by Proposition~\ref{prop:compositionality} the enriched logic function
induces a behaviour function as in the previous
theorem.

\begin{lemmarep}
  \label{lem:alphaT-hausdorff}
  Let $\mathcal{G}\subseteq [0,1]^X$ such that
  $d_0 = \alphaS(\mathcal{G})$. Then
  $\alphaT(\mathcal{G}) = (d_0)_{\overrightarrow{H}}$ whenever
  \[
    \forall \varepsilon>0,x\in X\ \exists g\in\mathcal G:\ g(x)=1 \land
    \forall x':\ g(x) \ominus g(x') \ge d_0(x,x') - \varepsilon.
  \]
\end{lemmarep}

\begin{proof}
  We show both inequalities:
  \begin{description}
  \item[``$\ge$''] By substituting the definition of $\alphaS$ we get
    \begin{align*}
      (d_0)_{\overrightarrow{H}}(X_1,X_2) &=
      \bigvee_{x_1\in X_1} \bigwedge_{x_2\in X_2}
      d_0(x_1,x_2) \\
      &= \bigvee_{x_1\in X_1} \bigwedge_{x_2\in X_2}
      \bigvee_{f\in \mathcal{G}} (f(x_1)\ominus f(x_2)) \\
      &\ge \bigvee_{f\in \mathcal{G}} \bigvee_{x_1\in X_1}
      \bigwedge_{x_2\in X_2}
      (f(x_1)\ominus f(x_2)) \\
      &= \bigvee_{f\in \mathcal{G}} \Big(\bigvee_{x_1\in X_1}
      f(x_1)\ominus \bigvee_{x_2\in X_2} f(x_2)\Big) \\
      &= \bigvee_{f\in \mathcal{G}} \big(
      \tilde{f}(X_1)\ominus \tilde{f}(X_2)\big)
      = \alphaT(\mathcal{G})(X_1,X_2),
    \end{align*}
    where the inequality follows from the monotonicity of conjunction
    and we use Lemma~\ref{lem:inequalities-inf-sup} for the
    third-last step.
  \item[``$\le$''] Given $x,\varepsilon$, let $g_x^\varepsilon\in
    \mathcal{G}$ be a function which satisfies
    $g_x^\varepsilon(x)\ominus g_x^\varepsilon(x')\ge d_0(x,x')-\varepsilon$
    for all $x'\in X$. Then we have for each $\varepsilon > 0$:
    \begin{align*}
      \bigvee_{x_1\in X_1} \bigwedge_{x_2\in X_2} d_0(x_1,x_2) &\le
      \bigvee_{x_1\in X_1} \bigwedge_{x_2\in X_2}
      (g_{x_1}^\varepsilon(x_1)\ominus g_{x_1}^\varepsilon(x_2)) + \varepsilon
      \\
      &=
      \bigvee_{x_1\in X_1}\Big(g_{x_1}^\varepsilon(x_1)\ominus
      \bigvee_{x_2\in X_2} g_{x_1}^\varepsilon(x_2)\Big) + \varepsilon \\
      &=
      \bigvee_{x_1\in X_1}\Big(\widetilde{g_{x_1}^\varepsilon}(\{x_1\})\ominus
      \widetilde{g_{x_1}^\varepsilon}(X_2)\Big) + \varepsilon \\
      &\le
      \bigvee_{x_1\in X_1}\bigvee_{f\in\mathcal{G}} \Big( \tilde{f}(\{x_1\})\ominus
      \tilde{f}(X_2) \Big) + \varepsilon \\
      &=
      \bigvee_{f\in\mathcal{G}} \Big( \bigvee_{x_1\in X_1} \tilde{f}(\{x_1\})\ominus
      \tilde{f}(X_2) \Big) + \varepsilon \\
      &=
      \bigvee_{f\in\mathcal{G}} \Big( \tilde{f}(X_1)\ominus
      \tilde{f}(X_2) \Big) + \varepsilon \\
      &= \alphaT(\mathcal{G})(X_1,X_2) + \varepsilon \\
      &= (d_0)_{\overrightarrow{H}}(X_1,X_2) + \varepsilon
    \end{align*}
    We again use Lemma~\ref{lem:inequalities-inf-sup} for the calculations.
    Since this holds for all $\varepsilon$, we obtain the desired inequality.
  \end{description}
\end{proof}

\begin{corollaryrep}\label{cor:LogicForGeneralTraceEquivalences}
  Assume that $\mathcal{G}\subseteq [0,1]^X$ satisfies the requirements of
  Lemma~\ref{lem:alphaT-hausdorff}. Then
  $\alphaT (\mu (\loT \cup \mathcal G))=(d_{\mathrm{Tr}}\otimes
  d_0)_{\overrightarrow{H}}\circ (\hat\delta \times \hat\delta)$.
  The same holds if the logic function $\loT$ is replaced by $\lo'$
  with
  $\lo'(\mathcal{F}) = \bigcup_{a\in A}
  \bigcirc_a[\mathcal{F}]\cup\{1\}$ (without shifts).
\end{corollaryrep}

\begin{proof}
  Let
  $\be_{\mathcal{G}} := \alpha\circ (\loT\cup\mathcal{G}) \circ
  \gamma$, then by compositionality
  (Proposition~\ref{prop:compositionality}) and
  Lemma~\ref{lem:alphaT-hausdorff}:
  \[ \be_\mathcal{G} = \beT \lor \alphaT(\mathcal{G}) = \beT\lor
    (d_0)_{\overrightarrow{H}}. \]
  With Theorem~\ref{thm:FixedPointEquationForGeneralTraceEquivalences}
  we conclude that
  \begin{align*}
    \alphaT(\mu (\lo T\cup \mathcal{G}))(X_1,X_2)
    &= \mu
    (\beT\lor (d_0)_{\overrightarrow{H}})(X_1,X_2) \\
    &=
    \bigvee_{(\sigma_1,x'_1)\in\hat{\delta}[X_1]}
    \bigwedge_{(\sigma_2,x'_2)\in\hat{\delta}[X_2]} \big(
    d_{\mathrm{Tr}}(\sigma_1, \sigma_2) \lor d_0(x'_1, x'_2)
    \big) \\
    &=(d_{\mathrm{Tr}}\otimes
    d_0)_{\overrightarrow{H}}(\hat\delta[X_1] \times
    \hat\delta[X_2]) \\
    &=(d_{\mathrm{Tr}}\otimes
    d_0)_{\overrightarrow{H}}\circ (\hat\delta \times \hat\delta)(X_1,X_2)
  \end{align*}

  It remains to show that the weaker logic function is strong enough
  to witness all distances.  To this end, let $X_1,X_2$, we fix
  $(\sigma_1,x'_1)\in\hat\delta[X_1]$. Let $\varepsilon>0$ and choose
  a function $g\in\mathcal{G}$, such that $g(x_1') = 1$ and
  $g(x_1') - g(x_2') \ge d_0(x_1', x_2') - \varepsilon$ for all
  $x_2'$, in particular those which are reachable from $X_2$ in
  exactly $|\sigma_1|$ steps. Let $\sigma_1 = c_1 \dots c_n$, then the
  function
  $f = \nxt_{c_1}\dots\nxt_{c_n} g\in \mu(\loT\cup \mathcal{G})$
  satisfies $\tilde{f}(X_1) = 1$ and, by induction on $n$,
  \begin{align*}
    \tilde{f}(X_2) =& \bigvee_{(\sigma_2=d_1\dots
      d_n,x'_2)\in\hat\delta[X_2]} \big( (1-d_A(c_1,d_1))\land
    \dots\land
    (1-d_A(c_n,d_n)) \land g(x_2') \big) \\
    &= 1- \bigwedge_{(\sigma_2,x'_2)\in\hat\delta[X_2]} \big(
    d_{\mathrm{Tr}}(\sigma_1, \sigma_2) \lor (1-g)(x_2') \big).
  \end{align*}
  Hence this formula witnesses that
  \begin{align*}
    \tilde f(X_1) \ominus \tilde f(X_2) &\ge 1 - \Big( 1 -
    \bigwedge_{(\sigma_2,x'_2)\in\hat\delta[X_2]} \big(
    d_{\mathrm{Tr}}(\sigma, \tau) \lor (1-g)(x_2') \big) \Big)\\
    &\ge \bigwedge_{(\sigma_2,x'_2)\in\hat\delta[X_2]}
    \big( d_{\mathrm{Tr}}(\sigma, \tau) \lor d_0(x_1', x_2') \big) - \varepsilon
\end{align*}
Since $\varepsilon > 0$ is arbitrary, the result follows.
\end{proof}

\begin{figure}[h]
\centering
\begin{tabular}{l l l}
\hfil $\mathcal{G}$ & \hfil $d_0(x, y)$ & Behavioural distance\\
\midrule
$\{f^{\mathrm{Ref}}_A\}$ & $f^{\mathrm{Ref}}_A(x) \ominus f^{\mathrm{Ref}}_A(y)$ & completed trace\\
$\{f^{\mathrm{Ref}}_B, \mid B\subseteq A\}$ & $(d_{\text{disc}})_{\overrightarrow{H}}(\lab(\delta(y)),\lab(\delta(x))$ & (discrete) failures\\
$\cl^{\land,\mathrm{sh}}(\{g_a \mid a\in A\})$ & $(d_A)_{\overrightarrow{H}} (\lab(\delta(y)), \lab(\delta(x)))$ & (Hausdorff) failures\\
$\{f^{\mathrm{Ready}}_B \mid B\subseteq A\}$ & $d_{\text{disc}}(\lab(\delta(x)), \lab(\delta(y)))$ & (discrete) readiness\\
$\cl^{\land,\mathrm{sh}}(\{g_a, 1-g_a \mid a\in A\})$ & $(d_A)_H(\lab(\delta(x)), \lab(\delta(y)))$ & (Hausdorff) readiness\\
$\cl^{\land,\mathrm{sh}}(\mu\,\loT \cup \neg(\mu\,\loT))$ & $\overline{d_\mathrm{T}}(\{x\}, \{y\})$ & possible futures\\
\end{tabular}
\caption{Behavioural distances obtained from a logic of the form
  $\lo_0(\mathcal{F}) = \loT(\mathcal{F}) \cup \mathcal{G}$,
  respectively a behaviour function of the form
  $\be_0 = \beT \lor (d_0)_{\overrightarrow{H}}$, where
  $d_0 = \alphaS(\mathcal{G})$.}
\label{fig:SummaryOfBehaviouralMetrics}
\end{figure}

These results supply fixpoint characterizations of several meaningful
behavioural distances. In Figure~\ref{fig:SummaryOfBehaviouralMetrics}
we summarize which primitive set of functions
$\mathcal{G}\subseteq[0,1]^X$ has to be added to the trace logic in
order to get (directed) metric versions of some decorated trace
semantics considered in \cite{g:linear-branching-time}:
completed/failure/ready trace semantics.

\begin{corollaryrep}
  Consider the following functions
  $f^{\mathrm{Ref}}_B, f^{\mathrm{Ready}}_B, g_a\in[0,1]^X$:
  \[
    f^{\mathrm{Ref}}_B(x) = \begin{cases} 1 & \text{$x\in\refuse B$} \\ 0 &
      \text{otherwise}\end{cases} \quad f^{\mathrm{Ready}}_B(x)
    = \begin{cases} 1 & \text{$x\in\ready B$} \\ 0 &
      \text{otherwise}\end{cases} \quad g_a(x) =
    \bigwedge_{b\in\lab(\delta(x))} d_A(a, b).
  \]
  Then by
  adding $\mathcal G$ to $\loT$ results in the behaviour functions
  and distances as given in Figure~\ref{fig:SummaryOfBehaviouralMetrics}.
\end{corollaryrep}

\begin{proof}
  It is easy to see that in each case, $\mathcal{G}\cup\{1\}$ satisfies the
  requirements of
  Corollary~\ref{cor:LogicForGeneralTraceEquivalences}. In the case of
  completed trace, (discrete) failues and readiness semantics this is
  true for analogous reasons as explained in the proof of
  Corollary~\ref{cor:decorated-trace}. In the other three cases we
  close under arbitrary meets and shifts.
\end{proof}

Note that the different versions of failures and readiness metrics correspond to different ways to measure the distance between the refuse/ready sets of two states. In the first version we take the discrete metric on $\power(A)$, and in the second version we take the Hausdorff lifting of $d_A$. In the qualitative setting, the two notions collapse. The Hausdorff versions are the ones to use if we want to recover the hierarchy of \cite{g:linear-branching-time}.

\begin{figure}[htb]
\centering
\begin{tikzpicture}
\node (trace) at (0,0) {Trace};
\node (ct) at (0,1) {Completed Trace};
\node (hf) at (0,2) {Hausdorff Failure}; \node (phf) at (5,2) {pseudo-Hausdorff Failure};
\node (hr) at (0,3) {Hausdorff Ready}; \node (df) at (5,3) {discrete Failure};
\node (pf) at (0,4) {Possible Futures}; \node (dr) at (5,4) {discrete Ready};
\node (b) at (0,5) {Bisimulation};
\draw[->] (trace) -- (ct);
\draw[->] (ct) -- (hf);
\draw[->] (hf) -- (hr);
\draw[->] (hr) -- (pf);
\draw[->] (pf) -- (b);
\draw[->] (ct) -- (phf);
\draw[->] (phf) -- (df);
\draw[->] (df) -- (dr);
\draw[->] (hf) -- (df);
\draw[->] (hr) -- (dr);
\end{tikzpicture}
\caption{A spectrum of behavioural distances.}\label{fig:SpecBehDist}
\end{figure}

Consider also the pseudo-Hausdorff failure semantics arising from
choosing a set $\mathcal{G}$ of predicates with
$\alpha_S(\mathcal{G})(x, y) = d_0(x, y) =
(d_A)_{\overrightarrow{H}}(A\backslash\lab(\delta(x)),
A\backslash\lab(\delta(y)))$. In the qualitative setting this notion
collapses with Hausdorff failure and discrete failure semantics, but
in the metric setting the pseudo-Hausdorff failure distance is not
even bounded by the bisimulation distance (see
Figure~\ref{fig:SpecBehDist}). The inclusions shown in
Figure~\ref{fig:SpecBehDist} are obvious from comparing the
corresponding metrics $d_0$ in
Figure~\ref{fig:SummaryOfBehaviouralMetrics}.

Again we conclude by comparing the qualitative and quantitative case.

\begin{propositionrep}\label{prop:comp-qual-quant-trace}
Consider the map $\alpha\colon \dpmet(\power(X)) \to \mathit{Pre}(\power(X))$ given by $\alpha(d) = \{(X_1, X_2) \mid d(X_1, X_2) = 0\}$. If the set $A$ of actions is finite, then $\mu\,\bet = \alpha(\overline{\mu\,\beT})$.
\end{propositionrep}
\begin{proof}
Let $\gamma\colon \mathit{Pre}(\power(X)) \to \dpmet(\power(X))$ be given by $\gamma(R) = 1 - \chi_R$ and note that $\alpha\dashv \gamma$ is a Galois connection. Also consider the directed version $\be_{\overrightarrow{t}}\colon \mathit{Pre}(X) \to \mathit{Pre}(X)$ given by
\[\be_{\overrightarrow{t}}(R) = \{(X_1, X_2) \mid (X_2 = \emptyset \implies X_1 = \emptyset) \land \forall a\in A\colon
  \delta_a[X_1]\mathrel{R} \delta_a[X_2]\}\]
It is easy to see, that analogous to the reasoning in the undirected case, the least fixpoint of $\be_{\overrightarrow{t}}$ characterizes trace inclusion. It now suffices to show that $\be_{\overrightarrow{t}} \circ \alpha(d) = \alpha \circ \beT(d)$ for all $d$ that are join-preserving in the first argument. Since join-preservation is preserved throughout the fixpoint iteration, this will allow us to conclude that $\mu\,\be_{\overrightarrow{t}} = \alpha(\mu\,\beT)$, so in particular $\mu\,\bet = \mu\,\be_{\overrightarrow{t}} \cap (\mu\,\be_{\overrightarrow{t}})^{-1} = \alpha(\overline{\mu\,\beT})$.

Let $d\in\dpmet(\power(X))$ be join-preserving in the first argument. First consider $(X_1, X_2)\in\be_{\overrightarrow{t}} \circ \alpha(d)$, then $X_2 = \emptyset$ implies $X_1 = \emptyset$ and $\delta_a[X_1] \mathrel{\alpha(d)} \delta_a[X_2]$ for all $a\in A$. We need to show that $(X_1, X_2) \in \alpha\circ\beT(d)$, that means $\beT(d)(X_1, X_2) = 0$.

We can assume that $X_2 \neq \emptyset$. Fix $x\in X$, a transition
$(a, x')\in\delta(x)$ and $\Delta\subseteq\delta[X_2]$ If $\Delta$
contains an $a$-transition, then $\bigwedge_{b\in\lab(\Delta)} d_A(a,
b) = 0$, otherwise $\ter(\delta[X_2] \setminus \Delta) \supseteq
\delta_a[X_2]$ and therefore, due to (anti-)monotonicity in the arguments,
\[d(\{x'\}, \ter(\delta_a[X_2] \setminus \Delta)) \le d(\delta_a[X_1], \delta_a[X_2]) = 0\]
Consequently
\[\beT(d)(X_1, X_2) = \bigvee_{(a, x')\in\delta[X_1]} \bigvee_{\Delta \subseteq \delta[X_2]} \Big( \bigwedge_{b\in\lab(\Delta)} d_A(a, b) \land d( \{x'\}, \ter(\delta[X_2]\setminus\Delta))\Big) = 0\]
For the other direction consider $(X_1, X_2)\in\alpha\circ\beT(d)$, that means $\beT(d)(X_1, X_2) = 0$. We need to show that $X_2 = \emptyset$ implies $X_1 = \emptyset$ and that $\delta_a[X_1] \mathrel{\alpha(d)} \delta_a[X_2]$ for all $a\in A$. The first point is clear, for the other point we fix $a\in A$ and a state $x'\in\delta_a[X_1]$. Now note that
\[0 = \beT(d)(X_1, X_2) \ge \bigwedge_{b\in A \setminus\{a\}} d_A(a, b) \land d(\{x'\}, \delta_a[X_2])\]
Since $A$ is assumed to be finite and $d_A$ is a proper metric, this yields $d(\{x'\}, \delta_a[X_2]) = 0$. Using that $d$ is join-preserving in the first argument, we finally get $d(\delta_a[X_1], \delta_a[X_2]) = 0$.
\end{proof}
The necessity of requiring finiteness of $A$ is
illustrated by Example~\ref{ex:FiniteA_RelatingTraces}.

  \begin{example}\label{ex:FiniteA_RelatingTraces}
    Consider the transition system depicted below:
    \[\begin{tikzpicture}
        \scriptsize \node[draw, circle] (x) at (0,0)
        {$x$}; \draw[->] (x) edge[loop above]
        node{$0$} (x); \foreach \i in {1,...,5} { \node[draw, circle]
          (y\i) at (\i, 0)
          {$y_{\i}$}; \draw[->] (y\i) edge[loop above]
          node{$1/\i$} (y\i); } \node (d) at (6, 0.3) {$\dots$};
      \end{tikzpicture}\]
The trace distance of $X_1 = \{y_{i+1} \mid i\in\mathbb{N}\}$
    and $X_2 = X_1 \cup \{x\}$ is
    $\overline{\mu\,\beT}(X_1, X_2) = 0$. However we do not have full trace
    inclusion, hence $(X_1, X_2) \notin \mu\,\bet$.
\end{example}

\section{Concluding Remarks, Related and Future Work}
\label{sec:conclusion}

We presented a recipe to construct (bi)simulation equivalence/distance and trace equivalence/distance (together with various forms of their decorated trace counterparts) as the least fixpoint of behaviour functions on the underlying lattice $\mathbb B$ of equivalences/distances. Furthermore, upon realising the relevant Galois connection $\alpha\dashv \gamma$ between the lattices $\mathbb L$ (modelling sets of predicates) and $\mathbb B$, we showed in each case that these behaviour functions arise naturally (i.e., $\be=\alpha \circ \lo \circ \gamma$) when the logic function $\lo$ is compatible with the closure $\gamma\circ\alpha$. By doing so, we not only recover the fixpoint characterizations of the branching-time spectrum, but we also gave novel ones in the linear-time spectrum (like the trace distances and their variations: completed trace/failure/ready/possible futures).

\subsection*{Related work}
Our work is related to
\cite{afs:linear-branching-metrics,fl:quantitative-spectrum-journal},
where the former establishes a logical characterization (using the
syntax of LTL and $\mu$-calculus) of bisimulation and trace distances,
while the latter recasts a part of the classical linear-branching time
spectrum to a quantitative one involving metrics, based on games. The
fixpoint and logical characterizations of (decorated) trace distances
were not present in both
\cite{afs:linear-branching-metrics,fl:quantitative-spectrum-journal}.
In \cite{fl:quantitative-spectrum-journal} the authors parameterize
over various trace distances, which we are not, although this is an
interesting direction for future work. By restricting to pointwise
trace distance with discount one, we obtain corresponding notions for
bisimilarity, trace and (Hausdorff) readiness. Note that
\cite{fl:quantitative-spectrum-journal} does not treat failures. Also,
our game in Remark~\ref{rem:death-maiden-game} is different from the
games played in \cite{fl:quantitative-spectrum-journal}, since it is
played locally on the powerset domain.

Coalgebraists familiar with fibrations/indexed categories
\cite{jacobs-fibrations} will recognize the Galois connection between
the fibres of two indexed categories: one modelling the logical
universe, the other behavioural universe on the state space of a
coalgebra. Indeed, Klin in his PhD thesis
\cite{k:coalg-process-equivalence} has explored this adjoint situation
$\alphab\dashv\gammab$ (cf.\thinspace Section~\ref{sec:bisimilarity});
note that behavioural metrics were not treated in
\cite{k:coalg-process-equivalence}. The two approaches diverge in the
treatment of closures especially in the context of decorated
traces. In this paper, closures are always induced as monad from the
adjoint situation and to handle (decorated) trace equivalences we
consider the adjoint situation $\alphat\dashv\gammat$ since the
closure $\cb$ is not sound w.r.t.\thinspace (decorated) trace
equivalence. In Klin's approach, on the contrary, the adjunction
$\alphab\dashv\gammab$ used to characterize bisimilarity is fixed
(even for decorated trace equivalences), but the notion of closure is
left parametric
\cite[Definition~3.31]{k:coalg-process-equivalence}. Our new insight
in the qualitative case is that the closure is naturally induced by
the Galois connection and the characterization of fixpoint
preservation is a fundamental ingredient.

We also point out the differences to the dual adjunction
approach
\cite{k:coalgebraic-logic-beyond-sets,kp:coalgebraic-modal-logics-overview,kr:logics-coinductive-predicates,Pavlovic2006:DualAdj}
to coalgebraic modal logic. There the functor on the ``logic
universe'' characterizes the \emph{syntax} of the logics, while the
semantics is given by a natural transformation. In
\cite{kr:logics-coinductive-predicates} the approach is lifted to
fibrations (in which the equivalence lives). Generalizing our approach
however would lead to a situation where we obtain a fibred adjunction
between two fibrations (for logic and behaviour) on the same
category.

In \cite{kkkrh:expressivity-quantitative-modal-logics} the approach of
\cite{kr:logics-coinductive-predicates} is instantiated to a
quantitative setting, without treating trace metrics. A central notion
there is that of an approximating family, which, translated into our
language, says that $\mathcal{F}\subseteq [0,1]^X$ is an approximating
family iff
$\forall f\in [0,1]^X\colon \alpha(\mathcal{F}) \ge \alpha(\{f\})$
implies $\alpha(\lo(\mathcal{F})) \ge \alpha(\lo(\{f\}))$, with $\lo$
being restricted to applying modalities. If $\lo$ is join-preserving, this is equivalent to
$\lo(c(\mathcal{F})) \subseteq c(\lo(\mathcal{F}))$  (this is a direct consequence of Lemma~\ref{lem:approximating-family}), i.e., it is
strongly related to compatibility. 
\subsection*{Future work}
Taking inspiration from the above, we want
to generalize our work to the level of coalgebras with an approach
based on fibrations, enabling us to treat other branching types, such
as probabilistic branching. Note that the coalgebraic treatment of
establishing Hennessy-Milner theorems in
\cite{Klin:LeastFibredLifting2005,k:coalg-process-equivalence} does
not subsume the behavioural distances covered in this paper, while the
qualitative spectrum has been generalized using graded monads
\cite{mps:trace-semantics-graded-monads}. We plan to develop fixpoint
and logical characterizations of coalgebraic behavioural metrics
\cite{bbkk:coalgebraic-behavioral-metrics,km:bisim-games-logics-metric}, which are generalizations of
both bisimulation pseudo-metric and trace distance.

We are also interested in exploring connections with
\cite{ke:logic-process-algebraic-quotients}, a paper studying the
question which formulas of Hennessy-Milner logic are preserved by
quotienting through a behavioural equivalence.

Another direction is to consider behavioural equivalences (such as
failure trace/ready trace equivalences and variations) that cannot be
captured by our modular approach (i.e., by extending the logic
functions $\lot$/$\loT$ with a \emph{constant} function). We also
want to characterize undirected trace distance directly without the
symmetrization of directed trace distance.

Another line of research is to determine under which circumstances we
can restrict to finitary operations, from which we deviate
occasionally by closing under arbitrary meets or intersections. This
should be feasible by restricting to finitely branching transition
systems. Also, in the metric case, we plan to optimize the syntax by
restricting shifts and modalities to rational numbers. Last, but not
least, it will be interesting to work out the compatibility of $\loB$
for a weaker class of metric transition systems than those which are
finitely branching.

\bibliography{references}

\appendix

\end{document}